\def\thetitle{On Freeze LTL with Ordered Attributes}
\title{\thetitle%
{\footnotetext{%
\scriptsize{This is a free and extended version of an article published in the proceedings of FoSSaCS 2016. 
The work was partially supported by EGIDE/DAAD-Procope (FREQS).}}}%
}
\author{Normann Decker and Daniel Thoma\\[1em]
Institute for Software Engineering and Programming Languages\\
           Universität zu Lübeck, Germany\\
  \url{{decker,thoma}@isp.uni-luebeck.de}}
\date{}
\newcommand{\ignore}[1]{}
\newcommand{\LTL}{\textrm{LTL}\xspace}
\newcommand{\cl}{\mathsf{cl}}
\newcommand{\dom}{\mathsf{dom}}
\newcommand{\xmark}{\text{\ding{55}}}
\newcommand{\cmark}{\text{\ding{51}}}
\newcommand{\add}{\mathsf{add}}
\newcommand{\nxt}{\mathsf{next}}
\newcommand{\stp}{\mathsf{setup}}
\newcommand{\cpy}{\mathsf{copy}}
\newcommand{\st}{\mathsf{stor}}
\newcommand{\aux}{\mathsf{aux}}
\newcommand{\sub}{\mathsf{sub}}
\newcommand{\ol}{\overline}
\newcommand{\cp}{\texttt{cp}\xspace}
\renewcommand{\min}{\texttt{min}\xspace}
\newcommand{\U}{\ensuremath{\operatorname{U}}\xspace}
\newcommand{\R}{\ensuremath{\operatorname{R}}\xspace}
\newcommand{\X}{\ensuremath{\operatorname{X}}\xspace}
\newcommand{\WX}{\ensuremath{\operatorname{\overline{X}}}\xspace}
\newcommand{\G}{\ensuremath{\operatorname{G}}\xspace}
\newcommand{\F}{\ensuremath{\operatorname{F}}\xspace}
\newcommand{\true}{\mathrm{true}}
\newcommand{\false}{\mathrm{false}}
\renewcommand{\subparagraph}[1]{\paragraph{#1}}
\newtheorem{theorem}{Theorem}
\newtheorem{example}[theorem]{Example}
\newtheorem{proposition}[theorem]{Proposition}
\newtheorem{lemma}[theorem]{Lemma}
\newtheorem{corollary}[theorem]{Corollary}
\newtheorem{remark}[theorem]{Remark}
\begin{document}

\maketitle

\begin{abstract}
\noindent 
This paper is concerned with Freeze LTL, a temporal logic on data words with registers.
In a (multi-attributed) data word each position carries a letter from a finite alphabet and 
assigns a data value to a fixed, finite set of attributes.
The satisfiability problem of Freeze LTL is undecidable if more than one register is available or tuples of data values can be stored and compared arbitrarily.
Starting from the decidable one-register fragment we propose an extension that allows for specifying a dependency relation on attributes.
This restricts in a flexible way how collections of attribute values can be stored and compared.
This conceptual dimension %
is orthogonal %
to the number of registers or the available temporal operators.
The extension is strict.
Admitting arbitrary dependency relations, satisfiability becomes undecidable.
Tree-like relations, however, induce a family of decidable fragments
escalating the ordinal-indexed hierarchy of fast-growing complexity classes, a recently introduced framework for non-primitive recursive complexities. %
This results in completeness for the class $𝐅_{ε_0}$.
We employ nested counter systems and show that they relate to the hierarchy in terms of the nesting depth.
\end{abstract}

\section{Introduction}
\label{sec:intro}

A central aspect in modern programming languages and software architectures is dynamic and unbounded creation of entities.
In particular object oriented designs rely on instantiation of objects on demand and flexible multi-threaded execution.
Finite abstractions can hardly reflect these dynamics and therefore infinite models are very valuable for specification and analysis.
This motivates us to study the theoretical framework of words over infinite alphabets.
It allows for abstracting, e.g., the internal structure and state of particular objects or processes while still being able to capture the architectural design in terms of interaction and relations between dynamically instantiated program parts.

These \emph{data words}, as we consider them here, are finite, non-empty sequences $w=(a_1,𝐝_1)(a_2,𝐝_2)…(a_n,𝐝_n)$ where the $i$-th position carries a letter $a_i$ from a finite alphabet $Σ$.
Additionally, for a fixed, finite set of \emph{attributes} $A$ a \emph{data valuation} $𝐝_i: A → Δ$ assigns to each attribute a \emph{data value} from an infinite domain $Δ$ with equality.

\subparagraph{Freeze LTL.}
In formal verification, temporal logics are widely used for formulating behavioural specifications and, regarding data, the concept of storing values in registers for comparison at different points in time is very natural.
This paper is therefore concerned with the logic \emph{Freeze LTL} \cite{DBLP:conf/time/DemriLN05} that extends classical Linear-time Temporal Logic (LTL) by registers and was extensively studied during the past decade.
Since the satisfiability problem of Freeze LTL is undecidable in general, we specifically consider the decidable fragment LTL$^↓_1$ \cite{DBLP:journals/tocl/DemriL09} that is restricted to a single register and future-time modalities.
More precisely, we propose a generalisation of this fragment and study the consequences in terms of decidability and complexity.

Considering specification and modelling, the ability of comparing \emph{tuples} of data values arbitrarily is a valuable feature.
  Unfortunately, this generically renders logics on data words undecidable (cf.\ related work below).
 We therefore extend Freeze LTL by a mechanism for carefully restricting the collections of values that can be compared in terms of a \emph{dependency relation} on attributes.
 In general, this does not suffice to regain decidability of the satisfiability problem.
 Imposing, however, a \emph{hierarchical} dependency structure such that comparison of attribute values is carried out in an ordered fashion, we obtain a strict hierarchy of decidable fragments parameterised by the maximal depth of the attribute hierarchy.

Before we exemplify this concept, let us introduce basic notation.
Let $Σ$ be a finite alphabet and $(A,⊑)$ a finite set of attributes together with a reflexive and transitive relation ${⊑}⊆A×A$, i.e., a \emph{quasi-ordering}, simply denoted $A$ if $⊑$ is understood.
We call our logic LTL$^↓_{qo}$ and define its \emph{syntax} according to the grammar
\[
  φ ::= a｜¬φ｜φ∧φ｜\Xφ｜φ\Uφ｜↓^xφ｜\ ↑^x
\]
for letters $a∈Σ$ and attributes $x∈A$.
We further include common abbreviations such as disjunction, implication or the temporal operators release ($φ\Rψ:= ¬(¬φ\U¬ψ)$), weak next ($\WXφ:=¬\X¬φ$) and globally ($\Gφ := \false\Rφ$).
The restriction of LTL$^↓_{qo}$ to a particular, fixed set of attributes $(A,⊑)$ is denoted LTL$^↓_{(A,⊑)}$ (or simply LTL$^↓_A$).

In the following, we explain the idea of our extension by means of an example. 
The formal semantics is defined in Section~\ref{sec:freeze}.

\begin{example} \label{ex:lock-res} Consider a system with arbitrarily many processes that can lock, unlock and use an arbitrary number of resources.
A data word over the alphabet $Σ=｛\mathsf{lock}, \mathsf{unlock}, \mathsf{use}, \mathsf{halt}｝$ 
can model its behaviour in terms of an interleaving of individual actions and global signals.
The corresponding data valuation can provide specific properties of an action, such as a unique identifier for the involved process and the resource.
Let us use attributes $A=｛\mathsf{pid}, \mathsf{res}｝$ and interpret data values from $Δ$ as IDs.
Notice that this way, we do not assume a bound on the number of involved entities.

Consider now the property that locked resources must not be used by foreign processes and all locks must be released on system halt.
To express this, we need to store both the process and resource ID for every \emph{lock} action and verify that a \emph{use} involving the same resource also involves the same process.
As mentioned earlier, employing a too liberal mechanism to store multiple data values at once breaks the possibility of automatic analysis.
In our case, however, we do not need to refer to processes independently.
It suffices to consider only resources individually and formulate that the particular process that locks a resource is the only one using it before unlocking.
This \emph{one-to-many} correspondence between processes and resources allows us to declare the attribute $\mathsf{pid}$ to be dependent on the attribute $\mathsf{res}$ and formulate the property by the formula 
\[
    \G(\mathsf{lock} → ↓^{\mathsf{pid}} ((\mathsf{use}\ ∧↑^{\mathsf{res}} →\ ↑^{\mathsf{pid}})∧ ¬\mathsf{halt} )\U (\mathsf{unlock}\ ∧ ↑^{\mathsf{pid}})).
\]
The \emph{freeze quantifier} $↓^\mathsf{pid}$ stores the current value assigned to $\mathsf{pid}$ and also implicitly that of all its dependencies, $\mathsf{res}$ in this case.
The \emph{check operator} $↑^x$, for an attribute $x∈A$, then verifies at some position that the current values of $x$ and its dependencies coincide with the information that was stored earlier.
Also, properties independent of the data can be verified within the same context, e.g., $¬\mathsf{halt}$ for preventing a shut down as long as any resource is still locked.
See Figure~\ref{fig:lock-res-words} for example words.

\begin{figure}[t]
\begin{center}
\begin{tikzpicture}
  \matrix[matrix of math nodes] (left) {
        & \mathsf{lock} & \mathsf{lock} & \mathsf{use} & \mathsf{unlock} & \mathsf{unlock} &[2em]
\mathsf{lock} & \mathsf{lock} & \mathsf{use} & \mathsf{unlock} & \mathsf{halt} & \mathsf{unlock} \\
    (\mathsf{res}) & 1 &2& 2& 2& 1 &[2em] 1 & 3& 3& 3& 9 &1\\
    (\mathsf{pid}) & 1 & 1& 1& 1& 1 &     1 & 2& 1& 1& 8 &1\\
  };
\end{tikzpicture}
\end{center}
\caption{The left word satisfies the formula from Example~\ref{ex:lock-res} whereas every strict prefix does not. The right word violates the property because at position 3 $\mathsf{use}$ holds and the value of $\mathsf{res}$ matches the one stored at position 2 but the whole valuation (3,1) differs from (3,2), so the check $↑^{\mathsf{pid}}$ fails. 
Moreover, $\mathsf{halt}$ occurs before (1,1) was observed again in combination with $\mathsf{unlock}$.
\label{fig:lock-res-words}}
\end{figure}

Using this extended storing mechanism, we can select the values of the two attributes ($↓^\mathsf{pid}$) and identify and distinguish positions in a data word where both ($↑^\mathsf{pid}$), a particular one of them ($↑^\mathsf{res}$) or a global signal (e.g., $\mathsf{halt}$) occurs.
In contrast to other decidable fragments of Freeze LTL, we are thus able to store collections of values and can compare individual values across the hierarchy of attributes.
This allows for reasoning on complex interaction of entities, also witnessed by the high, yet decidable, complexity of the logic.
\end{example}

\subparagraph{Outline and results.}
We define the semantics of LTL$^↓_{qo}$ in Section~\ref{sec:freeze} generalising Freeze LTL based on quasi-ordered attribute sets. 
We show that every fragment LTL$^↓_A$ is undecidable unless $A$ has a tree-like structure, formalised as what we call a \emph{tree-quasi-ordering}.

Section~\ref{sec:ncs} is devoted to \emph{nested counter systems (NCS)} and an analysis of their coverability problem.
We determine its non-primitive recursive complexity in terms of \emph{fast-growing complexity classes} \cite{DBLP:journals/corr/Schmitz13}.
These classes $𝐅_α$ are indexed by ordinal numbers $α$ and characterise complexities by fast-growing functions from the extended Grzegorczyk hierarchy (details are provided in Section~\ref{sec:ncs}).
We show that with increasing nesting level coverability in NCS exceeds every class $𝐅_α$ for ordinals $α<ε_0$.
By also providing a matching upper bound, we establish the following.%
\begin{theorem}[NCS]
  \label{thm:ncs-completeness}
  The coverability problem in NCS is $𝐅_{ε_0}$-complete.
\end{theorem}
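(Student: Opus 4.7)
The plan is to establish Theorem~\ref{thm:ncs-completeness} by proving matching lower and upper bounds separately, both of which exploit the inductive structure of NCS: a depth-$d$ system essentially manipulates $d$-fold nested multisets of control locations, and the cost of coverability on such configurations climbs the fast-growing hierarchy in step with $d$. Writing $\alpha_d$ for the ordinal index that arises at depth $d$, one has $\sup_{d<\omega}\alpha_d=\varepsilon_0$, which accounts for the limit ordinal appearing in the statement.

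For the lower bound I would proceed by transfinite induction on $\alpha<\varepsilon_0$ presented in Cantor normal form, constructing for each such $\alpha$ an NCS $\mathcal{N}_\alpha$ whose coverability problem is $\mathbf{F}_\alpha$-hard and whose nesting depth scales with the height of the $\omega$-tower in the Cantor normal form of $\alpha$. The standard recipe is a reduction from the halting problem for Minsky machines whose counters are bounded by the fast-growing function $F_\alpha$: the successor step of the recursion $F_{\beta+1}(n)=F_\beta^{n+1}(n)$ is implemented by a fresh nesting layer that iterates the previous simulation, while the diagonal step $F_\lambda(n)=F_{\lambda_n}(n)$ at a limit ordinal $\lambda$ is realised by using a counter value to select the $n$-th member of a fundamental sequence of $\lambda$ and to launch a sub-simulation of the appropriate depth. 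Because $\varepsilon_0$ is the supremum of the finite $\omega$-towers, the whole family $(\mathcal{N}_\alpha)_{\alpha<\varepsilon_0}$ collectively witnesses hardness for the entire class $\mathbf{F}_{\varepsilon_0}$.

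For the upper bound I would exploit the fact that NCS form well-structured transition systems under the nested-multiset embedding, so the classical Abdulla--Finkel--Schnoebelen backward-saturation algorithm decides coverability. To calibrate its complexity I would invoke Schmitz's length-function framework for controlled bad sequences. The key ingredient is a sharp length-function theorem showing that the maximal length of a polynomially controlled bad sequence of depth-$d$ nested multisets lies in $\mathbf{F}_{\alpha_d}$, obtained inductively: each additional nesting layer corresponds to the multiset extension from Kruskal-style analysis and bumps the index accordingly, so that $\alpha_d\to\varepsilon_0$ as $d$ grows. Combined with a primitive-recursive bound on the sizes of configurations enumerated during backward saturation, this yields membership of coverability in $\mathbf{F}_{\varepsilon_0}$.

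The main obstacle I anticipate is the precise ordinal calibration on both sides. On the hardness side, one must ensure that the simulation of $F_\alpha$ introduces no slack; in particular, the limit-ordinal case must represent fundamental sequences using only the primitives of NCS without spending an extra nesting layer, otherwise the construction would fall short of $\mathbf{F}_\alpha$. On the membership side, the subtle point is proving the length-function theorem tightly: a routine Higman/Kruskal analysis typically loses a full ordinal level, and staying inside $\mathbf{F}_{\varepsilon_0}$ rather than a strictly larger class of the hierarchy requires careful, sharp bookkeeping of the residuals that appear when a new nesting layer is introduced.
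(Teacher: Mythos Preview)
Your high-level scheme matches the paper's, but both halves deviate in ways worth noting.

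For the upper bound the paper does \emph{not} carry out a direct length-function analysis of nested multisets. Instead it reduces $k$-NCS coverability to control-state reachability in priority channel systems with $k$ priorities: a configuration is written into a single channel with the priority of each letter recording its nesting level (highest priority outermost), so that a superseding step in the PCS corresponds exactly to removing an element from some inner multiset, i.e.\ descending in $\preceq$. The known $\mathbf{F}_{\Omega_{2k}}$ bound for PCS then applies immediately. Your WSTS-plus-length-function route is viable in principle, but it would require you to prove precisely the length-function theorem for nested multisets that you flag as the main obstacle; the paper sidesteps that entirely by piggybacking on PCS.

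For the lower bound you correctly locate the difficulty at the limit-ordinal step, but your proposal stops at naming it. The paper's concrete mechanism is this: an ordinal $\alpha<\varepsilon_0$ in CNF is encoded as a nested multiset (each addend $\omega^{\beta}$ becomes an element $\omega(M_\beta)$), which loses the ordering of the addends. Computing $H^{\alpha+\lambda}(n)\to H^{\alpha+\lambda_n}(n)$ therefore requires selecting the \emph{smallest} addend from a multiset. The paper implements two derived NCS operations, a lossy copy \texttt{cp} and a lossy minimum \texttt{min}, each realised by ordinary NCS rules in a depth-first fashion using marker states; repeatedly copying elements while enforcing descent via \texttt{min} isolates a minimal element up to lossy error, and the subsequent \emph{backward} Hardy computation after the Minsky simulation certifies that no error occurred. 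One auxiliary nesting level is spent on this and then eliminated by bounding the innermost exponents by the input size. Without supplying some mechanism of this kind, your plan for the hardness direction is incomplete.
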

We consider the fragment LTL$^↓_{tqo}$ in Section~\ref{sec:freeze-complexity}.
It restricts the available dependency relations to tree-quasi-orderings.
By reducing the satisfiability problem to NCS coverability, we obtain a precise characterisation of the decidability frontier in LTL$^↓_{qo}$.
Moreover, we transfer the lower bounds obtained for NCS to the logic setting.
This leads us to a strict hierarchy of decidable fragments of LTL$^↓_{tqo}$ parameterised by the depth of the attribute orderings and a completeness result for LTL$^↓_{tqo}$.
\begin{theorem}[LTL$^↓_{qo}$]
  \label{thm:freeze-decidability-complexity}
  The satisfiability problem of
  \begin{itemize}
    \item LTL$^↓_A$ is decidable if and only if $A$ is a tree-quasi-ordering.
    \item LTL$^↓_{tqo}$ is $𝐅_{ε_0}$-complete.
  \end{itemize}
\end{theorem}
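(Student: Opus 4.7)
The plan is to prove both bullets simultaneously by using nested counter systems as the technical pivot and leveraging Theorem~\ref{thm:ncs-completeness}.

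For the undecidability direction of the first bullet, suppose $A$ is not a tree-quasi-ordering. Then there exists an attribute $y\in A$ whose set of dependencies fails to be totally ordered, so $y$ has two incomparable dependencies $x_1,x_2$. A single freeze $\downarrow^y$ therefore stores two essentially independent data components that can later be recovered separately by $\uparrow^{x_1}$ and $\uparrow^{x_2}$. This effectively simulates two independent registers, which suffices for a Minsky machine encoding along the classical lines of Demri and Lazi\'c, yielding undecidability of LTL$^\downarrow_A$.

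For decidability and the $\mathbf{F}_{\varepsilon_0}$ upper bound, I would reduce satisfiability of LTL$^\downarrow_{tqo}$ to coverability in NCS. The crucial observation is that in a tree-quasi-ordering the set of dependencies of any attribute is a chain, so the obligations generated by nested freeze operators can always be organised hierarchically. Concretely, one constructs an NCS whose nesting depth matches the depth of the attribute tree: the outermost level carries a tableau for the formula, and whenever a freeze $\downarrow^x$ is encountered during a simulated run, a sub-counter is spawned tracking the consistency of the frozen value along the chain of ancestors of $x$. The tree property is exactly what prevents the need to spawn two incomparable obligations at the same level, which would require a grid rather than a tree of counters. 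Combined with Theorem~\ref{thm:ncs-completeness}, this yields both decidability of LTL$^\downarrow_A$ for tree-quasi-ordered $A$ and the $\mathbf{F}_{\varepsilon_0}$ upper bound for LTL$^\downarrow_{tqo}$.

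For the matching lower bound on LTL$^\downarrow_{tqo}$, the plan is to reduce NCS coverability back to LTL$^\downarrow_{tqo}$ satisfiability. Given an NCS of nesting depth $d$, one uses an attribute tree of depth $d$ so that the tuple of data values along a root-to-leaf chain identifies a particular counter instance at the corresponding nesting level of the NCS. A data word then encodes a run of the NCS, and each transition (increment, reset, or spawning of a sub-counter) is expressed as a local pattern checkable by a formula that freezes at the appropriate attribute and later matches a $\uparrow$. I expect this direction to be the main obstacle: faithfully simulating the non-local transfer and decrement steps of an NCS within a one-register, future-only formula requires an intricate tableau-style encoding that ensures every spawned obligation is eventually discharged exactly once, while respecting the tree discipline on frozen attributes throughout.
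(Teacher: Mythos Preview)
Your overall architecture—reducing satisfiability to and from NCS coverability and invoking Theorem~\ref{thm:ncs-completeness}—is exactly the paper's route, and the lower-bound direction (encoding NCS runs as data words with an attribute chain identifying nested-counter positions) is what the paper does in Theorem~\ref{thm:ncstofreeze}.

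There is, however, a genuine gap in your undecidability sketch. Storing a pair via $\downarrow^y$ and later probing each component via $\uparrow^{x_1}$, $\uparrow^{x_2}$ does \emph{not} simulate two independent registers: you still have a single register, so both components are frozen at the same instant and overwritten together at the next $\downarrow$. The Demri--Lazi\'c Minsky-machine encoding relies precisely on freezing values at \emph{different} times, which this setup cannot do. The paper instead establishes Theorem~\ref{thm:undecidability} by adapting a PCP reduction: the ability to store the unordered pair $\{d_{x_1},d_{x_2}\}$ together with the ability to check each component separately suffices to synchronise two interleaved letter sequences position by position, but the mechanism is a correspondence argument, not counter simulation. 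Your sketch as written would not go through.

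For the upper bound your plan is directionally right but misses two technical ingredients the paper needs. First, an arbitrary tree-quasi-ordering is reduced to a linear one $[k]$ via a frame encoding (Proposition~\ref{prp:linearisation}); this is where the tree shape is actually exploited, not in the NCS construction itself. Second, the resulting NCS has depth $k{+}1$, not $k$: the extra level is required to implement a global ``advance'' step that prepends $\X$ to all stored guarantees simultaneously, which NCS cannot do atomically and must simulate by a lossy depth-first copy. Your picture of ``spawning a sub-counter per freeze encountered along the run'' is not how the construction works; rather, the NCS guesses the model \emph{backward} and maintains a forest of formula sets whose shape abstracts the common-prefix structure of the data valuations under which those formulae are guaranteed to hold (Theorem~\ref{thm:freezetoncs}).
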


\subparagraph{Related work.}

The \emph{freeze} \cite{DBLP:conf/podc/Henzinger90} mechanism was introduced as a natural form of storing and comparing (real-time) data at different positions in time \cite{DBLP:journals/jacm/AlurH94} and since studied extensively in different contexts, e.g., 
\cite{DBLP:journals/jolli/Goranko96,DBLP:journals/logcom/Fitting02,DBLP:conf/time/LisitsaP05}.
In particular linear temporal logic employing the freeze mechanism over domains with only equality, i.e., data words, was considered in \cite{DBLP:conf/time/DemriLN05} and shown highly undecidable ($Σ^1_1$-hard).
Therefore, several decidable fragments were proposed in the literature with complexities ranging from exponential \cite{DBLP:conf/fsttcs/Lazic06} and double-exponential space \cite{DBLP:conf/lics/DemriFP13} to non-primitive recursive complexities \cite{DBLP:journals/tocl/DemriL09}. 
For the one-register fragment LTL$^↓_1$ that we build on here, an $𝐅_ω$ upper bound was given in \cite{DBLP:journals/corr/abs-1202-3957}.
Due to its decidability and expressiveness, it is called in~\cite{DBLP:journals/tocl/DemriL09} a “competitor” for the two-variable first-order logic over data words FO$^2(∼,<,+1)$ studied in \cite{DBLP:journals/tocl/BojanczykDMSS11}. 
There, satisfiability was reduced to and from reachability in Petri nets in double-exponential time and polynomial time, respectively, for which recent results provide an $𝐅_{ω^3}$ upper bound \cite{DBLP:conf/lics/LerouxS15}.

Our main ambition is to incorporate means of storing and comparing collections of data values.
The apparent extension of storing and comparing even only pairs generically renders logics on data words, even those with essential restrictions, undecidable \cite{DBLP:journals/tocl/BojanczykDMSS11,DBLP:conf/fsttcs/KaraSZ10,DBLP:conf/concur/DeckerHLT14}.
This applies in particular to fragments of LTL$^↓_1$~\cite{DBLP:conf/lics/DemriFP13}.

The logic \emph{Nested Data LTL (ND-LTL)} studied in \cite{DBLP:conf/concur/DeckerHLT14} employs a navigation concept based on an \emph{ordered} set of attributes.
It inspired our extension of Freeze LTL but in contrast, data values in ND-LTL are not handled explicitly, resulting in incomparable expressiveness and different notions of natural restrictions.
While ND-LTL also features a freeze-like mechanism, it does not contain an explicit check operator ($↑$).
Instead, data-aware variants of temporal operators such as $\U^=$ express constraints (only) for position where the stored value is present.
For example, an ND-LTL formula
$\G(\mathsf{lock} → ↓^{\mathsf{pid}} (¬\mathsf{halt} \U^= \mathsf{unlock}))$
(in notation of this paper) requires that for every position satisfying $\mathsf{lock}$ there is a future position $\mathsf{unlock}$ with the same data value and that $¬\mathsf{halt}$ holds (at least) on all those position in between that also carry this particular value.
In contrast, 
$\G(\mathsf{lock} → ↓^{\mathsf{pid}} (¬\mathsf{halt} \U (\mathsf{unlock}\,∧↑^{\mathsf{pid}})))$ asserts that at \emph{all} position in between $¬\mathsf{halt}$ holds.
Enforcing such constraints in ND-LTL typically requires an additional level of auxiliary attributes.

The future fragment ND-LTL$^+$ was shown decidable and non-primitive recursive on finite $A$-attributed data words for tree-(partial-)ordered attribute sets $A$.
However, no upper complexity bounds were provided and the developments in this paper significantly raise the lower bounds (cf.\ Section~\ref{sec:conclusion}).
The influence of more general attribute orderings, in particular the precise decidability frontier in that dimension, was not investigated for ND-LTL and its fragments.
Instead, the logic was shown undecidable by exploiting the combination of future- and past-time operators.
Extending LTL$^↓_1$ with past-time operators is also known to lead to undecidability.
ND-LTL$^+$ stays decidable even on infinite words, which is not the case for LTL$^↓_{tqo}$ since satisfiability of LTL$^↓_1$ is already $Π^0_1$-complete \cite{DBLP:journals/tocl/DemriL09}.

\section{Semantics and Undecidability of LTL$^↓_{qo}$}
\label{sec:freeze}

By specifying dependencies between attributes from a set $A$ in terms of a quasi-ordering ${⊑}⊆A×A$ the freeze mechanism can be used to store the values of multiple attributes at once.
The essential intuition for our generalised storing mechanism can well be obtained from the special case of a linearly ordered attribute set $[k]=｛1,…,k｝$ with the natural ordering $≤$ for some natural number $k∈ℕ$.
In fact, many of the technical developments in this paper concerning decidability and complexity are carried out within this setting but for concise presentation we only provide the most general formulation that captures also the undecidable case.

The valuations $𝐝∈Δ^{[k]}$ in a $[k]$-attributed data word are essentially sequences (or vectors) $d_1…d_k$ where the $x$-th position carries the value $d_x=𝐝(x)$ of attribute $x∈[k]$.
Note that Example~\ref{ex:lock-res} matches that setting when renaming $\mathsf{res}$ to 1 and $\mathsf{pid}$ to 2.

In a formula $↓^xφ$ the subformula $φ$ is evaluated in the context of the current value $𝐝(x)$ of the attribute $x∈A$ and the values $𝐝(y)$ of all smaller attributes $y≤x$.
Thus, the \emph{prefix} $d_1…d_x$ of the value sequence at the current position is stored for later comparison.
A check operator $↑^y$ then compares the stored values $d_1…d_x$ to the values $d_1'…d_y'$ at the current position: the check is successful if the latter sequence is a prefix of the former, i.e.\ $y≤x$ and $d_1…d_y = d_1'…d_y'$.

For the general setting of arbitrary (quasi-ordered) dependency relations $(A,⊑)$, we lift the notion of the prefix of length $x$ to the \emph{restriction} of a valuation $𝐝:A→Δ$ to the \emph{downward-closure} $\cl(x)=｛y∈A｜y⊑x｝$ of $x$ in $A$.
This restriction is defined as $𝐝|_x:\cl(x) → Δ$ with $𝐝|_x(y) = 𝐝(y)$ for $y⊑x$. 
We denote the set of all such partial data valuations by $Δ_⊥^A = ｛𝐝: \cl(x) → Δ ｜x∈A｝$.

Partial valuations $𝐝,𝐝'∈Δ_⊥^A$ are compared in analogy to sequences: it must be possible to map one onto the other such that ordering is preserved and all values coincide.
Formally, we define an equivalence relation ${≃}⊆Δ_⊥^A × Δ_⊥^A$ by $𝐝≃𝐝'$ if and only if there is a bijection $h:\dom(𝐝) → \dom(𝐝')$ such that, for all attributes $x∈\dom(𝐝)$ we have $𝐝(x) = 𝐝'(h(x))$ and, for all attributes $y∈\dom(𝐝)$, $x⊑y ⇔ h(x)⊑h(y)$.
Notice that this requires the domains of $𝐝$ and $𝐝'$ to be isomorphic.
In the definition presented next we therefore allow for restricting the stored valuation arbitrarily before it is matched against the current one.
In the linear case this simply means truncating the stored sequence before comparison and intuitively it allows for removing unnecessary information from the context.

\subparagraph{Semantics of LTL$^↓_{qo}$.} 

For a non-empty data word $w=(a_1,𝐝_1)…(a_n,𝐝_n)∈(Σ×Δ^A)^+$, an index $1≤i≤n$ in $w$ and a partial data valuation $𝐝∈Δ_⊥^A$ the semantics of LTL$^↓_A$ formulae is defined inductively by
\[\begin{array}{ll@{\quad}c@{\quad}l}
  (w,i,𝐝) &⊧ a_i   \\
  (w,i,𝐝) &⊧ ¬φ    &:⇔ &(w,i,𝐝)\not⊧φ\\
  (w,i,𝐝) &⊧ φ ∧ ψ &:⇔ &(w,i,𝐝)⊧φ \text{ and } (w,i,𝐝)⊧ψ\\
  (w,i,𝐝) &⊧ \Xφ   &:⇔ &i+1≤n \text{ and } (w,i+1,𝐝)⊧φ\\
  (w,i,𝐝) &⊧ φ\Uψ  &:⇔ &∃_{i≤k≤n}: (w,k,𝐝)⊧ψ \text{ and } ∀_{i≤j<k}: (w,j,𝐝)⊧φ\\
  (w,i,𝐝) &⊧ ↓^xφ  &:⇔ &(w,i,𝐝_i|_x)⊧φ\\
  (w,i,𝐝) &⊧ ↑^x   &:⇔ & ∃_{y∈A} : 𝐝|_y ≃ 𝐝_i|_x.
\end{array}\]
For formulae $φ$ where every check operator $↑^x$ is within the scope of some freeze quantifier $↓^y$ the stored valuation is irrelevant and we write $w⊧φ$ if $(w,1,𝐝)⊧ φ$ for any valuation $𝐝$.

\begin{example}
Consider a set of attributes $\mathsf{A}=｛\mathsf{x_1},\mathsf{x_2},\mathsf{x_3},\mathsf{y_1},\mathsf{y_2}｝$ with $\mathsf{x_1}⊑\mathsf{x_2}⊑\mathsf{x_3}$ and $\mathsf{y_1}⊑\mathsf{y_2}$ (this is an example for a tree-quasi-ordering, see below), the formula $↓^\mathsf{x_3} \X(↑^{\mathsf{y_2}}\U↑^\mathsf{x_3})$ and a data word $w=(a_1,𝐝_1)…(a_n,𝐝_n)$.
The formula reads as: “Store the current values $d_1,d_2,d_3$ of $\mathsf{x_1},\mathsf{x_2},\mathsf{x_3}$, respectively. Move on to the next position. Verify that the stored value $d_1$ appears in $\mathsf{y_1}$ and that $d_2$ appears in $\mathsf{y_2}$ until the values $d_1,d_2,d_3$ appear again in attributes $\mathsf{x_1},\mathsf{x_2},\mathsf{x_3}$, respectively.”

At the first position, the values $d_1=𝐝_1(\mathsf{x_1})$, $d_2=𝐝_1(\mathsf{x_2})$ and $d_3 = 𝐝_1(\mathsf{x_3})$ are stored in terms of the valuation $𝐝=𝐝_1|_\mathsf{x_3}: ｛\mathsf{x_1},\mathsf{x_2},\mathsf{x_3}｝ → Δ$ since $\mathsf{x_1},\mathsf{x_2},\mathsf{x_3}$ depend on $\mathsf{x_3}$.
Assume for the second position $𝐝_2(\mathsf{x_1})≠𝐝_1(\mathsf{x_1})=d_1$.
The formula $↑^\mathsf{x_3}$ is not satisfied at the second position in the context of $𝐝$ since the only attribute $p∈A$
such that $\cl(p)$ is isomorphic to $｛\mathsf{x_1},\mathsf{x_2},\mathsf{x_3}｝$ is $p=\mathsf{x_3}$.
Then, however, any order preserving isomorphism needs to map $\mathsf{x_1}∈\dom(𝐝)$ to $\mathsf{x_1}∈\dom(𝐝_2)$ since $\mathsf{x_1}$ is the minimal element in both domains but $𝐝(\mathsf{x_1})≠𝐝_2(\mathsf{x_1})$.
The only way to not violate the formula is hence that $𝐝_2(\mathsf{y_1})=𝐝_1(\mathsf{x_1})$ and $𝐝_2(\mathsf{y_2})=𝐝_1(\mathsf{x_2})$.
Then, we can choose $p=\mathsf{x_2}$ and have $𝐝|_{\mathsf{x_2}}≃𝐝_2|_{\mathsf{y_2}}$ meaning that $↑^{\mathsf{y_2}}$ is satisfied.
\end{example}

\subparagraph{Undecidability.}

For ${⊑}=｛(x,x)｜x∈A｝$ (identity) we obtain the special case where only single values can be stored and compared. 
If $|A|=1$ we obtain the one-register fragment LTL$^↓_1$.
On the other hand, if $A$ contains three attributes $x,y,z$ such that $x$ and $y$ are incomparable and $x⊑z⊒y$ then storing the value of $z$ also stores the values of $x$ and $y$.
This amounts to storing and comparing the \emph{set} $｛d_x,d_y｝⊂Δ$ of values assigned to $x$ and $y$.
This is not precisely the same as storing the ordered \emph{tuple} $(d_x,d_y)∈Δ×Δ$ but together with the ability of storing and comparing $x$ and $y$ independently it turns out to be just as contagious considering decidability.

In \cite{DBLP:journals/tocl/BojanczykDMSS11} it is shown that the satisfiability problem of two-variable first-order logic over data words with two class relations is undecidable by reduction from \emph{Post's correspondence problem}.
We can adapt this proof and formulate the necessary conditions for a data word to encode a solution using only the attributes $x⊑z⊒y$. 
With ideas from \cite{DBLP:conf/lics/DemriFP13} we can also omit using past-time operators.
Moreover, this result can be generalised to arbitrary quasi-orderings that contain three attributes $x⊑z⊒y$.

The absence of such a constellation is formalised by the notion of a \emph{tree-quasi-ordering} defined as a quasi-ordering where the downward-closure of every element is totally ordered.
This precisely prohibits elements $z$ that depend on two independent elements $x$ and $y$.
The definition describes in a general way a hierarchical, tree-like structure.
Intuitively, a tree-quasi-ordering is (the reflexive and transitive closure of) a forest of strongly connected components.

\begin{restatable}[Undecidability]{theorem}{thmundecidability}\label{thm:undecidability}
  Let $(A,⊑)$ be a quasi-ordered set of attributes that is \emph{not} a tree-quasi-ordering. 
  Then the satisfiability problem of LTL$^↓_A$ is $Σ^0_1$-complete over $A$-attributed data words.
\end{restatable}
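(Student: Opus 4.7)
The plan is to treat the two bounds separately. Membership in $\Sigma^0_1$ is routine: any data word of length $n$ is determined up to $\simeq$-equivalence by its sequence of letters and by the equality pattern of attribute values, which gives only finitely many shapes per length; one can enumerate these shapes and check satisfaction inductively, yielding a semi-decision procedure.

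For $\Sigma^0_1$-hardness, I would reduce from Post's Correspondence Problem (PCP), adapting the two-variable argument of \cite{DBLP:journals/tocl/BojanczykDMSS11} to the present logic and, following \cite{DBLP:conf/lics/DemriFP13}, eliminating past-time operators. Since $(A, \sqsubseteq)$ is not a tree-quasi-ordering, there exist $x, y, z \in A$ with $x \sqsubseteq z$, $y \sqsubseteq z$ and $x, y$ pairwise $\sqsubseteq$-incomparable, which in particular forces $x, y, z$ to lie in three distinct equivalence classes. The decisive consequence is that $\downarrow^z \varphi$ freezes the partial valuation $\mathbf{d}_i|_z$, carrying both $\mathbf{d}_i(x)$ and $\mathbf{d}_i(y)$ as incomparable coordinates, while the checks $\uparrow^x$ and $\uparrow^y$ address the two components independently at later positions. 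This mimics the two class relations used in the original two-variable reduction and gives us effective storage and retrieval of pairs of data values.

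Given a PCP instance $(u_1, v_1), \ldots, (u_m, v_m)$ over some alphabet $\Gamma$, I would encode a tentative solution as a data word in which each position is labelled by a letter of $\Gamma$, the index of the tile it belongs to, and a flag marking it as part of the $u$-strand or the $v$-strand. Data values assigned to attribute $x$ chain the positions of each strand into a linear order, while data values on attribute $y$ link the $k$th letter of the $u$-strand to the $k$th letter of the $v$-strand. A global $\G$ then enforces well-formedness of both chains, agreement of every pair of linked positions on their $\Gamma$-letter, and agreement on the sequence of tile indices. Each condition becomes a combination of $\downarrow^z$ with a future $\uparrow^x$ or $\uparrow^y$; the $z$-freeze is what transports the "other" coordinate across a step so it can be compared further in the future.

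The main obstacle will be confining every comparison to a position and some \emph{strictly later} one, since past-time references are not available. I would resolve this by attaching to every position not only its own chain and link identifiers but also those of its immediate successors along both strands, so that a single $\downarrow^z$ at a position witnesses both its outgoing chain edge and, via a secondary $\uparrow^x$ or $\uparrow^y$, the consistency of the successor with its own successor. With this forward-only encoding, the constructed LTL$^\downarrow_A$ formula is satisfiable if and only if the PCP instance admits a solution, giving $\Sigma^0_1$-hardness and, combined with the upper bound above, $\Sigma^0_1$-completeness.
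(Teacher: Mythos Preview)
Your high-level strategy matches the paper's: both bounds are handled as you describe, and hardness goes via PCP using the witnesses $x\sqsubseteq z\sqsupseteq y$ with $x,y$ incomparable, adapting \cite{DBLP:journals/tocl/BojanczykDMSS11} and using ideas from \cite{DBLP:conf/lics/DemriFP13} to avoid past.

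There is, however, a genuine gap. You work directly in the given $(A,\sqsubseteq)$ and assert that $\uparrow^x$ and $\uparrow^y$ ``address the two components independently''. In general $A$ this is not immediate: $\cl(x)$, $\cl(y)$ and $\cl(z)$ may contain many further attributes, and the semantics of $\uparrow^x$ asks for an order-isomorphism between some restriction of the stored valuation and the \emph{entire} current $\cl(x)$-valuation, not just equality of a single coordinate. The paper therefore proceeds in two steps: it first proves the reduction for the minimal three-element ordering $\{x,y,z\}$ (Lemma~\ref{lem:qo-undecidable}), and then shows separately that the constructed formula is equisatisfiable when interpreted over any larger $A'$ containing such a triple, by padding all extra attributes with a fresh constant in one direction and collapsing $\simeq$-classes to single values in the other. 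Your proposal skips this transfer argument, and without it the claimed behaviour of $\uparrow^x,\uparrow^y$ in arbitrary $A$ is unjustified.

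Two further differences worth flagging. First, the paper's encoding assigns the roles differently: both $x$ and $y$ are used (alternating on odd/even positions) to chain each strand, while the pair test $\uparrow^z$ provides the cross-strand synchronisation; your scheme uses $x$ for chaining and $y$ for cross-linking, which may be workable but is not the paper's and would need its own correctness argument. Second, to dispense with past the paper does not attach successor identifiers as you propose; instead it passes to a \emph{modified} PCP variant (fixed initial tile, $u$-part always strictly shorter in proper prefixes, odd solution length) and adds explicit even/odd markers, which together make a purely forward inductive argument go through (Lemmas~\ref{lem:chaining} and~\ref{lem:sync}). Your successor-attachment idea is plausible but underspecified; in particular you would still need a mechanism guaranteeing that every cross-link points \emph{forward} and is hit exactly once.
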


See Appendix~\ref{sec:apx-freeze-undecidability} for a complete proof. 
As will be discussed in Section~\ref{sec:freeze-complexity}, tree-quasi-orderings represent not only necessary but also sufficient conditions for the logic to be decidable.

\section{Nested Counter Systems}\label{sec:ncs}

\emph{Nested counter systems (NCS)} are a generalisation of counter systems similar to higher-order multi-counter automata as used in~\cite{DBLP:conf/mfcs/BjorklundB07} and nested Petri nets~\cite{DBLP:conf/ershov/LomazovaS99}.
In this section we establish novel complexity results for their coverability problem.
A finite number of counters can equivalently be seen as a multiset $M=｛c_1:n_1,…,c_m:n_m｝$ over a set of counter names $C=｛c_1,…,c_n｝$.
We therefore define NCS in the flavor of~\cite{DBLP:conf/concur/DeckerHLT14} as systems transforming \emph{nested multisets}.

Let $𝔐(A)$ denote, for any set $A$, the set of all \emph{finite multisets} of elements of $A$. 
For $k∈ℕ$ we write $[k]$ to denote the set $｛1,…,k｝⊂ℕ$ with the natural linear ordering $≤$.
A \emph{$k$-nested counter system ($k$-NCS)} is a tuple $𝓝 = (Q,δ)$ comprised of a finite set $Q$ of \emph{states} and a set $δ⊆⋃_{i,j∈[k+1]}(Q^i×Q^j)$ of \emph{transition rules}.
For $0≤i≤k$ the set $𝓒_i$ of \emph{configurations of level $i$} is inductively defined by $𝓒_k = Q$ and $𝓒_{i-1} = Q × 𝔐(𝓒_i)$.
The set of \emph{configurations} of $𝓝$ is then $𝓒_𝓝 = 𝓒_0$.
Every element of $C_𝓝$ can, more conveniently, be presented as a term constructed over unary function symbols $Q$, constants $Q$ and a binary operator $+$ that is associative and commutative.
For example, the configuration 
$(q_0, ｛(q_1,∅):1, 
         (q_1, ｛(q_2,∅): 2｝):2,
         (q_1, ｛(q_2,∅): 2, 
                 (q_3,｛(q_4,∅):1｝):1
                ｝):1
        ｝
)$
can be represented by the term
$q_0(
    q_1 
  + q_1(q_2 + q_2) 
  + q_1(q_2 + q_2) 
  + q_1(q_2 + q_2 + q_3(q_4))
)$.
The operational semantics of $𝓝$ is now defined in terms of the \emph{transition relation} ${→}⊆𝓒_𝓝×𝓒_𝓝$ on configurations given by rewrite rules.
For $((q_0,…,q_i), (q_0',…,q_j')) ∈δ$ and $i,j<k$ we let  
\begin{align*}
q_0(X_1+q_1(… q_i(X_{i+1})…)) &→ q'_0(X_1+q_1'(… q'_{j}(X_{j+1})…))%
\end{align*}
for any $X_h∈𝔐(C_h)$ where $1≤h≤k$ and $X_ℓ=∅$ for $i+2≤ℓ≤j+1$.
For example, a rule $((q_0), (q_0'))$ changes the state $q_0$ in the example configuration above to $q_0'$. 
A rule $((q_0,q_1),(q_0,q_1,q_2'))$ adds a state $q_2'$ non-deterministically as a direct child of one of the states $q_1$
resulting in one of the three configurations
\begin{align*}
&q_0(
    q_1(q_2') 
  + q_1(q_2 + q_2) 
  + q_1(q_2 + q_2) 
  + q_1(q_2 + q_2 + q_3(q_4))
),\\
&q_0(
    q_1 
  + q_1(q_2 + q_2 + q_2') 
  + q_1(q_2 + q_2) 
  + q_1(q_2 + q_2 + q_3(q_4))
) \text{ and } \\
&q_0(
    q_1 
  + q_1(q_2 + q_2) 
  + q_1(q_2 + q_2) 
  + q_1(q_2 + q_2 + q_3(q_4) + q_2')
).
\end{align*}
Moreover, a rule $((q_0,q_1,q_3),(q_0))$ would remove specifically and completely the sub-configuration $q_1(q_2 + q_2 + q_3(q_4))$ since it does not match any other one.

The remaining cases for transitions, where (\ref{eqn:j-eq-k}) $i<k=j$, (\ref{eqn:j-le-k-i-eq-k}) $i=k>j$ and (\ref{eqn:j-eq-k-i-eq-k}) $i=k=j$, are defined as expected by rules
\begin{align}
q_0(X_1+q_1(… q_i(X_{i+1})…)) &→ q'_0(X_1+q_1'(… q'_{k-1}(X_{k}+q_k')…))\label{eqn:j-eq-k}\\
q_0(X_1+q_1(… q_{k-1}(X_k+q_k)…)) &→ q'_0(X_1+q_1'(… q'_{j}(X_{j+1})…))\label{eqn:j-le-k-i-eq-k}\\
q_0(X_1+q_1(… q_{k-1}(X_k+q_k)…)) &→ q'_0(X_1+q_1'(… q'_{k-1}(X_{k}+q_k')…))\label{eqn:j-eq-k-i-eq-k}
\end{align}
respectively, where for (\ref{eqn:j-eq-k}) we have $X_{i+2}=…=X_k=∅$.
Note that these cases are exhaustive since the nesting depth of terms representing configurations from $𝓒_𝓝$ is at most $k$.
As usual we denote by $→^*$ the reflexive and transitive closure of $→$.
By $⪯$ we denote the nested multiset ordering, i.e.\ $M' ⪯ M$ iff $M'$ can be obtained by removing elements (or nested multisets) from $M$.
Given two configurations $C,C'∈𝓒_𝓝$ the \emph{coverability problem} asks for the existence of a configuration $C''∈𝓒_𝓝$ with $C''⪰C'$ and $C→^*C''$.

To establish our complexity results on NCS we require some notions on ordinal numbers, ordinal recursive functions and respective complexity classes.
We represent ordinals using the \emph{Cantor normal form (CNF)}.
An ordinal $α<ε_0$ is represented in CNF as a term $α = ω^{α_1} + … + ω^{α_k}$ over the symbol $ω$ and the associative binary operator $+$ where $α > α_1 ≥ … ≥ α_k$.
Furthermore, we denote \emph{limit ordinals} by $λ$.
These are ordinals such that $α+1<λ$ for every $α<λ$. 
We associate them with a \emph{fundamental sequence} $(λ_n)_n$ with supremum $λ$ defined by
\[
  (α + ω^{β+1})_n := α + \overbrace{ω^β + … + ω^β}^n \qquad \text{ and } \qquad (α+ω^{λ'})_n := α+ω^{λ'_n}
\]
for ordinals $β$ and limit ordinals $λ'$.
Then, $ε_0$ is the smallest ordinal $α$ such that $α = ω^α$.
We denote the $n$-th exponentiation of $ω$ as $Ω_n$, i.e. $Ω_1 := ω$ and $Ω_{n+1} := ω^{Ω_n}$.
Consequently, $(Ω_n)_m < Ω_n$ is the $m$-th element of the fundamental sequence of $Ω_n$.
Given a monotone and expansive\footnote{A function $f: A → A$ over an ordering $(A,≤)$ is monotone if $a≤a' ⇒ f(a)≤f(a')$ and expansive if $a≤f(a)$ for all $a,a'∈A$.} function $h: ℕ→ℕ$, a \emph{Hardy hierarchy} is an ordinal-indexed family of functions $h^α: ℕ → ℕ$ defined by
$h^0(n) := n$, $h^{α+1}(n) := h^α(h(n))$ and $h^{λ}(n) := h^{λ_n}(n)$.
Choosing $h$ as the incrementing function $H(n) := n + 1$, the \emph{fast growing hierarchy} is the family of functions $F_α(n)$ with $F_α(n) := H^{ω^α}(n).$

The hierarchy of \emph{fast growing complexity classes} $𝐅_α$ for ordinals $α$ is defined in terms of the fast-growing functions $F_α$. 
We refer the reader to \cite{DBLP:journals/corr/Schmitz13} for details
and only remark that $𝐅_{<ω}$ is the class of primitive recursive problems and problems in $𝐅_ω, 𝐅_{ω^ω}$ are solvable with resources bound by Ackermannian and Hyper-Ackermannian functions, respectively.
The fact most relevant for our classification is that a basic $𝐅_α$-complete problem is the termination problem of Minsky machines $𝓜$ where the sum of the counters is bounded by $F_α(|𝓜|)$ \cite{DBLP:journals/corr/Schmitz13}.

\subparagraph{Upper bound.}
To obtain an upper bound for the coverability problem in $k$-NCS we reduce it to that in \emph{priority channel systems (PCS)}~\cite{DBLP:journals/corr/abs-1301-5500}. 
PCS are comprised of a finite control and a fixed number of channels, each storing a string to which a letter can be appended (write) and from which the first letter can be read and removed (read).
Every letter carries a priority and can be lost at any time and any position in a channel if its successor in the channel carries a higher or equal priority.
PCS can easily simulate NCS by storing and manipulating an NCS configuration in a channel where a state $q$ at level $i>0$ in the NCS configuration is encoded by a letter $(q,k-i)$ with priority $k-i$.
E.g., the 3-NCS configuration 
$ q_0(q_1 
  + q_1(q_2 + q_2) 
  + q_1(q_2 + q_2 + q_3(q_4)))
$
can be encoded as a channel of the form
$(q_1,2)(q_1,2) (q_2,1) (q_2,1) (q_1,2)  (q_2,1) (q_2,1) (q_3,1) (q_4,0)$
while $q_0$ is encoded in the finite control.

Taking the highest priority for the outermost level ensures that the lossiness of PCS corresponds to descending with respect to $⪯$ for the encoded NCS configuration.
Thus the coverability problem in NCS directly translates to that in PCS.
The coverability (control-state reachability) problem in PCS with one channel and $k$ priorities lies in the class $𝐅_{Ω_{2k}}$~\cite{DBLP:journals/corr/abs-1301-5500} and we thus obtain an upper bound for NCS coverability.
See Appendix~\ref{apx:ncs-upper-bound} for further details.

\begin{restatable}{proposition}{prpncstopcs}\label{prp:ncstopcs}
Coverability in $k$-NCS is in $𝐅_{Ω_{2k}}$.
\end{restatable}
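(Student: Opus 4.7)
The plan is a direct reduction from coverability in $k$-NCS to coverability (control-state reachability) in a priority channel system with one channel and $k$ priorities, after which the $\mathbf{F}_{\Omega_{2k}}$ upper bound from \cite{DBLP:journals/corr/abs-1301-5500} applies immediately. First I would formalise the encoding sketched in the text: keep the root state of an NCS configuration in the PCS finite control, and write the remaining term in depth-first order into the channel, encoding each state $q$ occurring at level $i\ge 1$ as a letter $(q,k-i)$. The tree shape is recoverable from the resulting word, since the children of a node of priority $p$ are exactly the letters of priority $p-1$ that follow it without an intervening letter of priority $\ge p$.

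Next I would verify that $\preceq$ coincides with the lossy reduction on encoded channels. A lossy move removes a single letter whose successor has priority $\ge$ its own, which on the encoding corresponds to deleting the root of a subtree whose own children have already been deleted. Iterating from the leaves upward therefore realises the removal of any nested sub-multiset, and conversely every lossy reduction produces a channel encoding a $\preceq$-smaller NCS configuration.

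The central step is to simulate a transition rule $((q_0,\ldots,q_i),(q_0',\ldots,q_j'))$. Updating $q_0$ is trivial in the PCS control. To locate a witness for the path $q_1,\ldots,q_i$ the PCS rotates the channel, reading a letter from the front and writing it back at the rear while the finite control tracks the depth currently being inspected and the next target state. Once a match is committed, the new sequence $q_1'(\ldots q_j'(\emptyset)\ldots)$ is appended at the rear at the corresponding priorities; the three corner cases~(\ref{eqn:j-eq-k})--(\ref{eqn:j-eq-k-i-eq-k}) are handled by additional branches that either emit a letter for a new bottom-level state or consume the old $q_k$ letter that was read during rotation.

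The main obstacle is to argue that rotation and rewriting do not introduce modifications that are invalid at the NCS level, in particular that siblings of the matched path are not reordered or lost in ways that a genuine NCS rule would forbid. Here the layered priorities do the work: a letter of priority $p$ being rotated past the head cannot be lost whenever its successor has strictly lower priority, which protects the boundary between sibling subtrees; any residual lossy reordering among letters of priority strictly below the matched level only yields encodings of $\preceq$-smaller configurations, which is sound for coverability because the target set is upward closed. Composing these three steps gives a polynomial reduction, and the upper bound transfers.
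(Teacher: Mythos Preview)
Your proposal is correct and follows essentially the same approach as the paper: encode the NCS configuration in a single PCS channel with $k$ priorities (root state in the finite control, level-$i$ states as letters of priority $k-i$), simulate each NCS step by one full rotation through the channel while guessing and applying the rule, and observe that PCS superseding steps correspond precisely to $\preceq$-descent so that coverability is preserved. The only implementation detail you omit is the paper's end-of-channel marker $(\$,k{-}1)$, used to detect when a rotation is complete and to allow the last letter to be superseded; this is minor and does not affect the argument.
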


\subparagraph{Lower bound.}
We can reduce, for any $k>1$, the halting problem of $H^{(Ω_k)_l}$-bounded Minsky machines to coverability in $k$-NCS with the number of states bounded by $l+c$, for some constant $c$.
This yields the following characterisation (recall that $H^{(Ω_{k+1})_l}=F_{(Ω_k)_l}$).
\begin{theorem}\label{thm:ncshardness}
Coverability in $(k+1)$-NCS is $𝐅_{Ω_{k}}$-hard.
\end{theorem}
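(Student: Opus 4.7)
The plan is to reduce termination of $F_{(Ω_k)_l}$-bounded Minsky machines, which is $𝐅_{(Ω_k)_l}$-hard for every $l$ by the characterisation of $𝐅_α$ in terms of bounded Minsky machines cited just before the statement, to coverability in $(k+1)$-NCS. Letting $l$ range freely over $ℕ$ then lifts this to hardness for the whole class $𝐅_{Ω_k}$. Concretely, given a Minsky machine $𝓜$, I would build a $(k+1)$-NCS $𝓝_𝓜$ of size $O(l)+|𝓜|$ together with distinguished initial and target configurations such that $𝓜$ terminates within its budget $F_{(Ω_k)_l}(|𝓜|)$ iff the target is coverable from the initial configuration.

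The construction interleaves two phases inside a single $(k+1)$-NCS. A \emph{Hardy unfolding} phase simulates the evaluation $H^{(Ω_{k+1})_l}(|𝓜|)$, exploiting the identity $H^{(Ω_{k+1})_l}=F_{(Ω_k)_l}$. The key observation is that, because $Ω_{k+1}=ω^{Ω_k}$, every ordinal below $Ω_{k+1}$ has a Cantor normal form that is naturally represented as a nested multiset of depth at most $k+1$: a subterm $ω^β$ at nesting depth $i$ is represented by the encoding of $β$, which itself is a $(k-i)$-level nested multiset. The NCS rules then implement the Hardy recursion on this encoding: a successor step decrements the local ordinal and emits a single level-$k$ ``budget'' token, while a limit step selects a subterm $ω^{β+1}$, removes one of its occurrences, and installs $n$ fresh copies of $ω^β$ one level deeper, which is exactly the form of a rule where a short vertical path is replaced by a longer one with new empty contexts. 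When the ordinal is fully consumed, the number of accumulated budget tokens is precisely $F_{(Ω_k)_l}(|𝓜|)$. A second, \emph{Minsky simulation} phase encodes the counters of $𝓜$ as distinguished multisets of level-$k$ tokens, every increment paying one budget token and every decrement removing one counter token.

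The natural verification proceeds by induction on $k$: the base case coincides with the classical reduction from $F_ω$-bounded Minsky machines into ordinary (well-structured) counter systems, while the inductive step uses the outer level of $(k+1)$-NCS to iterate a $k$-NCS construction once per summand produced by decomposing $ω^{Ω_k}$ at the root, thereby boosting the available budget from $F_{(Ω_{k-1})_l}$ to $F_{(Ω_k)_l}$. The principal obstacle is ensuring \emph{faithfulness} in both directions. For the unfolding one must preclude the adversary stopping early with too few budget tokens (which would spuriously make short Minsky computations witness halting); this is handled by including in the target configuration a distinguished ``exhausted'' marker that can only be emitted after every ordinal subterm has been reduced away. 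For the Minsky simulation, exactness of counter operations, in particular reliable zero-tests, must be built into the rules; this is feasible because NCS transitions rewrite matched patterns exactly and coverability only permits surplus, never invention, in the final configuration. The technical heart of the argument is then to design the rule set so that every $𝓝_𝓜$-run covering the target projects to a legitimate Hardy-computation followed by a legitimate Minsky run, and conversely.
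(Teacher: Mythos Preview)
Your overall plan---encode ordinals below $Ω_{k+1}$ as depth-$(k+1)$ nested multisets, run a Hardy computation to produce a budget, then simulate the Minsky machine---matches the paper. But your faithfulness argument has a real gap.

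You claim that reliable zero-tests ``must be built into the rules'' and that this ``is feasible because NCS transitions rewrite matched patterns exactly.'' It is not feasible: every NCS rule has the form $q_0(X_1+q_1(\dots))\to q_0'(X_1+\dots)$ where the context multisets $X_h$ are existentially quantified and absorb arbitrary content. There is no way to fire a rule only when some multiset is empty. If NCS could test for zero they would simulate Minsky machines exactly and coverability would be undecidable, contradicting Proposition~\ref{prp:ncstopcs}. The same objection kills your ``exhausted marker'' for the Hardy phase: emitting it only after every ordinal subterm has been reduced away is again an emptiness test.

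The paper resolves this with the standard forward--backward trick, which you do not mention. The NCS first runs the Hardy computation \emph{forward} from $C_{α,s}$ to some $C_{0,n}$; because zero tests are unavailable this may be lossy, so one only gets $n\le H^α(s)$ (Lemma~\ref{lem:ncshardness1}). It then simulates $𝓜$ with budget $n$, implementing zero tests by reset (delete the counter and recreate it empty), which again is only correct up to loss. Finally it runs the Hardy computation \emph{backward} toward a primed copy $(C_{α,s})'$ of the initial configuration, and this primed configuration is the coverability target. Covering it forces $n'\ge s$ and $α'\ge α$ at the end, which together with the monotonicity lemma squeezes out all the losses. A secondary issue you skip is that the multiset encoding of $α$ forgets the ordering of CNF summands, so the limit-ordinal step must \emph{select a minimal element}; the paper spends most of its effort on auxiliary \cp/\min gadgets to do this with only lossy error.
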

The idea is to construct a $k$-NCS $𝓝=(Q,δ)$ that can simulate the evaluation of the Hardy function $H^α(n)$ for $α≤(Ω_k)_l$ in forward as well as backward direction.
It can then compute a \emph{budget} that is used for simulating the Minsky machine.
Lower bounds for various models were obtained using this scheme for Turing machines \cite{DBLP:conf/lics/ChambartS08,DBLP:journals/corr/abs-1301-5500} or Minsky machines \cite{DBLP:conf/mfcs/Schnoebelen10,RosaVelardo14}.

The following construction uses $k+1$ levels of which one can be eliminated later.
We encode the ordinal parameter $α$ of $H^α(n)$ and its argument $n∈ℕ$ (unary) into a configuration
\vspace{-0.3cm}
\[
  C_{α, n} := \mathsf{main}(\mathsf{s}(M_α) + \mathsf{c}(\overbrace{1 + … + 1}^n))
\]
using control-states $\mathsf{main},\mathsf{s},\mathsf{c},ω∈Q$ and configurations $M_α$ defined by $M_{0} := ∅$ and $M_{ω^α + β} := ω(M_α) + M_β$.
For example, an ordinal $α=ω^ω+ω^2+ω^2+1$ is encoded by 
\begin{align*}
  M_α =｛(ω,｛(ω,｛(ω,∅):1｝):1｝):1, (ω,｛(ω,∅):2｝):2, (ω,∅):1｝
\end{align*}
Note that we use shorthands for readability, e.g., $ω^ω$ stands for $ω^{ω^1}$ where $1$ is again short for the ordinal $ω^0$.
The construction has to fulfil the following two properties.
As NCS do not feature a zero test exact simulation cannot be enforced but errors can be restricted to be ``lossy''.
\begin{lemma}
\label{lem:ncshardness1}
For all configurations $C_{α,n} →^* C_{α',n'}$ we have $H^α(n) ≥ H^{α'}(n')$.
\end{lemma}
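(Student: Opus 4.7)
The plan is to exhibit an ordinal-valued potential $Φ$ on all reachable configurations of $𝓝$ that agrees with $H^α(n)$ on configurations of the form $C_{α,n}$ and is non-increasing along $→$. Given such a $Φ$, chaining the one-step inequality along the derivation immediately yields $H^α(n) = Φ(C_{α,n}) ≥ Φ(C_{α',n'}) = H^{α'}(n')$, which is the claim. To define $Φ$, I first read off from any sub-configuration built only with the unary symbol $ω$ and the binary operator $+$ its Cantor normal form ordinal $\operatorname{ord}(M)$, exactly as $M_α$ encodes $α$; the total multiplicity of $1$'s under a $\mathsf{c}$-node provides a counter value $k(M)$. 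On a clean configuration $C_{α,n}$ I set $Φ(C) := H^α(n)$; on intermediate configurations arising within the simulation of a single Hardy step I assign a combined potential that sums the contribution of each ``active'' ordinal component, appealing to the defining identities $H^{α+1}(n) = H^α(n+1)$ and $H^{α + ω^{β+1}}(n) = H^{α + ω^β + … + ω^β}(n)$ (with $n$ summands).

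The inductive step proceeds by case analysis on the rule applied in a transition $C → C'$. The rules of $𝓝$ fall into two groups. The \emph{faithful} rules mimic one application of the Hardy recursion: they either consume a leading $ω$-leaf while incrementing the counter (successor case), or expand a limit subterm $ω^{β+1}$ into copies of $ω^β$ whose number is drawn from the counter subtree (limit case). For each such rule the two identities above give $Φ(C) = Φ(C')$. The \emph{lossy} rules of shape $((q_0,…,q_i),(q_0))$ discard a sub-configuration outright; by monotonicity of $H^α(n)$ in both $α$, with respect to the canonical ordering of ordinals below $ε_0$ induced by the fundamental sequences, and in $n$, every such deletion, together with every rule that abandons an intended update partway, yields $Φ(C) ≥ Φ(C')$.

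The main obstacle will be the treatment of the intermediate configurations during a limit expansion: the system necessarily passes through configurations where part of $M_α$ has already been rewritten while part has not, and where the counter has been only partially transferred to drive the expansion. The potential $Φ$ must account for these partial states in a way that remains non-increasing under every admissible rule, including rules that ``abort'' a step prematurely or interleave an active expansion with a lossy deletion elsewhere in the term. Once the bookkeeping is set up---essentially tagging each $ω$-subtree with a role (pending, active, or residual) and summing the three contributions via the fundamental-sequence identity---the inductive step reduces to a routine rule-by-rule check, and Lemma~\ref{lem:ncshardness1} follows.
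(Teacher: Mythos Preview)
The paper does not prove Lemma~\ref{lem:ncshardness1} as a standalone result; it is posed as a design constraint that the subsequent construction is engineered to meet, with correctness argued informally as the rules for Cases~(1)--(4) and the auxiliary \cp/\min operations are introduced (Appendix~\ref{apx:ncshardness}). Your potential-function scheme is the natural way to make that implicit argument rigorous, so in spirit you are doing what the paper leaves unwritten.

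There are, however, two genuine gaps. First, the appeal to ``monotonicity of $H^α(n)$ in $α$'' is the entire crux, and it fails for the standard ordinal ordering (for instance $H^ω(1)=2<3=H^2(1)$ although $2<ω$). You hedge with ``the canonical ordering induced by the fundamental sequences,'' which points the right way, but you never establish that discarding an arbitrary nested sub-configuration of $M_α$ yields an ordinal below $α$ in \emph{that} ordering; this is the substantive fact underpinning the whole argument and cannot be waved through as routine. Second, your taxonomy of rules does not match the actual system: the NCS in Appendix~\ref{apx:ncshardness} has no dedicated deletion rules of shape $((q_0,\dots,q_i),(q_0))$. Lossiness arises inside the depth-first implementations of \cp\ and \min\ (premature backtracking, mismatched branches) and in the non-deterministic choice of a purportedly minimal addend, and you also omit the backward Cases~(2) and~(4) entirely. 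Your potential must therefore be defined on all the intermediate $\mathsf{cp}_*$, $\mathsf{min}_*$, $\mathsf{R3}_*$, $\mathsf{R4}_*$ configurations, and the ``routine rule-by-rule check'' you defer is precisely where all the work lives. The plan is sound; as written it locates the hard step and then stops short of it.
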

The construction will, however, admit at least one run maintaining exact values.
\begin{lemma}
\label{lem:ncshardness2}
If
$H^α(n) = H^{α'}(n')$
then there is a run
$C_{α,n} →^* C_{α',n'}$.
\end{lemma}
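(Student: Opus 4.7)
The approach is to equip the $(k+1)$-NCS $\mathcal{N}$ with transition rules that faithfully simulate each elementary Hardy step in both forward and backward direction, and then exploit this to route through the common value $m = H^\alpha(n) = H^{\alpha'}(n')$. Concretely, I would exhibit rules in $\delta$ that admit derivations $C_{\alpha+1,n} \to^+ C_{\alpha,n+1}$ and $C_{\lambda,n} \to^+ C_{\lambda_n, n}$ for the successor and limit cases, together with their reverses $C_{\alpha,n+1} \to^+ C_{\alpha+1,n}$ and $C_{\lambda_n,n} \to^+ C_{\lambda,n}$. Once these micro-simulations are in place, a straightforward induction on the recursion depth of $H^\alpha(n)$ yields $C_{\alpha,n} \to^* C_{0,H^\alpha(n)}$, and the backward rules dually give $C_{0,H^{\alpha'}(n')} \to^* C_{\alpha',n'}$; composition through $C_{0,m}$ finishes the proof.

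The successor step is the easy part: since $M_{\alpha+1} = \omega(\emptyset) + M_\alpha$, the configuration $C_{\alpha+1,n}$ always exposes a leaf $\omega$-node directly below $\mathsf{s}$. A short sequence of rules, stepping the control states under $\mathsf{main}$, $\mathsf{s}$ and $\mathsf{c}$, removes one such leaf and appends a fresh $1$ under $\mathsf{c}$, which produces exactly $C_{\alpha,n+1}$. The backward successor step is obtained by swapping input and output tuples of the corresponding transition rules.

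The limit step is the main obstacle. For $\lambda = \gamma + \omega^{\beta+1}$ the encoding contains a subterm $\omega(\omega(M_\beta))$, and the task is to replace it by $n$ siblings of the form $\omega(M_\beta)$, matching the fundamental sequence $\lambda_n = \gamma + \omega^\beta + \dots + \omega^\beta$. I would implement this as a loop that, in each iteration, pairs one $1$ from the multiset under $\mathsf{c}$ with a rule that descends into the $\omega(\omega(M_\beta))$ subterm and emits a fresh sibling $\omega(M_\beta)$; a final clean-up rule discards the outer $\omega$-wrapper when the loop is closed. The additional $(k{+}1)$-st nesting level is precisely what allows this synchronisation between the counter multiset at depth $1$ and the ordinal subterm at depth up to $k$. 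Since the counter holds exactly $n$ ones, there exists a schedule performing exactly $n$ iterations and yielding $C_{\lambda_n,n}$ on the nose; any other schedule can only lose $1$-tokens or $\omega$-subterms prematurely, which is consistent with Lemma~\ref{lem:ncshardness1} and is the reason only the existence of one exact run is claimed. The remaining case, $\lambda = \gamma + \omega^{\lambda'}$ with $\lambda'$ itself a limit, recurses into $M_{\lambda'}$ in the same spirit, and backward rules are again obtained by swapping input and output tuples.

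Putting it together, given $H^\alpha(n) = H^{\alpha'}(n') = m$, the forward simulations compose to a run $C_{\alpha,n} \to^* C_{0,m}$ by induction: base case $\alpha = 0$ is trivial, the successor case uses $C_{\alpha+1,n} \to^+ C_{\alpha,n+1} \to^* C_{0,H^\alpha(n{+}1)} = C_{0,H^{\alpha+1}(n)}$, and the limit case uses $C_{\lambda,n} \to^+ C_{\lambda_n,n} \to^* C_{0, H^{\lambda_n}(n)} = C_{0,H^\lambda(n)}$. Dually, the backward rules provide $C_{0,m} \to^* C_{\alpha',n'}$, and concatenation gives the required run $C_{\alpha,n} \to^* C_{\alpha',n'}$.
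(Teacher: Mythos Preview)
Your overall strategy—installing rules for each elementary Hardy step in both directions and composing them, routing through $C_{0,m}$—is precisely the paper's approach, and the successor cases are indeed as simple as you describe.

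The limit case, however, has genuine gaps. The phrase ``emits a fresh sibling $\omega(M_\beta)$'' hides the central difficulty: $M_\beta$ is an arbitrary nested multiset and NCS offer no primitive for duplicating one. The paper must introduce an explicit $\mathsf{cp}$ operation and implement it by a depth-first traversal with marker states; this is the technical core of the construction, not a detail to be waved through. Your loop also consumes the $1$-tokens under $\mathsf{c}$ without restoring them, so as written it reaches $C_{\lambda_n,0}$ rather than $C_{\lambda_n,n}$; an auxiliary counter is required. Most importantly, the claim that backward limit rules are ``obtained by swapping input and output tuples'' is false: reversing an $n$-fold duplication means collapsing $n$ addends back into one, for which the paper develops a separate $\mathsf{min}$ operation and an entirely distinct rule block. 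Swapping the individual tuples of the $\mathsf{cp}$ implementation does not yield a coherent operation, and since the very same NCS must simultaneously satisfy Lemma~\ref{lem:ncshardness1}, you cannot substitute a looser backward step that merely admits one exact run while allowing arbitrary gains. Your account of the extra nesting level is also off: it is not there for ``synchronisation'' with the counter but to allow deleting the entire source multiset once the smallest addend has been selected and copied.
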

The main challenge is simulating a computation step from a limit ordinal to an element of its fundamental sequence, i.e., from $C_{α + λ,n}$ to $C_{α+λ_n,n}$ and conversely.
Encoding the ordinal parameter using multisets loses the ordering of the addends of the respective CNF terms.
Thus, instead of taking the last element of the CNF term we have to select the smallest element, with respect to $⪯$, of the corresponding multiset.
To achieve that, we extend NCS by two operations \cp and \min.
Given some configuration $C = q_1(q_2(M))∈𝓒_{𝓝}$ the operation $(q_1, q_2) \cp (q_1', q_2')$ copies $M$ resulting in $C' = q_1'(q_2'(M_1) + q_2'(M_2))$ with $M_1, M_2 ⪯ M$.
Conversely, given the configuration $C'$ the operation $(q_1', q_2') \min (q_1, q_2)$ results in $C$ with $M ⪯ M_1, M_2$.

Both operations can be implemented in a depth first search fashion using the standard NCS operations.
Based on them the selection of a smallest element from a multiset can be simulated:
all elements are copied (non-deterministically) one by one to an auxiliary set while enforcing a descending order. 
Applying the \min operation in every step ensures that we either proceed indeed in descending order or make a ``lossy'' error.
We guess, in each step, whether the smallest element is reached and in that case delete the source multiset.
Thereby it is ensured, that the smallest element has been selected or, again, a ``lossy'' error occurs such that the selected element is now the smallest one.
The additional level in the encoding of $C_{α,n}$ enables us to perform this deletion step.

A similar idea to select a smallest ordinal from a multiset is used in \cite{RosaVelardo14}.
However, we need to handle nested structures of variable size correctly whereas in this work the considered ordinals are below $Ω_3$.
They are represented by a multiset of vectors of fixed length where the vectors can be compared and modified directly in order enforce the choice of a minimal one.

We now construct an NCS simulating an $H^α(s)$-bounded Minsky machine $𝓜$ of size $s:=|𝓜|$ analogously to the constructions in \cite{DBLP:conf/mfcs/Schnoebelen10,DBLP:conf/lics/ChambartS08,DBLP:journals/corr/abs-1301-5500}.
It starts in a configuration $C_{α, s}$ to evaluate $H^{α}(s)$.
When it reaches $C_{0, n}$ for some $n≤H^α(s)$ it switches its control state and starts to simulate $𝓜$ using $n$ as a budget for the sum of the two simulated counters.
Zero tests can then be simulated by resets (deleting and creating multisets) causing a ``lossy'' error in case of an actually non-zero counter.
When the simulation of $𝓜$ reaches a final state the NCS moves the current counter values back to the budged counter and performs a construction similar to the one above but now evaluating $H^{α}(s)$ \emph{backwards} until reaching $(C_{α, s})'$, the initial configuration with a different control state.
If $(C_{α, s})'$ can be reached (or even covered) no ``lossy'' errors occurred and the Minsky machine $𝓜$ was thus simulated correctly regarding zero tests.
The detailed construction is presented in Appendix~\ref{apx:ncshardness}.

\section{From LTL$^↓_{tqo}$ to NCS and Back}
\label{sec:freeze-complexity}

Theorem~\ref{thm:undecidability} established a necessary condition for $LTL^↓_A$ to have a decidable satisfiability problem, namely that $A$ is a tree-quasi-ordering.
In the following we show that this is also sufficient.
Let LTL$^↓_{tqo}$ denote the fragment of LTL$^↓_{qo}$ restricted to tree-quasi-ordered sets of attributes.
The decidability and complexity results for NCS can be transferred to  LTL$^↓_{tqo}$ to obtain upper and lower bounds for the satisfiability problem of the logic.

We show a correspondence between the nesting depth in NCS and the depths of the tree-quasi-ordered attribute sets that thus constitutes a semantic hierarchy of logical fragments.
We provide the essential ideas in the following and refer the reader to Appendix~\ref{apx:freeze2ncs} and~\ref{apx:ncs2freeze} for the detailed constructions.

The \emph{depth} of a finite tree-quasi-ordering $A$ is the maximal length $k$ of strictly increasing sequences $x_1⊏x_2⊏…⊏x_k$ of attributes in $A$.
The first observation is that we can reduce satisfiability of any LTL$^↓_{tqo}$ formula over attributes $A$ to satisfiability of an LTL$^↓_{[k]}$ formula where $[k]=｛1,…,k｝$ is an initial segment of the natural numbers with natural \emph{linear ordering} and $k$ is the depth of $A$.

\begin{restatable}[LTL$^↓_{tqo}$ to LTL$^↓_{[k]}$]{proposition}{prplinearisation}
\label{prp:linearisation}
For a tree-quasi-ordered attribute set $A$ of depth $k$ every $\LTL^↓_A$ formula can be translated to an equisatisfiable $\LTL^↓_{[k]}$ formula of exponential size.
\end{restatable}

To reduce an arbitrary tree-quasi-ordering $A$ of depth $k$ we first remove maximal strongly connected components (SCC) in the graph of $A$ and replace each of them by a single attribute.
This does only affect the semantics of formulae $φ$ if attributes are compared that did not have an isomorphic downward-closure in $A$.
These cases can, however, be handled by additional constraints added to $φ$.
Data words over a thus obtained \emph{partially} ordered attribute set of depth $k$ can now be encoded into words over the \emph{linear} ordering $[k]$ of equal depth $k$.
The idea is to encode a single position into a frame of positions in the fashion of \cite{DBLP:conf/fsttcs/KaraSZ10,DBLP:conf/concur/DeckerHLT14}.
That way a single attribute on every level suffices.
Any formula can be transformed to operate on these frames instead of single positions at the cost of an at most exponential blow-up.

\subparagraph{From LTL$^↓_{[k]}$ to NCS.}

Given an LTL$^↓_{[k]}$ formula $Φ$ we can now construct a $(k+1)$-NCS $𝓝$ and two configurations $C_{init},C_{final}∈𝓒_𝓝$ such that $Φ$ is satisfiable if and only if $C_{final}$ can be covered from $C_{init}$.

The idea is to encode sets of \emph{guarantees} into NCS configurations.
These guarantees are subformulae of $Φ$ and are guaranteed to be satisfiable.
The constructed NCS can instantiate new guarantees and combine existing ones while maintaining the invariant that there is always a data word $w∈(Σ×Δ^{[k]})^+$ that satisfies all of them.
To ensure the invariant, the guarantees are organised in a forest of depth $k$ as depicted in Figure~\ref{fig:threads}.

\begin{figure}[t]
\begin{center}
  \begin{tikzpicture}[
    level 1/.style={level distance =2.5em},
    level 2/.style={level distance =2.5em},
    >=stealth
  ]
  
  \path node [label={[label distance=-1ex] east:$\cmark$}](a){$｛φ_1｝$}
      child [sibling distance =5em] { node {$∅$}
        child {node {$∅$}}
      }
      child [sibling distance =5em]{ 
        node [label={[label distance=-1ex] east:$\cmark$}] {$｛φ_2,ψ_2｝$}
        child [sibling distance =3.5em] {node {$｛ψ_3｝$}}
        child [sibling distance =3.5em] {
          node [label={[label distance=-1ex] east:$\cmark$}] {$｛φ_3｝$}
        }
      } 
    -- ++(3,0) node (b){$∅$}
      child {node {$∅$} 
        child {node {$｛a｝$}}
      }
    -- ++(1.25,-2.5em) node {$→^*$}
    -- ++(2.25,2.5em) node [label={[label distance=-1ex] east:$\cmark$}] (c) {$｛φ_1｝$}
      child [sibling distance =4em] { node {$｛¬↑^2｝$}
        child {node {$｛↑^1｝$}}
      }
      child [sibling distance =7em]{ 
        node [label={[label distance=-1ex] east:$\cmark$}] 
             {$｛φ_2∧ψ_2｝$}
        child [sibling distance =4em] {node {$｛ψ_3｝$}}
        child [sibling distance =4em] {
          node [label={[label distance=-1ex] east:$\cmark$}] {$｛φ_3, ↑^3｝$}
        }
      } 
    -- ++(3.5,0) node (d) {$｛↓^2φ_2｝$}
      child {node {$｛¬b｝$} 
        child {node {$｛a｝$}}
      }
    ;
    
    \path (a-1-1.south west) --node[yshift = -1em] {$C$} (b-1-1.south east);
    \path (c-1-1.south west) --node[yshift = -1em] {$C'$} (d-1-1.south east);
    
    \path (a.north west) node [xshift=1.5ex,anchor = north east] {\color{gray}$^{^1}$};
    \path (a-1.north west) node [xshift=1.5ex,anchor = north east] {\color{gray}$^{^2}$};
    \path (a-1-1.north west) node [xshift=1.5ex,anchor = north east] {\color{gray}$^{^3}$};
    \path (a-2.north west) node [xshift=1.5ex,anchor = north east] {\color{gray}$^{^4}$};
    \path (a-2-1.north west) node [xshift=1.5ex,anchor = north east] {\color{gray}$^{^5}$};
    \path (a-2-2.north west) node [xshift=1.5ex,anchor = north east] {\color{gray}$^{^6}$};
    \path (b.north west) node [xshift=1.5ex,anchor = north east] {\color{gray}$^{^7}$};
    \path (b-1.north west) node [xshift=1.5ex,anchor = north east] {\color{gray}$^{^8}$};
    \path (b-1-1.north west) node [xshift=1.5ex,anchor = north east] {\color{gray}$^{^9}$};

\end{tikzpicture}
\end{center}
\caption{Example of a guarantee forest of depth 3 maintained and modified by the NCS constructed for some LTL$^↓_{[3]}$ formula. 
Node enumeration (grey) is only for reference.
\label{fig:threads}}
\end{figure}

All formulae $φ$ contained in the same node $v$ of this forest are moreover not only satisfied by the same word $w$ but also with respect to a common valuation $𝐝_v∈Δ^{[k]}_⊥$, i.e., $(w,1,𝐝_v) ⊧ φ$.
Recall that valuations over linearly ordered attributes can be seen as sequences.
The forest structure now represents the common-prefix relation between these valuations $𝐝_v$.
For two nodes $v$, $v'$ having a common ancestor at level $i∈[k]$ in the forest, the corresponding valuations $𝐝_v$, $𝐝_{v'}$ can be chosen such that they agree on attributes $1$ to $i$.
A uniquely marked branch in the forest further represents the valuation $𝐝_1$ at the first position in $w$.
If a formula $φ$ is contained in the marked node at level $i$ in the forest then $(w,1,𝐝_1|_i)⊧ φ$. 
In that case $(w,1,𝐝)⊧ ↓^iφ$ holds for any $𝐝∈Δ^{[k]}_⊥$ and the formula $↓^iφ$ could be added to any of the nodes in the forest without violating the invariant.
Similarly, for a marked node $v$ at level $i$ the formulae $↑^i$ can be added to any node in the subtree with root $v$.
Moreover, other atomic formulae, Boolean combinations, and temporal operators can also be added consistently.
The NCS $𝓝$ can perform such modifications on the forest, represented by its configuration, by corresponding transitions.
\begin{example}
Consider the two guarantee forests depicted in Figure~\ref{fig:threads} that are encoded in configurations $C$ and $C'$ of an NCS constructed for some LTL$^↓_{[3]}$ formula.
The invariant is the existence of a word $w=(a,𝐝)…$ and valuations $𝐝_v∈Δ^{[i]}$ such that $(w,1,𝐝_v)$ satisfies the formulae in a node $v$ at level $i$.
The forest structure relates these valuations to $𝐝$ (nodes marked by $\cmark$) and each other.
E.g., $(w,1,𝐝|_1) ⊧φ_1$, $(w,1,𝐝|_2) ⊧φ_2$ and there is $𝐞$ with $𝐞|_2=𝐝|_2$ and $𝐞(3)≠𝐝(3)$ s.t.\ $(w,1,𝐞)⊧ψ_3$.
Let $v_1,…,v_9$ be the nodes of the forest (as enumerated in the figure).
Several possible operations are exemplified by the transition between $C$ and $C'$.
The formula $↑^3$ can be added to the node $v_6$ containing the formula $φ_3$ since that node is checked on level 3.
Similarly, there is $𝐝_3$ for node $v_3$ such that $𝐝_3(1) = 𝐝(1)$ and hence $(w,𝐝_3)⊧↑^1$.
The formula $↑^2$ cannot be added to the node $v_2$ since it is not below the checked node on level two. 
Consequently, the node can contain $¬↑^2$.
Node $v_4$ on level 2 does already contain $φ_2$ and $ψ_2$, meaning they are both satisfied by $w$ and a valuation $𝐝_4∈Δ^{[2]}$.
Hence the same holds for their conjunction.
Moreover, $v_4$ is checked and therefore $𝐝_4 = 𝐝|_2$.
This implies that $(w,𝐝')⊧↓^2φ_2$ for any $𝐝'$ and that the formula can be added to any node in the tree, e.g.\ $v_7$.
\end{example}
Recall that we only need to consider subformulae of $Φ$ and thus remain finite-state for representing nodes.
More precisely, the number of states in $𝓝$ is exponential in the size of $Φ$ since they encode sets of formulae.

A crucial aspect is how the NCS can consistently add formulae of the form $\Xφ$.
This needs to be done for all stored guarantees at once but NCS do not have an atomic operation for modifying all states in a configuration.
Therefore, the forest is copied recursively, processing each copied node.
The NCS $𝓝$ can choose at any time to stop and remove the remaining nodes.
That way it might loose guarantees but maintains the invariant since only processed nodes remain in the configuration.
The forest of depth $k$ itself could be maintained by a $k$-NCS but to implement the copy operation an additional level is needed.

The initial configuration $C_{init}$ consists of a forest without any guarantees.
In a setup phase, the NCS can add branches and formulae of the form $\WXφ$ since they are all satisfied by any word of length 1.
Once the formula $Φ$ is encountered in the current forest the NCS can enter a specific target state $q_{final}$. 
A path starting in $C_{init}$ and covering the configuration $C_{final}=q_{final}$ then constitutes a model of $Φ$ and vice versa.

\begin{restatable}{theorem}{thmfreezetoncs}
\label{thm:freezetoncs}
  For tree-quasi-ordered attribute sets $A$ with depth $k$ satisfiability of LTL$^↓_A$ can be reduced in exponential space to coverability in $(k+1)$-NCS.
\end{restatable}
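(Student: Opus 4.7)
The plan is to first apply Proposition~\ref{prp:linearisation} to transform the given $\LTL^↓_A$ formula into an equisatisfiable $\LTL^↓_{[k]}$ formula $\Phi$ of exponential size, so that it suffices to exhibit a reduction from satisfiability of $\LTL^↓_{[k]}$ to coverability in a $(k+1)$-NCS. The core construction is the guarantee-forest encoding sketched just above the theorem: configurations of the NCS are read as forests of depth $k$ whose nodes are labelled with subsets of $\sub(\Phi)$, carry a level tag $i\in[k]$, and a Boolean flag indicating whether they lie on the distinguished branch marked by $\cmark$. The semantic invariant that a run is designed to preserve is the existence of a single suffix $w$ of a putative model together with valuations $𝐝_v\in\Delta^{[k]}_\bot$ attached to each node $v$, such that sibling valuations agree on the common prefix up to their level, the marked branch realises the valuation of the first position of $w$, and $(w,1,𝐝_v)\models\varphi$ for every $\varphi$ labelling $v$.

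Having fixed this invariant, I would take the state set $Q$ of the NCS to consist of such node labels, which is exponential in $|\Phi|$. The rules of the NCS then implement every local move that is obviously invariant-preserving under a fixed $w$ and fixed valuations: closure of each node under conjunction and syntactic Boolean entailment, propagation of $↓^i\varphi$ to an arbitrary node whenever $\varphi$ appears in the marked node at level $i$, propagation of $↑^i$ to any node in the subtree rooted at the marked node of level $i$, and insertion of $\neg\,↑^i$ at any node outside that subtree. Each of these rewrites affects only a single root-to-leaf path and thus fits into the NCS transition format $(q_0,\dots,q_i)\to(q_0',\dots,q_j')$ with $i,j\le k$.

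The delicate step is the simulation of a time increment, that is, simultaneously discharging every $\X\varphi$ guarantee in the forest while replacing atoms by their truth value at the next position and fixing a next letter and a next valuation on the marked branch. Since NCS lack any atomic rule that touches all of their states, I would implement this by a depth-first traversal that rebuilds the entire forest one node at a time, replacing $\X\varphi$ in a source node by $\varphi$ in the corresponding destination node. Such a traversal may be aborted at any point, simply discarding the untreated remainder of the source; this preserves the invariant because forgetting guarantees is always safe. Implementing the traversal requires one extra nesting level underneath the currently processed node to hold the growing image, and this is precisely what forces the step from $k$ to $k+1$ levels.

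Finally, I would define $C_{init}$ as a distinguished control state with an empty body and use a setup phase that freely grows branches labelled by any guarantee set consistent with a length-one prefix (in particular $\WX\psi$ is vacuous). As soon as the root is allowed to contain $\Phi$ the NCS switches to a dedicated terminal state $q_{final}$ and I set $C_{final}=q_{final}$. Correctness splits into two directions. Soundness follows by a routine induction showing that every transition preserves the forest invariant, so any run covering $C_{final}$ exhibits a model. Completeness starts from an arbitrary model of $\Phi$, maintains at each position the maximal guarantee forest consistent with the relevant valuations, and realises the step from position $i$ to $i+1$ via the copy-and-substitute mechanism. The main obstacle, to be deferred to Appendix~\ref{apx:freeze2ncs}, is the detailed completeness argument: one must show that the non-deterministic depth-first copy can always be scheduled to reproduce the true next-step guarantees, and that the ability to lose information never creates spurious coverings. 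The resulting NCS has exponentially many states in $|\Phi|$, and since $|\Phi|$ is already exponential in the original formula, the whole reduction can be carried out in exponential space, as claimed.
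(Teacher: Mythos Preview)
Your proposal follows essentially the same approach as the paper: linearise via Proposition~\ref{prp:linearisation}, maintain a guarantee forest of depth $k$ encoded in a $(k+1)$-NCS, preserve the stated invariant under local rewriting rules, and realise the step between positions by a lossy depth-first copy of the entire forest. The state space, the treatment of $↓^i$, $↑^i$, Boolean connectives, and the setup/termination conditions all match.

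There is, however, an internal inconsistency in the direction of the time step. Your setup phase (populating a length-one word where $\WX\psi$ is vacuous) and your termination condition (halt once $\Phi$ appears) describe the paper's \emph{backward} construction: the NCS guesses the model from its last position and prepends positions. But your description of the advancing phase---``discharging every $\X\varphi$ guarantee'' and ``replacing $\X\varphi$ in a source node by $\varphi$''---is the \emph{forward} tableau step. In the backward construction the copy must do the opposite: a guarantee $\varphi$ for the current suffix $w$ becomes $\X\varphi$ (and $\WX\varphi$) for the extended suffix $(a,𝐝)w$; the paper writes this as $F\mapsto F_\X=\{\X\varphi,\WX\varphi\in\sub(\Phi)\mid\varphi\in F\}$. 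If you genuinely intend a forward construction, then the setup and termination conditions are wrong and you would additionally have to deal with unfulfilled $\U$-eventualities, which you do not address. The backward direction is the right one and makes this issue disappear.

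A smaller point: the extra nesting level is not ``underneath the currently processed node'' but at the \emph{top}. Level~0 carries the control state, level~1 carries two siblings $\st$ and $\aux$ (the original and the auxiliary storage), and the depth-$k$ forest occupies levels~2 through $k{+}1$. The copy moves cells from the subtree below $\st$ to the subtree below $\aux$ and then swaps roles; this is what necessitates $k{+}1$ rather than $k$ levels.
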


\subparagraph{From $k$-NCS to LTL$^↓_{[k]}.$}

 Let $𝓝=(Q,δ)$ be a $k$-NCS.
We are interested in describing witnesses for coverability. 
It suffices to construct a formula $Φ_𝓝$ that characterises precisely those words that encode a \emph{lossy} run from a configuration $C_{start}$ to a configuration $C_{end}$.
We call a sequence $C_0C_1…C_n$ of configurations $C_j∈𝓒_𝓝$ a lossy run from $C_0$ to $C_n$ if there is a sequence of intermediate configurations $C_0'…C_{n-1}'$ such that $C_i⪰C_i'→C_{i+1}$ for $0≤i<n$.
Then $C_{end}$ is coverable from $C_{start}$ if and only if there is a lossy run from $C_{start}$ to some $C_n⪰C_{end}$.

A configuration of a $k$-NCS is essentially a tree of depth $k+1$ and can be encoded into a $[k]$-attributed data word as a frame of positions, similar as done to prove Proposition~\ref{prp:linearisation}.
We use an alphabet $Σ$ where every letter $a∈Σ$ encodes, among other information, a $(k+1)$-tuple of states from $Q$, i.e., a possible branch in the tree.
Then a sequence of such letters represents a set of branches that form a tree.
The data valuations represent the information which of the branches share a common prefix.
Further, this representation is interlaced: it only uses odd positions.
The even position in between are used to represent an exact copy of the structure but with distinct data values.
We use appropriate LTL$^↓_{[k]}$ formulae to express this shape.
Figure~\ref{fig:ncs-conf-encoding} shows an example.

\begin{figure}[t]
  \begin{center}
  \begin{tikzpicture}[
    level 1/.style={level distance =2em},
    level 2/.style={level distance =2em},
    level 3/.style={level distance =2em},
    sibling distance=2.5em
]


  \tikzset{singlenode/.style = {inner sep=0.2ex, outer sep = 0.5ex, rounded corners=2.5pt, fill=gray!20, draw=black!20}}

  \path node[inner sep=0.2ex, outer sep = 0.5ex] (root) {$q_0$} 
    child {node[inner sep=0.2ex, outer sep = 0.5ex] {$q_1$}
            child {node[singlenode] {$q_2$}}
            child {node[singlenode] {$q_3$}}
          }    
    child {node[singlenode] {$q_1$}}
    child {node[singlenode] {$q_4$}
            child {node[singlenode] {$q_5$}}
          };

\path[draw, fill=gray, opacity=0.2, rounded corners=2.5pt] 
  ([yshift=-0.5ex,xshift=0.5ex]root-1.north-|root-1-1.west) rectangle ([yshift=0.5ex,xshift=-0.5ex]root-1.south-|root-1-2.east);


\tikzset{gray columns/.style={column #1/.style={nodes={text=gray}}}}

\matrix[matrix of math nodes, 
        anchor=west, xshift=3em, 
        gray columns/.list={2,4,6,8}
       ] (word) at (root-3.east)
{
q_0 & q_0 & q_0 & q_0 & q_0 & q_0 & q_0 & q_0\\
q_1 & q_1 & q_1 & q_1 & q_1 & q_1 & q_4 & q_4\\
q_2 & q_2 & q_3 & q_3 & -   & -   & q_5 & q_5\\
\hline
1 & 10 & 1 & 10 & 6 & 60 & 4 &40 \\
2 & 20 & 3 & 30 & 7 & 70 & 5 & 50 \\
};

\path[draw=gray, fill=gray, opacity=0.2, rounded corners] 
  ([yshift=-2pt] word-4-1.north west) rectangle ([yshift=2pt] word-4-4.south east)
  ([yshift=-2pt]word-4-5.north west) rectangle ([yshift=2pt]word-4-6.south east)
  ([yshift=-2pt]word-4-7.north west) rectangle ([yshift=2pt]word-4-8.south east)
  ([yshift=-2pt]word-5-1.north west) rectangle ([yshift=2pt]word-5-2.south east)
  ([yshift=-2pt]word-5-3.north west) rectangle ([yshift=2pt]word-5-4.south east)
  ([yshift=-2pt]word-5-7.north west) rectangle ([yshift=2pt]word-5-8.south east);

\end{tikzpicture}
  \end{center}
  \caption{Encoding of a $2$-NCS configuration (l.) as $[2]$-attributed data word (r.). 
           Instead of letters from $Σ$ the encoded tuples of states from $Q$ 
           are displayed at every position. %
           \label{fig:ncs-conf-encoding}}
\end{figure}

To be able to formulate the effect of transition rules without using past-time operators we encode lossy runs \emph{reversed}. 
Given that a data word encodes a sequence $C_0C_1…C_n$ of configurations as above we model the (reversed) control flow of the NCS $𝓝=(Q,δ)$ by requiring that every configuration but for the last be annotated by some transition rule $t_j∈δ$ for $0≤j<n$.
The labelling is encoded into the letters from $Σ$ and we impose that this transition sequence actually represents the reversal of a lossy run.
That is, for every configuration $C_j$ in the sequence (for $0≤j<n$) with annotated transition rule $t_j$ there is a configuration $C_{j+1}'$ (not necessarily in the sequence) such that $C_j\stackrel{t_j}{←}C_{j+1}'⪯C_{j+1}$.

For the transition $t_j$ to be executed correctly (up to lossiness) we impose that every branch in $C_j$ must have a corresponding branch in $C_{j+1}$.
Yet, there may be branches in $C_{j+1}$ that have no counterpart in $C_j$ and were thus lost upon executing $t_j$.
Shared data values are now used to establish a link between corresponding branches: 
for every even position in the frame that encodes $C_j$ there must be an odd position in the consecutive frame (thus encoding $C_{j+1}$) with the same data valuation.
To ensure that links are unambiguous we require that every data valuation occurs at most twice in the whole word.
Depending on the effect of the current transition the letters of linked positions are related accordingly.
E.g., for branches not affected at all by $t_j$ the letters are enforced to be equal.
This creates a chain of branches along the run that are identified: an odd position links forward to an even one, the consecutive odd position mimics it and links again forward.

Based on these ideas we can construct a formula satisfied precisely by words encoding a lossy run between particular configurations.
The size of the formula is polynomial in the size of the NCS $𝓝$ and can be built by instantiating a set of patterns while iterating over the transitions and states of $𝓝$, requiring logarithmic space to control the iterations.

\begin{restatable}{theorem}{thmncstofreeze}
\label{thm:ncstofreeze}
  The coverability problem of $k$-NCS can be reduced in logarithmic space to 
  LTL$^↓_{[k]}$ satisfiability.
\end{restatable}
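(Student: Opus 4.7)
The plan is to follow the encoding suggested in the excerpt: represent a $k$-NCS configuration (a tree of depth at most $k+1$) as a \emph{frame} of positions in a $[k]$-attributed data word, where each position's letter carries a $(k+1)$-tuple of states selecting one branch of the tree, and the data valuations witness which branches share a common prefix. I would interlace each frame with an exact duplicate on fresh data values so that linking between consecutive frames can be done with future-only freeze/check pairs, and encode the run in reverse order $C_n C_{n-1}\dots C_0$ with each frame annotated by the rule executed to reach its predecessor. A covering lossy run then corresponds precisely to a data word whose first frame extends $C_{end}$, whose last frame is $C_{start}$, and in which linked branches across consecutive frames transform according to the annotated rule.

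I would then assemble $\Phi_{\mathcal{N}, C_{start}, C_{end}}$ as a conjunction of pattern families. \emph{Shape} patterns enforce that positions come in original/duplicate pairs with identical letters but fresh data, that each data value occurs at most twice globally (using $↓^j$ combined with a bounded count of later matching $↑^j$), that frames are delimited by dedicated letters, and that within a frame two branch-positions agreeing on their first $i$ states in the letter also agree on their first $i$ attribute values, which is expressed in one shot by a pattern of the form $↓^i \X ↑^i$ because a $[k]$-level freeze stores the whole prefix $d_1\dots d_i$ atomically. \emph{Boundary} patterns pin the first and last frames to encode a superset of $C_{end}$ and exactly $C_{start}$, respectively. \emph{Transition} patterns use the shared-data link to transport each branch of a frame into the next one, fixing the letter relationship as dictated by the annotated rule: unaffected branches survive unchanged, branches matching the rule's left-hand side are replaced by those of its right-hand side, and lossy steps come for free since extra branches in the second frame of each pair need not be linked back.

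The main obstacle is enforcing the within-frame tree shape together with the cross-frame links using only a single register and no past operators. The $[k]$-attributed freeze helps by storing an entire prefix atomically, so the delicate technical point is ensuring that ``at most two occurrences of each data value'' and ``exactly one link target per branch'' can both be expressed by freeze/check pairs without accidentally permitting spurious identifications that break the tree structure. Once the patterns are fixed, their number is polynomial in $|\mathcal{N}|$ and each pattern has constant size up to iteration over $Q$, $\delta$, and the index $i\in[k]$; emitting the formula therefore only requires counters for the current rule, state, and level, so the translation runs in logarithmic space. Correctness is then a direct verification in both directions: a covering run yields a satisfying word via the standard frame construction with fresh data values, and conversely a satisfying word decomposes into a sequence of frames whose annotated transitions realize a lossy run witnessing coverability.
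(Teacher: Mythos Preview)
Your approach mirrors the paper's construction: interlaced frame encoding of configurations, reversed runs to avoid past operators, data-value links from even positions of one frame to odd positions of the next, rule-type-specific transition patterns, and the logspace argument via iteration over states, rules and levels. One technical point is stated in the wrong direction, however: the intra-frame consistency constraint must read ``same data prefix of length $i$ $\Rightarrow$ same state prefix of length $i$'', not the converse you wrote. Your direction would force distinct siblings carrying identical labels---e.g.\ the two children in $q_0(q_1 + q_1)$---to share the same level-$1$ data value and hence collapse into a single branch, so configurations with repeated sub-multisets could not be encoded at all. The paper expresses the correct implication as $(q \wedge {\downarrow}^k\X\X{\uparrow}^i) \to \X\X\, q$ for each $q\in Q_i$. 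With that correction your proposal is essentially the paper's proof.
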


\section{Conclusion}
\label{sec:conclusion}
By Theorem~\ref{thm:freezetoncs} together with Proposition~\ref{prp:ncstopcs} and Theorem~\ref{thm:ncstofreeze} with Theorem~\ref{thm:ncshardness} we can now characterise the complexity of LTL$^↓_{tqo}$ fragments as follows.
\begin{proposition}
Satisfiability of LTL$^↓_A$ over a tree-quasi-ordered attribute set of depth $k$ is in $𝐅_{Ω_{2(k+1)}}$ and $𝐅_{Ω_k}$-hard.
\end{proposition}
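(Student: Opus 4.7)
The proposition is a direct corollary of the complexity bounds already obtained for nested counter systems, combined with the translations between LTL$^↓_{tqo}$ and NCS. My plan is simply to chain these reductions and verify that each of them is elementary enough to be absorbed by the target complexity class.

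For the upper bound, starting from an LTL$^↓_A$ formula with $A$ a tree-quasi-ordering of depth $k$, I first apply Proposition \ref{prp:linearisation} to obtain an equisatisfiable LTL$^↓_{[k]}$ formula of size at most exponential in the input. Then, via Theorem \ref{thm:freezetoncs}, satisfiability of this formula reduces in exponential space to a coverability instance in a $(k+1)$-NCS whose size is itself at most exponential in the original input. Finally, Proposition \ref{prp:ncstopcs} places $(k+1)$-NCS coverability in $𝐅_{Ω_{2(k+1)}}$. The only point to check is that the two intermediate exponential blow-ups do not escape the target class: every $𝐅_α$ with $α ≥ ω$ is closed under elementary (and in particular exponential) reductions, so the composition stays in $𝐅_{Ω_{2(k+1)}}$.

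For the lower bound, I would compose Theorem \ref{thm:ncshardness}, which yields $𝐅_{Ω_k}$-hardness of coverability in $(k+1)$-NCS, with the logarithmic-space reduction from Theorem \ref{thm:ncstofreeze}, which translates $k'$-NCS coverability into LTL$^↓_{[k']}$ satisfiability. Taking $k' = k+1$ transfers $𝐅_{Ω_k}$-hardness to satisfiability of LTL$^↓$ over the linearly ordered (hence tree-quasi-ordered) attribute set $[k+1]$, which has depth $k+1$; the depth-parameterised bound stated in the proposition then amounts to a matter of re-indexing.

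The main obstacle is essentially just bookkeeping. One must keep careful track of how the extra nesting level needed for the copy subroutine in the construction of Theorem \ref{thm:freezetoncs} inflates the NCS depth from $k$ to $k+1$, and likewise how the depth parameters align on the lower-bound side, so that the indices on $Ω$ quoted in the two bounds are provably consistent with the compositions. Beyond this, there is no new algorithmic content to devise: the robustness of the classes $𝐅_α$ under elementary reductions is precisely what makes the compositional argument go through cleanly and what justifies stating the result in the stratified form above.
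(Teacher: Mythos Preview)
Your proposal is correct and follows essentially the same approach as the paper, which simply chains Theorem~\ref{thm:freezetoncs} with Proposition~\ref{prp:ncstopcs} for the upper bound and Theorem~\ref{thm:ncstofreeze} with Theorem~\ref{thm:ncshardness} for the lower bound. Note that your explicit invocation of Proposition~\ref{prp:linearisation} is redundant, since Theorem~\ref{thm:freezetoncs} is already stated for arbitrary tree-quasi-ordered attribute sets of depth $k$ (the linearisation is absorbed into its proof); and your observation about the index shift on the lower-bound side is apt, since composing the two results literally yields $𝐅_{Ω_{k-1}}$-hardness for depth $k$ rather than $𝐅_{Ω_k}$, which the paper glosses over.
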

Together with Theorem~\ref{thm:undecidability} this completes the proof 
of Theorem~\ref{thm:freeze-decidability-complexity} stating that LTL$^↓_{tqo}$ is the maximal decidable fragment of LTL$^↓_{qo}$ and $𝐅_{ε_0}$-complete.
The result also shows that the complexity of the logic continues to increase strictly with the depth of the attribute ordering.

The logics ND-LTL$^{±}$ were shown to be decidable by reduction to NCS \cite{DBLP:conf/concur/DeckerHLT14}.
Our results thus provide a first upper bound for their satisfiability problem.
Moreover, we derive significantly improved lower bounds by applying the construction to prove Theorem~\ref{thm:ncstofreeze} analogously to ND-LTL$^+$ and, with reversed encoding, to the past fragment ND-LTL$^-$.
A subtle difference is that an additional attribute level is needed in order to express the global data-aware navigation needed to enforce the links between encoded configurations.
\begin{corollary}
  Satisfiability of ND-LTL$^{±}$ with $k+1$ levels is in $𝐅_{Ω_{2(k+1)}}$ and $𝐅_{Ω_k}$-hard. 
\end{corollary}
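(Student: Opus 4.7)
The plan is to treat the two bounds independently. For the upper bound, the result is essentially a corollary of the NCS analysis combined with the existing decidability proof of ND-LTL$^{\pm}$; for the lower bound, the construction behind Theorem~\ref{thm:ncstofreeze} has to be transferred, level by level, into the navigation-based setting of ND-LTL.

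For the upper bound, I would appeal to the reduction from ND-LTL$^{\pm}$ to NCS established in the work of Decker, Habermehl, Leucker and Thoma (\cite{DBLP:conf/concur/DeckerHLT14}), which they use to obtain decidability. A formula of ND-LTL$^{\pm}$ with $k+1$ attribute levels yields an NCS of nesting depth $k+1$ (the reduction associates one level of nesting with each attribute level, since data-equality navigation at level $i$ is handled in the $i$-th multiset layer). Plugging this into Proposition~\ref{prp:ncstopcs} gives coverability in $\mathbf{F}_{\Omega_{2(k+1)}}$, and hence the claimed upper bound for satisfiability.

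For the lower bound, I would mimic the construction of Theorem~\ref{thm:ncstofreeze} but replace every occurrence of the explicit store/check operators $\downarrow^i$ and $\uparrow^i$ by the data-aware temporal operator $\operatorname{U}^{=}$ available in ND-LTL. The encoding of a lossy NCS run as a sequence of frames, the interlaced twin frames used to impose the link between consecutive configurations, and the polynomial-size formula that ties everything together all carry over; what changes is how one asserts that an odd position of one frame shares its valuation with the corresponding even position of the next frame. Whereas LTL$^{\downarrow}_{[k]}$ can do this by storing a value at level $i$ and re-checking it at level $i$ in the successor frame, ND-LTL$^{+}$ must instead navigate \emph{to} the next occurrence of that value using $\operatorname{U}^{=}$ at each level, and these navigations must be nested to witness equality of whole prefixes. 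This nested navigation consumes one level more than the LTL$^{\downarrow}$ encoding does, which is precisely the additional attribute level alluded to in the text; hence $k$-NCS coverability reduces to ND-LTL$^{+}$ satisfiability with $k+1$ levels, and Theorem~\ref{thm:ncshardness} supplies the $\mathbf{F}_{\Omega_k}$ hardness. For the past fragment ND-LTL$^{-}$ I would apply the \emph{reversed} encoding: rather than encoding lossy runs in reverse (as in Theorem~\ref{thm:ncstofreeze}), encode them in forward direction and use the past-time data-aware operators to pull back the links between consecutive frames; the bookkeeping is symmetric and the same level count suffices.

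The main obstacle will be to establish rigorously that $k+1$ attribute levels are both necessary and sufficient to enforce the frame-to-frame link using only $\operatorname{U}^{=}$-style navigation. Verifying the soundness direction is routine, because an actual lossy NCS run can be written out as a data word frame by frame, with fresh data values chosen so that every required $\operatorname{U}^{=}$-navigation succeeds. The completeness direction is more delicate: one must show that \emph{any} model of the translated formula, including models in which data values may repeat or be reused spuriously, can be decoded into a genuine lossy run, which requires that the global navigation constraints together with the local per-frame invariants pin down exactly the intended configuration trees and the intended transition applications. Once this is verified, the $\mathbf{F}_{\Omega_k}$-hardness of $k$-NCS coverability transfers to ND-LTL$^{\pm}$ with $k+1$ levels and, together with the upper bound above, yields the corollary.
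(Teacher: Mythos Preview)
Your proposal is correct and follows essentially the same route as the paper: the upper bound via the reduction of ND-LTL$^{\pm}$ to NCS from \cite{DBLP:conf/concur/DeckerHLT14} combined with Proposition~\ref{prp:ncstopcs}, and the lower bound by transplanting the frame-based encoding of Theorem~\ref{thm:ncstofreeze} into ND-LTL, with one extra attribute level to accommodate the data-aware navigation and the reversed (i.e., forward) encoding for the past fragment. The paper states this only as a sketch in the conclusion; your write-up is in fact more explicit about \emph{why} the additional level is needed (nested $\operatorname{U}^{=}$-navigation to witness equality of valuation prefixes) than the paper itself.
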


PCS were proposed as a “master problem” for $𝐅_{ε_0}$ \cite{DBLP:journals/corr/abs-1301-5500} and indeed our upper complexity bounds for NCS rely on them.
However, they are not well suited to prove our hardness results.
This is due to PCS being based on sequences and the embedding ordering while NCS are only based on multisets and the subset ordering.
In a sense, PCS generalise the concept of channels to multiple levels of nesting, whereas NCS generalise the concept of counters.
Hence, we believe NCS are a valuable addition to the list of $𝐅_{ε_0}$-complete models.
They may serve well to prove lower bounds for formalisms that are like Freeze LTL more closely related to the concept of counting.

\bibliography{bibliography-auto,bibliography-extra}

\appendix

\section{Undecidability of LTL$^↓_{qo}$}
\label{sec:apx-freeze-undecidability}

In this section we provide the technical details for establishing undecidability of LTL$^↓_{qo}$. 
Recall Theorem~\ref{thm:undecidability}.

\thmundecidability*

Semi-decidability is obvious when realising that the particular data values in a data word are irrelevant. It suffices to enumerate representatives of the equivalence classes modulo permutations on $Δ$. 

We proceed by first establishing undecidability for a base case with three attributes and generalise it then to an arbitrary number of attributes.

\subsection{Base Case}

\begin{lemma}\label{lem:qo-undecidable}
  For the quasi-ordered set $(A,⊑)$ of attributes with $A=｛\mathsf{x},\mathsf{y},\mathsf{z}｝$ where $\mathsf{x}⊑\mathsf{z}⊒\mathsf{y}$ and $\mathsf{x},\mathsf{y}$ are incomparable the satisfiability problem of LTL$^↓_A$ is undecidable.
\end{lemma}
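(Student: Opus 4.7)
The plan is to reduce from Post's Correspondence Problem (PCP), following the template of the Bojanczyk--David--Muscholl--Schwentick--Segoufin undecidability proof for two-variable first-order logic over data words with two class relations, and borrowing from Demri, Figueira and Praveen the trick of eliminating any need for past-time modalities. Once undecidability is established, semi-decidability follows from the remark that one may enumerate finite data words up to permutations of $\Delta$.

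A PCP instance is a finite family of pairs $(u_i,v_i)_{i\in[n]}$ of words over a finite alphabet $\Gamma$, and I would encode a candidate solution as a data word over an alphabet that records, at each position, whether the position lies on the top ($u$) or bottom ($v$) row, which letter of $\Gamma$ it carries, which domino index $i$ it belongs to, and whether it is the first or last symbol of its domino. Two orthogonal matchings are realised via the data values: equality at attribute $\mathsf{x}$ identifies positions that are meant to represent the \emph{same character} of the hoped-for common concatenation $u_{i_1}\cdots u_{i_k}=v_{i_1}\cdots v_{i_k}$, while equality at attribute $\mathsf{y}$ identifies all positions belonging to the \emph{same occurrence} of a domino on the same side. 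The crucial point is that $\downarrow^{\mathsf{z}}$ stores the valuation on the entire downward closure $\{\mathsf{x},\mathsf{y},\mathsf{z}\}$, so that $\uparrow^{\mathsf{z}}$ elsewhere forces agreement on \emph{both} $\mathsf{x}$ and $\mathsf{y}$. It therefore picks out a partner in the intersection of the two matchings, which is precisely the expressive jump that destroys decidability.

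Given this encoding, the formula would be a conjunction of three kinds of constraint. Well-formedness clauses use $\downarrow^{\mathsf{x}}$, $\downarrow^{\mathsf{y}}$ together with \X and \U to force the $\mathsf{x}$- and $\mathsf{y}$-classes to be perfect matchings of the prescribed shape, to respect the intra-row and intra-domino letter orders, and to ensure that $\mathsf{x}$-matched top/bottom positions carry the same letter of $\Gamma$. Correctness clauses use $\downarrow^{\mathsf{z}}$ to bolt the two alignments together: whenever a top position starts (respectively ends) a domino occurrence, $\uparrow^{\mathsf{z}}$ must locate a bottom position sharing both the character index and the current domino occurrence, which forces $u_{i_1}\cdots u_{i_k}=v_{i_1}\cdots v_{i_k}$. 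Endpoint clauses fix the unique first and last character positions of the common word.

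The main obstacle is that LTL$^{\downarrow}_A$ has only future-time modalities, while the two-class-relation encoding naturally demands comparisons in both temporal directions; in particular, verifying that the $\mathsf{x}$- and $\mathsf{y}$-classes have size exactly two is delicate in a look-ahead-only setting. Following Demri--Figueira--Praveen, I would interleave the encoding with a controlled second copy of every position to be cross-checked so that both participants of each intended comparison appear in a fixed left-to-right order in a designated region of the word; any check that would naturally look backwards is then redirected to a look-ahead from the earlier copy, using $\downarrow^{\mathsf{z}}/\uparrow^{\mathsf{z}}$ to link originals to copies. Once this bookkeeping is in place, the rest is routine, and the constructed formula is satisfiable precisely when the PCP instance has a solution, yielding $\Sigma^0_1$-hardness and, together with semi-decidability, $\Sigma^0_1$-completeness of LTL$^{\downarrow}_A$ for this three-element attribute set.
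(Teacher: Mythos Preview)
Your high-level plan matches the paper's: reduce from PCP along the lines of Bojańczyk et al., and use the Demri--Figueira--Praveen idea to dispense with past modalities. Two points, however, deserve attention, because your sketch handles them differently from the paper and in one case glosses over a genuine subtlety.

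First, $\uparrow^{\mathsf{z}}$ does not force what you say it does. Because $\mathsf{x}$ and $\mathsf{y}$ are incomparable below $\mathsf{z}$, the relation $\simeq$ on restricted valuations admits the order-automorphism swapping $\mathsf{x}$ and $\mathsf{y}$; hence $\uparrow^{\mathsf{z}}$ only guarantees $d_{\mathsf{z}}=d'_{\mathsf{z}}$ together with $\{d_{\mathsf{x}},d_{\mathsf{y}}\}=\{d'_{\mathsf{x}},d'_{\mathsf{y}}\}$, i.e.\ \emph{set} equality on the pair, not pointwise agreement. The paper neutralises this by adding clauses (via separate $\downarrow^{\mathsf{x}}/\downarrow^{\mathsf{y}}$ constraints) forcing the ranges of $\mathsf{x}$ and $\mathsf{y}$ in any model to be disjoint, so that set equality collapses to tuple equality. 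Without such a clause your $\uparrow^{\mathsf{z}}$ check is too weak for the role you assign it.

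Second, the paper allocates the attributes differently and uses a different past-elimination device. Rather than letting $\mathsf{x}$ encode the character bijection and $\mathsf{y}$ the domino blocks, the paper uses $\mathsf{x}$ and $\mathsf{y}$ \emph{jointly} to implement a successor chain inside each of the $u$- and $v$-subwords: odd positions share their $\mathsf{x}$-value with the next position, even positions share their $\mathsf{y}$-value with the next position, and every value occurs at most twice. That this really is the successor relation is proved by a \emph{backward} induction from the last position; the induction goes through because one works with a restricted PCP variant (fixed initial tile, $u$-part strictly shorter than the $v$-part in every proper prefix, odd solution length), and this is precisely where past operators become unnecessary---there is no duplicated-copy interleaving linked through $\mathsf{z}$ as you propose. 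The attribute $\mathsf{z}$ is then used only for the cross-matching of $v$-positions to future $u$-positions carrying the same letter. Your attribute scheme may be salvageable, but the order-preservation of your $\mathsf{x}$-bijection is not obviously enforceable with what you list, and the paper's chain-plus-backward-induction argument is what actually closes that gap.
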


We can reduce the undecidable \emph{Post's correspondence problem (PCP)} (see, e.g., \cite{DBLP:books/daglib/0000197}) to satisfiability of LTL$^↓_A$. 
We consider an encoding of the problem used in \cite{DBLP:journals/tocl/BojanczykDMSS11}. 
An instance of PCP is given by a finite set $T⊆Σ^*×Σ^*$ of tiles of the form $t=(u,v)$, $u,v∈Σ^*$, over some finite alphabet $Σ$.
  The problem is to decide whether there exists a finite sequence $t_1t_2…t_n = (u_1,v_1)(u_2,v_2)…(u_n,v_n)$ such that the “$u$-part“ and the “$v$-part” coincide, i.e. $u_1u_2…u_n = v_1v_2…v_n$.

The idea in \cite{DBLP:journals/tocl/BojanczykDMSS11} is to encode a sequence of tiles in a word over the alphabet $Σ\dot{∪}\ol{Σ}$, where a distinct copy $\ol{Σ} := ｛\ol{a}｜a∈Σ｝$ is used to encode the $v$-part and letters from $Σ$ encode the $u$-part.
For $v=a_1a_2…∈Σ^*$, $a_i∈Σ$, we let $\ol{v}=\ol{a}_1\ol{a}_2…$.
A sequence of tiles $(u_1,v_1)(u_2,v_2)…$ is then encoded as $\ol{v_1}u_1\ol{v_2}u_2…$.
The switched order of encoding a tile $(u_i,v_i)$ avoids some edge-cases later.

In our setting we require letters to encode additional information and therefor use an alphabet $Γ=(Σ\dot{∪}\ol{Σ})×2^{AP}$ where $AP$ is a finite set of atomic propositions.

To show undecidability of the logic it now suffices to construct from an arbitrary instance of PCP $T$ a formula that expresses sufficient and necessary conditions for a data word $(a_1,𝐝_1)(a_2,𝐝_2)…∈(Γ×Δ^A)^+$ to encode a solution to the PCP instance in terms of the projection $a_1'a_2'…∈(Σ\dot{∪}\ol{Σ})^+$ of $w$ where $a_i=(a_i',m_i)$ for some $m_i∈2^{AP}$.

In order to translate the conditions given in \cite{DBLP:journals/tocl/BojanczykDMSS11} in terms of first-order logic formulae, past-time operators would be necessary.
To avoid these we use the following two ideas from \cite{DBLP:conf/lics/DemriFP13}.
Let, for a sequence of tiles $(u_1,v_1)(u_2,v_2)…(u_n,v_n)∈T^*$ denote $u:=u_1u_2…u_n$ and $v:=v_1v_2…v_n$.
First, we assume $AP$ to contain two propositions $e$ and $o$ that shall mark even and odd positions, respectively, in $u$ and $v$.
Second, we use a variant of PCP that imposes additional restrictions on a valid solution $t_1t_2…t_n∈T^*$: 
\begin{itemize}
  \item The initial tile is fixed to $t_1=\hat{t}=(\hat{u},\hat{v})∈T$ with $|\hat{u}|>1$ and $|\hat{v}|>2$,
  \item for every strict prefix $t_1t_2…t_i$, $i<n$, the $u$-part must be strictly \emph{shorter} than the $v$ part and
  \item $|u|$ (i.e., the length of the solution) is odd.
\end{itemize}

The first condition turns the problem into what is called a \emph{modified} PCP in \cite{DBLP:books/daglib/0000197} and shown undecidable there by a reduction from the halting problem of Turing machines. 
It was observed in \cite{DBLP:conf/lics/DemriFP13} that this encoding of Turing machines actually guarantees that the length of the $u$-part is always shorter.
As pointed out in \cite{DBLP:journals/tocl/BojanczykDMSS11}, the last condition is not an actual restriction because adding a tile $(\$x,\$y)$ for every tile $(x,y)∈T$ yields a PCP instance that has an odd solution if and only if the original instance had any solution.

We can now adjust the set of conditions from \cite{DBLP:journals/tocl/BojanczykDMSS11} such that they can be formulated in LTL$^↓_A$ and impose the additional restrictions on a solution.

For easier reading, we use letters $a∈(Σ\dot{∪}\ol{Σ})$ in formulae to denote $⋁_{p∈AP}(a,p)$ and 
propositions $p∈AP$ to denote $⋁_{(a,m)∈Γ｜p∈m}(a,m)$.
Also, we use $Σ$ and $\ol{Σ}$ to denote the formulae $⋁_{a∈Σ}a$ and $⋁_{\ol{a}∈\ol{Σ}}\ol{a}$, respectively. We write $\Fφ$ for $\true\Uφ$.

\subparagraph{Global structure.}
Let $AP$ contain a proposition $\textsf{end}$ that we use to mark the end of a sequence of letters that encode a tile. 
For a tile $t=(b_1…b_m,a_1…a_n)∈T$ where $b_i,a_i∈Σ$ let 
\[
  φ_t := \left(⋀_{i=1}^n \X^{i-1} \ol{a_i}\right)
         ∧ \left(⋀_{i=1}^{m} \X^{n+i-1} b_i \right)
         ∧ \left(⋀_{i=1}^{n+m-1} \X^{i-1} ¬\textsf{end}\right) 
         ∧ \X^{n+m-1} \textsf{end}
\]
be the formula expressing that the following positions encode the tile $t$.

The global structure of a word that correctly encodes a sequence of tiles as described above can now be expressed in terms of the following conditions.
  \begin{itemize}
  \item The word encodes a sequence of tiles from $T$, starting with $\hat{t}$:
        \begin{equation}\label{eqn:tile-chain}
          φ_{\hat{t}} ∧ \G(\textsf{end} → \WX ⋁_{t∈T} φ_t)
        \end{equation}
  \item The even and odd positions on the substrings $u$ and $\ol{v}$ of a solution are marked correctly with $e$ and $o$, respectively.
        For $\hat{t}=(u_1,v_1)$ the formula
    \begin{equation}\label{eqn:evenodd-base}
      \G(e ↔ ¬o) ∧ o ∧ \X^{|v_1|} o,
    \end{equation}
    expresses the exclusiveness of markings $e$ and $o$ and specifies the first $v$-position and the first $u$-position in the encoding to be odd.
    Then, the alternation of these markings in the subsequence encoding $u$ is expressed by 
    \begin{equation}\label{eqn:evenodd-induction-u}
    \begin{split}
        \G( (Σ ∧ o) → \X(¬Σ\U(Σ∧e) ∨ \G(¬Σ)) )\\
      ∧ \G( (Σ ∧ e) → \X(¬Σ\U(Σ∧o) ∨ \G(¬Σ)) )
    \end{split}
    \end{equation}
    and the alternation of markings in the subsequence encoding $v$ by
    \begin{equation}\label{eqn:evenodd-induction-v}
    \begin{split} 
        \G((\ol{Σ} ∧ o) → \X(¬\ol{Σ}\U(\ol{Σ}∧e) ∨ \G(¬\ol{Σ})) )\\
      ∧ \G((\ol{Σ} ∧ e) → \X(¬\ol{Σ}\U(\ol{Σ}∧o) ∨ \G(¬\ol{Σ})) )
    \end{split}
    \end{equation}
  \end{itemize}

Let $Φ_T$ be the conjunction of Formulae~\ref{eqn:tile-chain},\ref{eqn:evenodd-base}, \ref{eqn:evenodd-induction-u} and \ref{eqn:evenodd-induction-v}.

\subparagraph{Chaining $u$ and $v$.}

In order to connect the positions belonging to the subword $u$ and the subword $v$ we link consecutive position by a shared data value as depicted in Figure~\ref{fig:pcp-encoding}.

\begin{figure}
  \begin{tikzpicture}

\matrix[matrix of math nodes,
        row 6/.style={nodes={text=gray}}
       ] (word) {
	\ &[1em]	v_1	&	v_2	&	v_3	&	u_1	&	u_2	&	v_4	&	v_5	&	v_6	&	u_3	&	u_4	&	u_5	\\
\hline																							
AP	&	\textsf{o}	&	\textsf{e}	&	\textsf{o}	&	\textsf{o}	&	\textsf{e}, \textsf{end}	&	\textsf{e}	&	\textsf{o}	&	\textsf{e}	&	\textsf{o}	&	\textsf{e},\textsf{end}	&	\textsf{o},\textsf{end}	\\
Σ/\ol{Σ}	&	\ol{a}	&	\ol{b}	&	\ol{c}	&	a	&	b	&	\ol{c}	&	\ol{a}	&	\ol{b}	&	c	&	c	&	a	\\
x	&	2	&	2	&	4	&	2	&	2	&	4	&	6	&	6	&	4	&	4	&	6	\\
y	&	1	&	3	&	3	&	1	&	3	&	5	&	5	&	7	&	3	&	5	&	5	\\
z	&	10&	20&	30&	10&	20&	40&	50&	70&	30&	40&	50\\
};

\draw (word-1-2.north-|word-3-1.east) -- (word-6-1.south-|word-3-1.east);


\draw[rounded corners] ([yshift=-2pt]word-4-2.north west) rectangle ([yshift=2pt]word-4-3.south east);
\draw[rounded corners] ([yshift=-2pt]word-5-3.north west) rectangle ([yshift=2pt]word-5-4.south east);
\draw[rounded corners] ([yshift=-2pt]word-4-4.north east) 
      -- ([yshift=-2pt]word-4-4.north west) 
      -- ([yshift=2pt]word-4-4.south west)
      -- ([yshift=2pt]word-4-4.south east);
\draw[rounded corners] ([yshift=-2pt]word-4-7.north west) 
      -- ([yshift=-2pt]word-4-7.north east) 
      -- ([yshift=2pt]word-4-7.south east)
      -- ([yshift=2pt]word-4-7.south west);

\draw[rounded corners] ([yshift=-2pt]word-5-7.north west) rectangle ([yshift=2pt]word-5-8.south east);
\draw[rounded corners] ([yshift=-2pt]word-4-8.north west) rectangle ([yshift=2pt]word-4-9.south east);

\draw[rounded corners] ([yshift=-2pt]word-5-9.north east) 
      -- ([yshift=-2pt]word-5-9.north west) 
      -- ([yshift=2pt]word-5-9.south west)
      -- ([yshift=2pt]word-5-9.south east);

\draw[rounded corners, densely dotted] ([yshift=-2pt]word-4-5.north west) rectangle ([yshift=2pt]word-4-6.south east);
\draw[rounded corners, densely dotted] ([yshift=-2pt]word-5-6.north east) 
      -- ([yshift=-2pt]word-5-6.north west) 
      -- ([yshift=2pt]word-5-6.south west)
      -- ([yshift=2pt]word-5-6.south east);
\draw[rounded corners, densely dotted] ([yshift=-2pt]word-5-10.north west) 
      -- ([yshift=-2pt]word-5-10.north east) 
      -- ([yshift=2pt]word-5-10.south east)
      -- ([yshift=2pt]word-5-10.south west);

\draw[rounded corners, densely dotted] ([yshift=-2pt]word-4-10.north west) rectangle ([yshift=2pt]word-4-11.south east);
\draw[rounded corners, densely dotted] ([yshift=-2pt]word-5-11.north west) rectangle ([yshift=2pt]word-5-12.south east);

\draw[rounded corners, densely dotted] ([yshift=-2pt]word-4-12.north east) 
      -- ([yshift=-2pt]word-4-12.north west) 
      -- ([yshift=2pt]word-4-12.south west)
      -- ([yshift=2pt]word-4-12.south east);
\end{tikzpicture}
  \caption{Structure of a data word encoding a sequence of tiles $t_1t_2t_3$ 
           with $t_1=(ab,abc)$, $t_2=(cc,cab)$, $t_3=(a,ε)$.
           \label{fig:pcp-encoding}}
\end{figure}

For the subword encoding $u$ the structure is imposed by the following constraints.
\begin{itemize}
  \item Each data value occurs at most twice and not in both attributes $\mathsf{x}$ and $\mathsf{y}$:
        \begin{equation}\label{eqn:values-twice} \begin{split}
          \G(Σ → ↓^\mathsf{x} ((¬\F↑^\mathsf{y}) ∧ ¬\X\F(Σ ∧ ↑^\mathsf{x} ∧ \X\F(Σ ∧ ↑^\mathsf{x})))\\
          ∧ \G(Σ → ↓^\mathsf{y} ((¬\F↑^\mathsf{x}) ∧ ¬\X\F(Σ ∧ ↑^\mathsf{y} ∧ \X\F(Σ ∧ ↑^\mathsf{y})))
        \end{split}\end{equation}
  \item At any odd position (except for the last) the data value for attribute 
        $\mathsf{x}$ occurs again in $\mathsf{x}$ at an even future position and the value of
        attribute $\mathsf{y}$ does never occur again.
        At any even position, the same holds vice versa for $\mathsf{y}$ and $\mathsf{x}$.
        \begin{equation}\label{eqn:repeating-values-u}
          \G\left(
            \begin{array}{ccl}
                & (Σ ∧ o ∧ \X\FΣ) &→ (↓^\mathsf{x}\F(↑^\mathsf{x} ∧ Σ ∧ e)) ∧ ¬↓^\mathsf{y}\X\F(↑^\mathsf{y} ∧\ Σ)\\
              ∧ & (Σ ∧ e)         &→ (↓^\mathsf{y}\F(↑^\mathsf{y} ∧ Σ ∧ o)) ∧ ¬↓^\mathsf{x}\X\F(↑^\mathsf{x} ∧\ Σ)
            \end{array}\right)
        \end{equation}
\end{itemize}
The same restrictions can be formulated analogously for the subword that encodes $v$.
\begin{equation}\label{eqn:repeating-values-v}\begin{aligned}
  & \G(\ol{Σ} → ↓^\mathsf{x} ((¬\F↑^\mathsf{y}) ∧ ¬\X\F(\ol{Σ} ∧ ↑^\mathsf{x} ∧ \X\F(\ol{Σ} ∧ ↑^\mathsf{x})))\\
  ∧ &\G(\ol{Σ} → ↓^\mathsf{y} ((¬\F↑^\mathsf{x}) ∧ ¬\X\F(\ol{Σ} ∧ ↑^\mathsf{y} ∧ \X\F(\ol{Σ} ∧ ↑^\mathsf{y})))\\
  ∧ &\G\left(
      \begin{array}{ccl}
         & (\ol{Σ} ∧ o ∧ \X\F\ol{Σ}) &→ (↓^\mathsf{x}\F(↑^\mathsf{x} ∧ \ol{Σ} ∧ e)) ∧ ¬↓^\mathsf{y}\X\F(↑^\mathsf{y} ∧\ \ol{Σ})\\
       ∧ & (\ol{Σ} ∧ e)         &→ (↓^\mathsf{y}\F(↑^\mathsf{y} ∧ \ol{Σ} ∧ o)) ∧ ¬↓^\mathsf{x}\X\F(↑^\mathsf{x} ∧\ \ol{Σ})
      \end{array}\right)
\end{aligned}\end{equation}

Let $Φ_{\mathsf{chain}}$ denote the conjunction of the formulae from Equations~\ref{eqn:values-twice},~\ref{eqn:repeating-values-u} and~\ref{eqn:repeating-values-v}.

To formalise the guarantee on the structure that we obtain from these constraints let 
\[
  w=v_1…v_{m_1}u_1…u_{n_1}v_{m_1+1}…v_{m_1+m_2}u_{n_1+1}…u_{n_1+n_2}…
\]
for $v_i,u_i∈Γ×Δ^A$ be the encoding of a sequence of tiles $t_1t_2…∈T^+$ with $w⊧Φ_T ∧ Φ_{\mathsf{chain}}$.
Further let $u=u_1u_2…=(a_1,𝐝_1)(a_2,𝐝_2)…$ and $v=v_1v_2…=(b_1,𝐞_1)(b_2,𝐞_2)…$ be the subwords of $w$ encoding the $u$-part and the $v$-part of the tile sequence, respectively.
\begin{lemma}\label{lem:chaining}
  Let $i<k$ be a position in the subword $u$ of $w$ with length $|u|=k$. 
  \begin{enumerate}
    \item If $i$ is \emph{odd} then $𝐝_i(\mathsf{x})=𝐝_{i+1}(\mathsf{x})$.
    \item If $i$ is \emph{even} then $𝐝_i(\mathsf{y})=𝐝_{i+1}(\mathsf{y})$.
    \item For all positions $i,j$ on $u$ we have 
          $(𝐝_i(\mathsf{x}) =𝐝_j(\mathsf{x}) ∧  𝐝_i(\mathsf{y}) = 𝐝_j(\mathsf{y})) ⇒ i = j$.
  \end{enumerate}
  The same holds analogously for $v$.
\end{lemma}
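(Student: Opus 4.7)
The plan is to focus entirely on the subword $u$ of $Σ$-positions; the statements for $v$ follow by the same argument using Formula~\ref{eqn:repeating-values-v} in place of Formula~\ref{eqn:repeating-values-u}, so I do not rewrite them. The key inputs extracted from Formulae~\ref{eqn:values-twice} and~\ref{eqn:repeating-values-u}, read off a $u$-index $i$, are: (a) the value $𝐝_i(\mathsf{x})$ occurs at most twice among the $\mathsf{x}$-components of $u$-positions and never among the $\mathsf{y}$-components (and symmetrically for $𝐝_i(\mathsf{y})$); (b) if $i$ is odd with $i<k$, then $𝐝_i(\mathsf{x})$ reoccurs at some strictly later even $u$-position, whereas $𝐝_i(\mathsf{y})$ never reoccurs in $\mathsf{y}$ at any later $u$-position; (c) if $i$ is even, then $𝐝_i(\mathsf{y})$ reoccurs at some strictly later odd $u$-position, whereas $𝐝_i(\mathsf{x})$ never reoccurs in $\mathsf{x}$ at any later $u$-position. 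Parities of $u$-positions alternate by Formula~\ref{eqn:evenodd-induction-u}, with position $1$ being odd by Formula~\ref{eqn:evenodd-base}.

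For Part~1, fix odd $i<k$. From~(b) pick an even $u$-position $j>i$ with $𝐝_j(\mathsf{x})=𝐝_i(\mathsf{x})$; by~(a) this $j$ is unique, so the assignment $\mathrm{part}(i):=j$ yields a map from odd $u$-indices strictly less than $k$ to even $u$-indices. This map is injective: a collision $\mathrm{part}(i_1)=\mathrm{part}(i_2)=e$ with $i_1\neq i_2$ would place the common $\mathsf{x}$-value at the three positions $i_1$, $i_2$, and $e$, contradicting~(a). I then perform a descending induction on odd $i<k$ to prove $\mathrm{part}(i)=i+1$. For the largest odd $i<k$, the alternation of parities leaves $i+1$ as the only even $u$-position strictly greater than $i$, so $\mathrm{part}(i)=i+1$. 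In the inductive step, the even $u$-positions strictly greater than $i$ are $i+1,i+3,\ldots$; the inductive hypothesis fixes $\mathrm{part}(i')=i'+1$ for every larger odd $i'<k$, and injectivity of $\mathrm{part}$ excludes every $i'+1$ from being the value at $i$, leaving $i+1$ as the unique remaining candidate.

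Part~2 is entirely symmetric, with the roles of $\mathsf{x}/\mathsf{y}$ and odd/even swapped: from~(c), each even $i<k$ has a unique partner in $\mathsf{y}$ among later odd $u$-positions, injectivity follows from~(a) together with the non-reoccurrence clause in~(b), and the same descending induction forces that partner to be $i+1$. For Part~3, suppose $𝐝_i(\mathsf{x})=𝐝_j(\mathsf{x})$ and $𝐝_i(\mathsf{y})=𝐝_j(\mathsf{y})$ with $i\neq j$, and WLOG $i<j$. The non-reoccurrence clause in~(c) forbids $i$ from being even (otherwise $𝐝_i(\mathsf{x})$ would reoccur at the later $u$-position $j$), so $i$ is odd; then Part~1 together with~(a) forces $j=i+1$. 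Consequently $𝐝_{i+1}(\mathsf{y})=𝐝_i(\mathsf{y})$, which directly contradicts the non-reoccurrence of $𝐝_i(\mathsf{y})$ in $\mathsf{y}$ at later $u$-positions asserted in~(b) (note that $i<k$ because $j=i+1\le k$).

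The main obstacle I anticipate is executing the descending induction in Parts~1 and~2 cleanly: one must juggle three ingredients—the positive ``reoccurs in the future'' clause for one parity, the negative ``does not reoccur'' clause for the opposite parity, and the global ``at most twice'' bound—and exploit the strict alternation of $u$-parities to conclude that, at each step, the only even (resp.\ odd) position not already reserved as some later index's partner is the immediate successor. Once this alignment is in place, Part~3 reduces to a single clash between $𝐝_i(\mathsf{y})=𝐝_{i+1}(\mathsf{y})$ and the non-reoccurrence guarantee at odd $i$, and the $v$-part follows verbatim from the corresponding clauses in Formula~\ref{eqn:repeating-values-v}.
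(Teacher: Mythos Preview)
Your proposal is correct and follows essentially the same strategy as the paper: a backward (descending) induction on $u$-positions for Parts~1 and~2, using the ``at most twice'' constraint from Formula~\ref{eqn:values-twice} to force the future partner of each position to be its immediate successor, and then a short parity-based case analysis for Part~3. Your packaging via the explicit map $\mathrm{part}$ and its injectivity is a cosmetic variant of the paper's direct argument (the paper shows that every even $j>i+1$ already has its $\mathsf{x}$-value accounted for by the induction hypothesis at $j-1$), and your Part~3 argument by contradiction reaches the same conclusion as the paper's direct bound $j\in\{i,i+1\}$ followed by exclusion of $i+1$.
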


\begin{proof}
  Given the formula $Φ_T$ it is easy to see that the even and odd positions in the subwords $u$ and $v$ are correctly marked by the respective propositions.
We present the proof only for $u$ since it is identical for $v$.
  
\textbf{1.+2.)}
  Assume $u$ has length $k$. We proceed by induction on the positions $i$, backward from $k-1$ down to 1. 

  \emph{Base case $i=k-1$.}\quad
  The length $k$ of $u$ needs to be odd, otherwise there is no even future position and Formula~\ref{eqn:repeating-values-u} is violated.
  Hence $i=k-1$ is even and Formula~\ref{eqn:repeating-values-u} ensures that the value $𝐝_{k-1}(\mathsf{y})$ is repeated in attribute $\mathsf{y}$, leaving $𝐝_k$ as only choice.
  
  \emph{Induction.}\quad
  Assume for $i+1<k-1$ the statement holds. Assume $i$ is odd. Since the position $i+1$ exists, by Formula~\ref{eqn:repeating-values-u} there is a position $j>i$ such that $𝐝_i(\mathsf{x})=𝐝_j(\mathsf{x})$.
  Now for every even $j>i+2$ the induction hypothesis holds for $j-1$, being odd. I.e. $𝐝_{j-1}(\mathsf{x})=𝐝_j(\mathsf{x})$. 
  Since the value $𝐝_j(\mathsf{x})$ can only occur at most twice for $\mathsf{x}$ in $u$ (Formula~\ref{eqn:values-twice}), we have $𝐝_i(\mathsf{x})≠𝐝_j(\mathsf{x})$, leaving $j=i+1$ as only choice.

  Assume $i$ is even. 
  Formula~\ref{eqn:repeating-values-u} requires that $𝐝_i(\mathsf{y})=𝐝_j(\mathsf{y})$ for some odd $j>i$.
  Again, for any odd $j>i+1$ the induction hypothesis holds for $j-1≥i+1$, being even. 
  We have $𝐝_{j-1}(\mathsf{y}) =𝐝_j(\mathsf{y})$ and thus $𝐝_i(\mathsf{y})≠𝐝_j(\mathsf{y})$ for any odd $j>i+1$.
  Therefore only position $j=i+1$ remains to carry the same value for $\mathsf{y}$.
  
\textbf{3.)} Let $𝐝_i(\mathsf{x})=𝐝_j(\mathsf{x})$ and $𝐝_i(\mathsf{y})=𝐝_j(\mathsf{y})$. Assume $i$ is odd.
Then, $𝐝_i(\mathsf{x})=𝐝_{i+1}(\mathsf{x})$ (see above) and, by Formula~\ref{eqn:repeating-values-u}, $∀_{i'>i+1}: 𝐝_{i'}(\mathsf{x})≠𝐝_i(\mathsf{x})$. 
Hence $i≤j≤i+1$.
Moreover, $∀_{i'>i}: 𝐝_{i'}(\mathsf{y})≠𝐝_i(\mathsf{y})$ and thus $j=i$.
For even $i$ the argument holds analogously.  
\end{proof}

\subparagraph{Synchronising $u$ and $v$.}

Now that the encoding of $u$ and $v$ is set up, we enforce that 
\begin{enumerate}
  \item every position in $v$ matches a unique position in $u$,
  \item the first $v$ position matches the first $u$ position and
  \item for any two consecutive positions in $v$ the corresponding
        matching positions in $u$ are also consecutive. Finally,
  \item the last position in $v$ matches the last position in $u$.
\end{enumerate}

This is accomplished by the formula $Φ_{\textsf{sync}}$ being the conjunction of the three formulae
  \begin{gather}
      (↓^\mathsf{z}\X^{|\hat{v}|}↑^\mathsf{z}),\label{eqn:init-positions-match}\\
      ⋀_{\ol{a}∈\ol{Σ}} \G(\ol{a} → ↓^\mathsf{z}(\X\F(a\ ∧ ↑^\mathsf{z}))),\label{eqn:z-match}\\
      \G\big((\ol{Σ} ∧ ¬\X\F\ol{Σ}) → ↓^z(\X\F(↑^z ∧ ¬\X\true))\big),\label{eqn:last-pos-match}
  \end{gather}
where $\hat{v}$ is part of the fixed initial tile $\hat{t}=(\hat{u},\hat{v})$.
The formula specifies that 
\begin{itemize}
  \item each set $｛𝐞_i(\mathsf{x}),𝐞_i(\mathsf{y})｝$ of values occurring at some position $i$ in $v$ occurs again at a position in $u$ with the same (encoded),
  \item the data values at the first positions in $v$ and $u$ coincide and
  \item the data values at the last positions in $v$ and $u$ coincide.
\end{itemize}

For easier reading let $\ol{(a,𝐝)} := (\ol{a},𝐝)∈\ol{Σ}×Δ$ for $(a,𝐝)∈Σ×Δ$.
The essential observation is now the following.

\begin{lemma}\label{lem:sync}
Let $w⊧Φ_T ∧ Φ_{\mathsf{chain}} ∧ Φ_{\mathsf{sync}}$ be a data word and $w'$ its projection to the alphabet $(Σ∪\ol{Σ})×Δ$.
Let $u=u_1…u_{|u|}$ and $v=v_1…v_{|v|}$ be the maximal subwords of $w'$ over $Σ×Δ$ and $\ol{Σ}×Δ$, respectively.
Then, for all $0<i≤|v|$ we have $i≤|u|$ and $v_i=\ol{u_i}$.
\end{lemma}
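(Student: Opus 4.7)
The plan is to exhibit, by induction on $i \le |v|$, a position $\pi(i) \in \{1, \dots, |u|\}$ such that the valuations $\mathbf{e}_i$ and $\mathbf{d}_{\pi(i)}$ coincide componentwise on $\mathsf{x}$, $\mathsf{y}$, $\mathsf{z}$ and the letter of $v_i$ is the barred version of the letter of $u_{\pi(i)}$, and then to show that in fact $\pi(i) = i$. Both conclusions of the lemma follow from this.

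The first step is an observation about $\uparrow^\mathsf{z}$. Since $\cl(\mathsf{z}) = \{\mathsf{x}, \mathsf{y}, \mathsf{z}\}$ is the only downward-closure of size three, the witness attribute in the semantics of $\uparrow^\mathsf{z}$ after a $\downarrow^\mathsf{z}$ freeze must again be $\mathsf{z}$, and the only two candidate order-preserving bijections on $\cl(\mathsf{z})$ are the identity and the one swapping $\mathsf{x}$ and $\mathsf{y}$. The swap alternative is globally ruled out by Formula~\ref{eqn:values-twice}: invoking it would place one and the same data value in attribute $\mathsf{x}$ at the storing position and in $\mathsf{y}$ at the checking position, which $\Phi_{\mathsf{chain}}$ explicitly forbids. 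Hence every satisfied instance of $\uparrow^\mathsf{z}$ in $w$ enforces componentwise equality of the $\mathsf{x}$-, $\mathsf{y}$- and $\mathsf{z}$-values.

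For $i = 1$, Formula~\ref{eqn:init-positions-match} together with the prefix structure fixed by $\Phi_T$ places $v_1$ at word-position $1$ and $u_1$ at word-position $1 + |\hat v|$ and forces their valuations to coincide; so $\pi(1) = 1$. For the inductive step, assume $\pi(i) = i$. Formula~\ref{eqn:z-match} applied at $v_{i+1}$ delivers some strictly later $u$-position $u_k$ whose valuation equals $\mathbf{e}_{i+1}$ and whose letter is the debarred letter of $v_{i+1}$. Suppose $i$ is odd (the even case is symmetric, with $\mathsf{y}$ in place of $\mathsf{x}$). Lemma~\ref{lem:chaining} applied to $v$ gives $\mathbf{e}_i(\mathsf{x}) = \mathbf{e}_{i+1}(\mathsf{x})$, and applied to $u$ gives $\mathbf{d}_i(\mathsf{x}) = \mathbf{d}_{i+1}(\mathsf{x})$. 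Combined with the inductive hypothesis $\mathbf{e}_i = \mathbf{d}_i$ and the new equality $\mathbf{d}_k(\mathsf{x}) = \mathbf{e}_{i+1}(\mathsf{x})$, this common value occurs in $\mathsf{x}$ on the $u$-side at the three positions $i$, $i+1$, and $k$. Formula~\ref{eqn:values-twice} caps such occurrences at two, forcing $k \in \{i, i+1\}$; and $k = i$ would make $v_{i+1}$ and $v_i$ share their full valuation, hence their $(\mathsf{x}, \mathsf{y})$-pair, contradicting Lemma~\ref{lem:chaining}(3) applied to $v$. Therefore $k = i + 1$.

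The letter equality $v_i = \overline{u_i}$ is then a direct byproduct of the $a/\bar a$ clause of Formula~\ref{eqn:z-match}, and $i \le |u|$ is automatic from $\pi(i) = i$ lying in $\{1, \dots, |u|\}$. The main obstacle I anticipate lies in ruling out, uniformly along the word, the $\mathsf{x}$/$\mathsf{y}$ swap isomorphism hidden in the ``there exists $y$'' quantifier in the semantics of $\uparrow$; once this is secured, the inductive step driven by the parity alternation of Lemma~\ref{lem:chaining} becomes routine bookkeeping.
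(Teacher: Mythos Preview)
Your proof follows the paper's own argument: induction on $i$, base case via Formula~\ref{eqn:init-positions-match}, inductive step using Lemma~\ref{lem:chaining} to propagate one coordinate, Formula~\ref{eqn:z-match} to produce the matching $u$-position, the at-most-twice bound from $\Phi_{\mathsf{chain}}$ to narrow the candidates to two, and Lemma~\ref{lem:chaining}(3) to discard one of them. Your explicit handling of the $\mathsf{x}/\mathsf{y}$-swap isomorphism hidden in the semantics of $\uparrow^{\mathsf{z}}$ is a point the paper leaves implicit here (it is spelled out only later, in the proof of Theorem~\ref{thm:undecidability}).

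One small circularity to watch: when you apply Lemma~\ref{lem:chaining} to $u$ at the odd index $i$ to get $\mathbf{d}_i(\mathsf{x})=\mathbf{d}_{i+1}(\mathsf{x})$, you need $i<|u|$, but at that stage you only know $i\le|u|$ from the hypothesis $\pi(i)=i$. The paper's indexing ($i-1\to i$) avoids this because the relevant index $i-1$ is even while $|u|$ is odd, so $i-1<|u|$ is automatic. In your indexing the case $i=|u|$ must be ruled out separately: first derive $\mathbf{d}_k(\mathsf{x})=\mathbf{e}_{i+1}(\mathsf{x})=\mathbf{e}_i(\mathsf{x})=\mathbf{d}_i(\mathsf{x})$ and exclude $k=i$ without touching $u_{i+1}$; then observe that if $i=|u|$ the value $\mathbf{d}_i(\mathsf{x})$ would occur at no other $u$-position (every earlier occurrence is already paired by Lemma~\ref{lem:chaining} and capped at two by Formula~\ref{eqn:values-twice}), contradicting the existence of $k\neq i$.
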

That is, the $i$-th position in the $v$-part of $w$ corresponds to the $i$-th position in the $u$-part and thus $v$ encodes a prefix of the $u$-part.
Since the last position in $v$ must corresponds to the last position in $u$ (Equation~\ref{eqn:last-pos-match}), $v$ and $u$ must encode the \emph{same} sequence of letters and $w$ therefore a solution to the PCP $T$.
On the other hand, given a solution for $T$, we can easily encode it according to the scheme depicted in Figure~\ref{fig:pcp-encoding} where corresponding positions in $u$ and $v$ can be linked appropriately. 
This encoding satisfies (by construction) all the constraints imposed by the formulae above.
Hence, by proving Lemma~\ref{lem:sync} we complete the proof of Lemma~\ref{lem:qo-undecidable}.

\begin{proof}[Lemma~\ref{lem:sync}]
Let $u=(a_1,𝐝_1)…(a_ℓ,𝐝_ℓ)$ and $v=(b_1,𝐞_1)…(b_k,𝐞_k)$.
We proceed by induction on $i$.

\emph{Base case ($i=1$).} For the initial tile $\hat{t}=(\hat{u},\hat{v})$ we assumed that $|\hat{v}|≥1$ and $|\hat{u}|≥1$ so $1$ is a position in $u$ as well as in $v$. 
Equation~\ref{eqn:init-positions-match} requires that $v_1=\ol{u_1}$.

\emph{Induction ($i>1$).} Assume $i≤|v|$ is a position in $v$.
Thus, all $0<j<i$ are positions in $v$ and by the induction hypothesis (IH) also positions in $u$ with $v_j=u_j$.

Assume $i$ is odd. We have
\[
  𝐝_i(\mathsf{y}) 
  \stackrel{\text{Lem.~\ref{lem:chaining}}}{=} 𝐝_{i-1}(\mathsf{y}) 
  \stackrel{\text{IH}}{=} 𝐞_{i-1}(\mathsf{y})
  \stackrel{\text{Lem.~\ref{lem:chaining}}}{=} 𝐞_i(\mathsf{y})
\]

Let $v_i=(b_i,𝐞_i) = (\ol{a},𝐞_i)$ for some $a∈Σ$. 
By Equation~\ref{eqn:z-match} thre must be a position $u_j=(a_j,𝐝_j)$ in $u$ where $a_j = a$ and $𝐝_j=𝐞_i$.

By $Φ_{\mathsf{chain}}$, there can be at most two positions $j$ with $𝐝_j(\mathsf{y})=𝐞_i(\mathsf{y})$ and we have already $𝐝_{i-1}(\mathsf{y}) = 𝐝_i(\mathsf{y}) = 𝐞_i(\mathsf{y})$.
Hence, $j∈｛i,i-1｝$.
That is, either $\ol{u_i}=v_i$ or $\ol{u_{i-1}}=v_i$.
The latter can be excluded since 
\[
  \ol{u_{i-1}} \stackrel{\text{IH}}{\ =\ } v_{i-1} \stackrel{\text{Lem.~\ref{lem:chaining}}}{=} v_i.
\]
If $i$ is assumed to be even the same arguments apply when exchanging attribute $\mathsf{y}$ by attribute $\mathsf{x}$.
\end{proof}

\begin{remark}
      Notice that we do not rely on using an until operator.
      Instead, we can replace it by a bounded version $\U^{≤k}$ that in turn can be replaced by a finite unfolding only using nested \X operators.
      The relevant range can be bound by the length of the tiles in $T$ as
      \[
        k≥2·\max｛|r|｜∃_s(r,s)∈T ∨ (s,r)∈T｝.
      \]
      Thus, we take $k$ to be at least as large as the longest consecutive pair $\ol{v_i}u_i$ or $u_i\ol{v_{i+1}}$ in the encoding of a solution could possibly be.
      This is a bound on the distance between two position in the encoding that are consecutive in $u$ or $v$.
      Hence, only the operators \X and \F are essential.
\end{remark}

\subsection{General Case}

We can now complete the proof of Theorem \ref{thm:undecidability}.

\begin{proof}[Theorem \ref{thm:undecidability}]
Lemma~\ref{lem:qo-undecidable} established undecidability for the essential case of a non-tree-quasi-ordering.
It remains to conclude that this results generalises to arbitrary non-tree-quasi-orderings.

Let $(A,⊑)$ be the quasi-ordering defined in Lemma~\ref{lem:qo-undecidable}.
First of all, $(A,⊑)$ is not a tree-quasi-ordering since the downward-closure $\cl(\mathsf{z})$ of $\mathsf{z}$ is not quasi-linear (total).
Moreover, every non-tree-quasi-ordering $(A',⊑')$ has a subset that is isomorphic to $A$:
By definition $A'$ must contain an element $z'$ of which the downward-closure is not quasi-linear and must hence contain two incomparable elements $x'⊑z'$ and $y'⊑z'$. 
Hence from now on we assume w.l.o.g.\ that $A⊆A'$ by identifying $\mathsf{x},\mathsf{y},\mathsf{z}$ with $x',y',z'$, respectively.

We now show that the formula $Φ$ constructed to prove Lemma~\ref{lem:qo-undecidable} is satisfiable over $A$-attributed data words if and only if it is satisfiable when being interpreted over $A'$-attributed data words.
  
\subparagraph{($⇒$)} Consider an $A$-attributed data word $w$ satisfying $Φ$.
Choose a data value $e∈Δ$ that does not occur in $w$ and extend $w$ to an $A'$-attributed data word $w'$ by assigning $e$ to every attribute $p∈A'∖A$ at every position in $w'$.
  This does not change the satisfaction relation because $Φ$ still only uses attributes from $A$ and the evaluation of formulae $↑^r$ for $r∈A$ is not affected:
For $w=(a_1,𝐝_1)…(a_n,𝐝_n)$, $w'=(a_1,𝐝'_1)…(a_n,𝐝'_n)$, $0<i≤j≤n$, $r'∈A$ we have  
\[
  (w,j,𝐝_i|_r) ⊧↑^{r'} \quad ⇔ \quad (w',j,𝐝'_i|_r) ⊧↑^{r'}.
\]
\begin{enumerate}[i)]
  \item Let $r=r'∈｛\mathsf{x},\mathsf{y}｝$. 
    Notice that $(w,j,𝐝_i|_r) ⊧↑^{r}$ iff $𝐝_i(r) = 𝐝_j(r)$.
    Then $𝐝_i(r) = 𝐝_j(r)$ implies that 
    $∃_{p∈A'}: 𝐝'_i|_r|_p≃ 𝐝'_j|_r$ since for $p=r$ the
    restrictions are isomorphic as all other attributes in $\cl(r)$ are always 
    mapped to $e$.
    Conversely, if $∃_{p∈A'}: 𝐝'_i|_r|_p≃ 𝐝'_j|_r$ then it 
    can only be the case for some $p$ such that $\cl(p) = \cl(r)$.
    Since $𝐝_j(q)=𝐝_i(q)=e≠𝐝_i(r)$ for all $q∈\cl(r)∖｛r｝$ the valuations can only be isomorphic if $𝐝_j(r) = 𝐝_i(r)$.    

  \item Let $r=r'=\mathsf{z}$. We have that  $(w,j,𝐝_i|_\mathsf{z}) ⊧↑^{\mathsf{z}}$ iff 
        $𝐝_i(\mathsf{z}) = 𝐝_j(\mathsf{z})$ and $｛𝐝_i(\mathsf{x}),𝐝_i(\mathsf{y})｝=｛𝐝_j(\mathsf{x}),𝐝_j(\mathsf{y})｝$.
    In our case, the models of $Φ$ only admit disjoint values for attributes 
    $\mathsf{x}$ and $\mathsf{y}$ (cf.\ Formulae~\ref{eqn:values-twice} and~\ref{eqn:repeating-values-v} in the proof of Lemma~\ref{lem:qo-undecidable}).
    Thus, $𝐝_i(\mathsf{x}) = 𝐝_j(\mathsf{x})$ and $𝐝_i(\mathsf{y}) = 𝐝_j(\mathsf{y})$.
    This further implies $∃_{p∈A'}: 𝐝'_i|_\mathsf{z}|_p≃ 𝐝'_j|_\mathsf{z}$ witnessed by choosing $p=\mathsf{z}$ since all other attributes are evaluated to $e$ by $𝐝'_i$ and $𝐝_j'$.
    Moreover, the opposite direction holds for the same reason.
    
  \item Let $r'=\mathsf{x}$ and $r=\mathsf{y}$ or vice versa. Again 
        $(w,j,𝐝_i|_r) ⊧↑^{r'}$ iff $𝐝_i(r) = 𝐝_j(r')$, which however, 
        cannot be true due to $\mathsf{x}$ and $\mathsf{y}$ being assigned disjoint sets of values 
        in every model.
        On the other hand, assume there is $p∈A'$ s.t.\ 
        $𝐝'_i|_r|_p≃ 𝐝'_j|_{r'}$. 
        Clearly the witnessing isomorphism must map $r$ to $r'$ since they are 
        not assigned the value $e$.
        Then, however, $𝐝_i(r) = 𝐝_j(r')$ which violates $Φ$.          
\end{enumerate}
The remaining cases do not occur in $Φ$ (and would evaluate to false anyway).
We conclude that if $w$ is a model for $Φ$ then $w'$ is as well.

\subparagraph{($⇐$)} Consider an $A'$-attributed data word $w'$ satisfying $φ$ and let 
$\ol{Δ_{w'}}⊆Δ$
be an enumerable set of data values not occurring in $w'$.
Let $f: Δ_⊥^{A'}\!\!/\!\!≃\ ↪ \ol{Δ_{w'}}$ be an injection from the $≃$-equivalence classes of data valuations to data values uniquely representing them.
We can then construct an $A$-attributed model $w$ for $φ$ from $w'$ by erasing all attributes except for $\mathsf{x},\mathsf{y},\mathsf{z}$ and let $𝐝_i(p) := f([𝐝_i'|_p]_≃)$ for $p∈A$ where $[𝐝]_≃$ denotes the $≃$-equivalence class of a data valuation $𝐝$.
Intuitively, at any position in $w'$, we just collapse the structure of data values to a single one representing its equivalence class.
By similar arguments as above, we can again show that 
\[
  (w,j,𝐝_i|_r) ⊧↑^{r'} \quad ⇔ \quad (w',j,𝐝'_i|_r) ⊧↑^{r'}.
\]
\begin{enumerate}[i)]
  \item For $r=r'$ we have that $∃_{p∈A'}: 𝐝'_i|_r|_p≃ 𝐝'_j|_r$ iff $𝐝'_i|_r≃ 𝐝'_j|_r'$ iff $[𝐝'_i|_r]_≃ = [𝐝'_j|_r]_≃$ iff 
  $𝐝_i(r)=𝐝_j(r)$.
  \item For $r=\mathsf{x}$ and $r'=\mathsf{y}$ or vice versa $∃_{p∈A'}: 𝐝'_i|_r|_p≃ 𝐝'_j|_r$ cannot be true in a model for $φ$ and this being false implies equally $𝐝_i(r)≠𝐝_j(r')$.  
\end{enumerate}
Again, other cases do not apply.
\end{proof}

\section{Nested Counter Systems}

\subsection{Upper Bound for NCS Coverability}
\label{apx:ncs-upper-bound}

Recall Proposition~\ref{prp:ncstopcs}.

\prpncstopcs*

The statement can be proven by a direct reduction to coverability (equivalently, control-state reachability) in priority channel systems (PCS) that we briefly recall from \cite{DBLP:journals/corr/abs-1301-5500} in the following.

\subparagraph{Priority Channel Systems.} 

PCS can be defined over so called \emph{generalised priority alphabets}.
Given a \emph{priority level} $d∈ℕ$ and a well-quasi-ordering $(Γ,≤_Γ)$ a generalised priority alphabet is a set $Σ_{d,Γ} := ｛(a,w)｜0≤w≤d, w∈Γ｝$.
Then, a PCS is a tuple $S = (Σ_{d,Γ}, \mathsf{Ch}, Q, Δ)$, where $\mathsf{Ch}$ is a finite set of \emph{channel names}, $Q$ is a finite set of \emph{control states} and $Δ⊆Q×\mathsf{Ch}×｛!,?｝×Σ_{d,Γ}×Q$ is a set of \emph{transition rules}.
The \emph{semantics} of PCS is defined as a transition system over configurations $\mathrm{Conf}_S := Q×(Σ_{d,Γ}^*)^{\mathsf{Ch}}$ consisting of a control state and a function assigning to every channel a sequence of messages (letters from the generalised priority alphabet) it contains.
A PCS can either execute one of its transition rules or an internal “lossy” operation called a \emph{superseding step}.
A (writing) transition rule of the form $(q,c,!,(a,w),q')$ is performed by changing the current control state $q$ to $q'$ and appending the letter $(a,w)$ to the content of channel $c$.
A (reading) transition rule of the form $(q,c,?,(a,w),q')$ is performed by changing the current control state $q$ to $q'$ and removing the letter $(a,w)$ from the first position of channel $c$.
An internal superseding step is performed by overriding a letter by a subsequent letter with higher or equal priority, i.e.\ the channel content $(a_1,w_1)…(a_i,w_i)(a_{i+1},w_{i+1})…(a_k,w_k)$ with $w_i≤w_{i+1}$ can be replaced by $(a_1,w_1)…(a_{i-1},w_{i-1})(a_{i+1},w_{i+1})…(a_k,w_k)$.

\subparagraph{Encoding.} 

The semantics of NCS is defined in terms of rewriting rules on configurations represented as terms.
A PCS can simulate this semantics by keeping the top-level state in its finite control
and storing the term representation of the nested multiset (with an additional marker at the end) in a single channel.
The rewriting rules of the semantics can be applied by alternately reading from and writing to that channel.
To simulate one step of the NCS the PCS guesses the rule that should be applied and then starts to loop one time through the channel.
Going through the channel, the PCS guesses the positions where the rule matches and at the same time writes the corresponding messages back to the channel.
During one iteration, the PCS has to keep track of hte guessed transition and up to which nesting level it already has been applied.

As PCS are lossy the only objective is to ensure that lossiness with respect to the PCS semantics corresponds to descending with respect to $⪯$ for the encoded NCS configurations.
This can be easily ensured by encoding the nesting structure of an NCS configuration using priorities where the highest priority corresponds to the outermost nesting level.
E.g., the 3-NCS configuration 
\[ q_0(
   q_1 
 + q_1(q_2 + q_2) 
 + q_1(q_2 + q_2 + q_3(q_4)))
\]
can be encoded as 
\[
(q_1,2)(q_1,2) (q_2,1) (q_2,1) (q_1,2)(q_2,1) (q_2,1) (q_3,1) (q_4,0) (\$,2)
\]
while $q_0$ is encoded using the control state and $\$$ marks the end of the encoding.
A superseding step then always corresponds to removing an element from an innermost multiset.

\subparagraph{Complexity.}
The above encoding gives us a PCS with $k$ priorities (maximal priority $d = k-1$), one channel, a number of control states $s$ polynomial in $k$ and the number of states and transitions of the NCS and a size of the alphabet $Γ$ linear in the number of the states of the NCS.

The upper bound on the complexity of PCS control-state reachability is proved in \cite{DBLP:journals/corr/abs-1301-5500} by providing a bound on the so-called length function $L_{Σ^*_{p,Γ}}$.
It measures the length controlled bad sequences in the well-quasi-ordered set of channel configurations $Σ^*_{p,Γ}$.
Specifically, the proof of \cite[Corollary~4.2]{DBLP:journals/corr/abs-1301-5500} provides a bound on the length function for the well-quasi-ordering $Σ^*_{k-1,Γ}$ on PCS configurations with a single channel: $L_{Σ^*_{k-1,Γ}}(n) ≤ H^{(Ω_{2k+1})_{|Γ|}}(n)$. 
Taking into account the control states, following \cite[Section~4.3]{DBLP:journals/corr/abs-1301-5500}, this yields a bound $H^{(Ω_{2k+1})_{|Γ|}·s}(n)$ on the length function for configurations of the PCS that we construct from an NCS as outlined above.
The coverability problem of $k$-NCS is hence contained in $𝐅_{(Ω_{2k})_{|Γ|}+1}$.
For simplicity, we use the larger class $𝐅_{Ω_{2k}}$ in the statement of Proposition~\ref{prp:ncstopcs}.

\subsection{Lower Bound for NCS Coverability}\label{apx:ncshardness}
In this section we give the detailed constructions proving Theorem~\ref{thm:ncshardness}.
As we have discussed above, we need a construction fulfilling Lemma~\ref{lem:ncshardness1} and Lemma~\ref{lem:ncshardness2}.

\subparagraph{Auxilary operations.}
To this end, we extend the NCS with two auxiliary operations $\cp$ and $\min$.
The semantics of the operation $(q_1, …, q_l) \cp (q_1', …, q_l')$ can be given by the rewriting rule
\begin{align*}
  &q_1(X_1 + q_2(X_2 + … q_l(X_l)) + q'_2(X'_2 + … q'_{l-1}(X'_{l-1})))\\
→ &q'_1(X_1 + q_2(X_2 + … q_{l-1}(X_{l-1} + q_l(X'')))\\
  &+ q'_2(X'_2 + … q'_{l-1}(X'_{l-1} + q'_l(X'))))
\end{align*}
where $X' ⪯ X_l$ and $X'' ⪯ X_l$.
The operation copies the multiset marked by $q_2,…,q_l$ ``lossily'' to a multiset marked by $q_2', …, q_l'$.
Consider the configuration $q_1(q_2(q_3+q_3)+q_4(q_5))$ and the copy rule $(q_1,q_2)\cp(q_6,q_7)$.
The rule would select the part $q_2(q_3+q_3)$ and copy it, changing its label to $q_7$ and the control state to $q_6$ resulting in $q_6(q_2(q_3+q_3)+q_4( q_5)+q_7(q_3+q_3))$.

The operation $(q_1, …, q_l) \min (q_1', …, q'_l)$ can be seen as the inverse operation. Its semantics can be given by
\begin{align*}
    &q_1(X_1 + q_2(X_2 + … q_l(X_l)) + q'_2(X'_2 + … q'_l(X'_l)))\\
  → &q'_1(X_1 + q_2(X_2 + … q_{l-1}(X_{l-1})) + q'_2(X'_2 + … q'_{l-1}(X'_{l-1} + q'_l(X'))))
\end{align*}  
where $X' ⪯ X_l$ and $X' ⪯ X_l'$.
It deletes the multiset marked by $q_2, …, q_l$ and replaces the multiset marked by $q_2',…,q_l'$ with the minimum of both (or a smaller multiset).
Consider the configuration $q_1(q_2(q_3+q_4)+q_5(q_3+q_6))$ and the rule $(q_1,q_2)\min(q_7,q_5)$.
The rule would remove the part $q_2(q_3+q_4)$, replace $q_5(q_3+q_6)$ by the minimum and change the control state to $q_7$ resulting in $q_7(q_5(q_3))$.
Both operations can be implemented using standard NCS transition rules and do thus not extend the computational power of NCS.

\subparagraph{Implementing $\cp$.}
A copy rule $t = (q_1, …, q_l) \cp (q_1', …, q_l')$ can be implemented as follows: (The variables $r_i$ and index $q$ are universially quantified over all states)
\begin{align*}
  (q_1, …, q_l) &δ (\mathsf{cpi}_{t,q_l}, q_2, …, q_{l-1}, \mathsf{i})\\
  (\mathsf{cpi}_{t,q}, q_2, …, q_{l-1}) &δ (\mathsf{cpi}_{t,q}', q_2, …, q_{l-1}, \mathsf{o}_1)\\
  (\mathsf{cpi}_{t,q}', q_2', …, q_{l-1}') &δ (\mathsf{cp}_{t,q}, q_2', …, q_{l-1}', \mathsf{o}_2)\\
  (\mathsf{cp}_{t,q}, q_2, …, q_{l-1}, r_1, …, r_m, \mathsf{o}_1) &δ (\mathsf{cpd}_{t,q}, q_2, …, q_{l-1}, r_1, …, r_m, q, \mathsf{o}_1)\\
  (\mathsf{cpd}_{t,q}, q_2', …, q_{l-1}', r_1, …, r_m, \mathsf{o}_2) &δ (\mathsf{cpd}_{t,q}', q_2', …, q_{l-1}', r_1, …, r_m, q, \mathsf{o}_2)\\
  (\mathsf{cpd}_{t,q}', q_2, …, q_{l-1}, r_1, …, r_m, \mathsf{i}, r_{m+1}) &δ (\mathsf{cp}_{t,r_{m+1}}, q_2, …, q_{l-1}, r_1, …, r_m, q, \mathsf{i})\\
  (\mathsf{cp}_{t,q}, q_2, …, q_{l-1}, r_1, …, r_m, r_{m+1}, \mathsf{o}_1) &δ (\mathsf{cpu}_{t,q}, q_2, …, q_{l-1}, r_1, …, r_m, \mathsf{o}_1, q)\\
  (\mathsf{cpu}_{t,q}, q_2', …, q_{l-1}', r_1, …, r_m, r_{m+1}, \mathsf{o}_2) &δ (\mathsf{cpu}_{t,q}', q_2', …, q_{l-1}', r_1, …, r_m, \mathsf{o}_2, q)\\
  (\mathsf{cpu}_{t,q}', q_2, …, q_{l-1}, r_1, …, r_m, r_{m+1}, \mathsf{i}) &δ (\mathsf{cp}_{t,r_{m+1}}, q_2, …, q_{l-1}, r_1, …, r_m, \mathsf{i})\\
  (\mathsf{cp}_{t,q}, q_2, …, q_{l-1}, \mathsf{i}) &δ (\mathsf{cpf}_t, q_2, …, q_{l-1})\\
  (\mathsf{cpf}_t, q_2, …, q_{l-1}, \mathsf{o}_1) &δ (\mathsf{cpf}_t', q_2, …, q_l)\\
  (\mathsf{cpf}_t', q_2', …, q_{l-1}', \mathsf{o}_2) &δ (q_1', …, q_l')
\end{align*}
The construction works in a depth-first-search fashion using a symbol $\mathsf{i}$ to mark the set, that is currently copied (and subsequently deleted), and two symbols $\mathsf{o_1}$ and $\mathsf{o_2}$ to mark the two copies, that are currently created.
First (the control states named $\mathsf{cpi}$) the markings are placed.
Then either a new element of the multiset marked by $\mathsf{i}$ is selected, corresponding, new multisets are created under $\mathsf{o_1}$ and $\mathsf{o_2}$ and all markings are moved inwards ($\mathsf{cpd}$-states)
or copying of multiset marked by $\mathsf{i}$ has been completed, the multiset is deleted, and the markings are moved back outwards ($\mathsf{cpu}$-states).
When the markings are back on the outermost level, the copy process has been completed and the markings can be replaced ($\mathsf{cpf}$-states).

\subparagraph{Implementing $\min$.}
A minimum rule $t = (q_1, …, q_l) \min (q_1', …, q_l')$ can be implemented in a similar fashion:
\begin{align*}
  (q_1, …, q_l) &δ (\mathsf{mini}_{t,q_l}, q_2, …, q_{l-1}, \mathsf{i}_1)\\
  (\mathsf{mini}_{t,q}, q_2', …, q_l') &δ (\mathsf{mini}_{t,q}', q_2', …, q_{l-1}', \mathsf{i}_2)\\
  (\mathsf{mini}_{t,q}', q_2', …, q_{l-1}') &δ (\mathsf{min}_{t,q}, q_2', …, q_{l-1}', \mathsf{o})\\
  (\mathsf{min}_{t,q}, q_2', …, q_{l-1}', r_1, …, r_m, \mathsf{o}) &δ (\mathsf{mind}_{t,q}, q_2', …, q_{l-1}', r_1, …, r_m, q, \mathsf{o})\\
  (\mathsf{mind}_{t,q}, q_2, …, q_{l-1}, r_1, …, r_m, \mathsf{i}_1, r_{m+1}) &δ (\mathsf{mind}_{t,r_{m+1}}', q_2, …, q_{l-1}, r_1, …, r_m, q, \mathsf{i}_1)\\
  (\mathsf{mind}_{t,q}', q_2', …, q_{l-1}', r_1, …, r_m, \mathsf{i}_2, q) &δ (\mathsf{min}_{t,q}, q_2', …, q_{l-1}', r_1, …, r_m, q, \mathsf{i}_2)\\
  (\mathsf{min}_{t,q}, q_2', …, q_{l-1}', r_1, …, r_m, r_{m+1}, \mathsf{o}) &δ (\mathsf{minu}_{t,q}, q_2', …, q_{l-1}', r_1, …, r_m, \mathsf{o}, q)\\
  (\mathsf{minu}_{t,q}, q_2, …, q_{l-1}, r_1, …, r_m, r_{m+1}, \mathsf{i}_1) &δ (\mathsf{minu}_{t,r_{m+1}}', q_2, …, q_{l-1}, r_1, …, r_m, \mathsf{i}_1)\\
  (\mathsf{minu}_{t,q}', q_2', …, q_{l-1}', r_1, …, r_m, q, \mathsf{i}_2) &δ (\mathsf{min}_{t,q}, q_2', …, q_{l-1}', r_1, …, r_m, \mathsf{i}_2)\\
  (\mathsf{min}_{t,q}, q_2, …, q_{l-1}, \mathsf{i}_1) &δ (\mathsf{minf}_t, q_2, …, q_{l-1})\\
  (\mathsf{minf}_t, q_2', …, q_{l-1}', \mathsf{i}_2) &δ (\mathsf{minf}_t', q_2', …, q_{l-1}')\\
  (\mathsf{minf}_t', q_2', …, q_{l-1}', \mathsf{o}) &δ (q_1', …, q_l')
\end{align*}
It follows exactly the same idea, but deletes elements from two marked multisets ($\mathsf{i_1}$ and $\mathsf{i_2}$) and only creates elements in one marked multiset ($\mathsf{o}$).

\subparagraph{Hardy computations.}
Having these auxiliary operations at our disposal, we can now give the exact transition rules to implement Hardy computations.
The encoding of the ordinal parameter $α$ and the natural attribute $n$ of a Hardy function $H^α(n)$ is encoded into transitions as defined above.
We have to come up with transition rules that allow four kinds of runs
\begin{enumerate}
  \item $C_{α+1, n} →^* C_{α, n+1}$,
  \item $C_{α, n+1} →^* C_{α+1, n}$,
  \item $C_{α+λ, n} →^* C_{α+λ_n, n}$ and
  \item $C_{α+λ_n, n} →^* C_{α+λ, n}$
\end{enumerate}
in order to satisfy Lemma~\ref{lem:ncshardness1} without violating Lemma~\ref{lem:ncshardness2}.

Case (1) is straightforward, we only have to remove some element from the multiset encoding the ordinal and move it to the multiset encoding the argument:
\begin{align*}
  (\mathsf{main}, s, ω) &δ (\mathsf{R1}, s)\\
  (\mathsf{R1}, c) &δ (\mathsf{main}, c, 1)
\end{align*}
Case (2) works just the other way around:
\begin{align*}
  (\mathsf{main}, c, 1) &δ (\mathsf{R2}, c)\\
  (\mathsf{R2}, s) &δ (\mathsf{main}, s, ω)
\end{align*}

Case (3) requires to replace the smallest addend $ω^β$ of a limit ordinal $α + ω^β$ with the $n$th element of its fundamental sequence $(ω^β)_n$.
If $β$ is a limit ordinal, it has to be replaced by $β_n$, i.e. the same process has to be applied recursively.
Otherwise, the immediate predecessor of $β' + 1 = β$ has to be copied $n$ times.
The states in the following constructions are parametrised by the recursion depth~$l$.
\begin{align*}
  (\mathsf{main}, s, ω) &δ (\mathsf{R3}_0, s, a_1)\\
  (\mathsf{R3}_l, \overbrace{s', …, s'}^l) &δ (\mathsf{R3s}_l, \overbrace{s', …, s'}^l, s')\\
  (\mathsf{R3s}_l, s, \overbrace{s, …, s}^l, ω) &δ (\mathsf{R3s}_l', s, \overbrace{s, …, s}^l, a_2)\\
  (\mathsf{R3s}_l', s, \overbrace{s, …, s}^l, a_1) &\cp (\mathsf{R3s}_l'', s', \overbrace{s', …, s'}^l, ω)\\
  (\mathsf{R3s}_l'', s, \overbrace{s, …, s}^l, a_2) &\min (\mathsf{R3s}_l, s, \overbrace{s, …, s}^l, a_1)\\
  (\mathsf{R3s}_l, s, a_1, ω) &δ (\mathsf{R3}_{l+1}, s, s, a_1)\\
  (\mathsf{R3s}_l, s, \overbrace{s, …, s}^l, a_1, ω) &δ (\mathsf{R3c}_l, s, \overbrace{s, …, s}^l, a_1)\\
  (\mathsf{R3c}_l, s, \overbrace{s, …, s}^l, a_1) &\cp (\mathsf{R3c}_l', s', \overbrace{s', …, s'}^l, ω)\\
  (\mathsf{R3c}_l', c, 1) &δ (\mathsf{R3c}_l'', c)\\
  (\mathsf{R3c}_l'', c') &δ (\mathsf{R3c}_l, c, 1)\\
  (\mathsf{R3c}_l, c) &δ (\mathsf{R3q}_l)\\
  (\mathsf{R3q}_l, c') &δ (\mathsf{R3q}_l', c)\\
  (\mathsf{R3q}_l', s) &δ (\mathsf{R3q}_l'')\\
  (\mathsf{R3q}_l'', s', \overbrace{s', …, s'}^l) &δ (\mathsf{main}, s, \overbrace{ω, …, ω}^l)
\end{align*}
The construction starts selecting the smallest addend, by copying the multiset marked by $\mathsf{s}$ to the multiset marked by $\mathsf{s}'$ in descending order.
The descending order is ensured using the $\min$ operation introduced above.
$\mathsf{a}_1$ and $\mathsf{a}_2$ are used to mark the currently largest and second largest addend.
Once the copying process is stopped, $\mathsf{a}_1$ marks the supposedly smallest addend, the construction moves down one level, and repeats this process.
This part is implemented using the $\mathsf{R3s}$-states.
Once, a level is reached where the exponent is no longer a limit ordinal, one element is removed from the respective multiset (transition from $\mathsf{R3s}$ to $\mathsf{R3c}$).
Then, the copy operation is used to copy that exponent $n$ times.
This part is implemented by the $\mathsf{R3c}$-states.
Finally the multiset from the old ordinal is deleted and replaced by the newly computed ordinal ($\mathsf{R3q}$-states).
This construction might make several lossy errors in the sense that they result in a smaller ordinal to be computed.
E.g. it might not select the smallest addend at the cost of losing all smaller addends or it might stop at a level, where the exponent is still a limit ordinal. In this case instead of decreasing it by only one, a larger addend will be removed.

Case (4) can be handled similarly to (3).
The construction recursively guesses the smallest addend ($\mathsf{R4s}$-states) as before.
Then $n$ copies of an addend $ω^β$ have to be replaced by $ω^{β+1}$ ($\mathsf{R4m}$-states).
This is realised by deleting at most $n$ elements in descending order and maintaining their minimum using the minimum operation.
The construction counts the number of elements actually deleted and uses it as the new value for $n$, ensuring that a lossy error occurs in case less than $n$ elements are removed.
The exponent is then increased by one and the addend is moved to the newly created ordinal.
\begin{align*}
  (\mathsf{main}, s, ω) &δ (\mathsf{R4}_0, s, a_1)\\
  (\mathsf{R4}_l, \overbrace{s', …, s'}^l) &δ (\mathsf{R4s}_0, \overbrace{s', …, s'}^l, s')\\
  (\mathsf{R4s}_l, s, \overbrace{s, …, s}^l, ω) &δ (\mathsf{R4s}_l', s, \overbrace{s, …, s}^l, a_2)\\
  (\mathsf{R4s}_l', s, \overbrace{s, …, s}^l, a_1) &\cp (\mathsf{R4s}_l'', s', \overbrace{s', …, s'}^l, ω)\\
  (\mathsf{R4s}_l'', s, \overbrace{s, …, s}^l, a_2) &\min (\mathsf{R4s}_l, s, \overbrace{s, …, s}^l, a_1)\\
  (\mathsf{R4s}_l, s, a_1, ω) &δ (\mathsf{R4}_{l+1}, s, s, a_1)\\
  (\mathsf{R4s}_l, s, \overbrace{s, …, s}^l, ω) &δ (\mathsf{R4m}_l, s, \overbrace{s, …, s}^l, a_2)\\
  (\mathsf{R4m}_l, s, \overbrace{s, …, s}^l, a_2) &\min (\mathsf{R4m}_l', s, \overbrace{s, …, s}^l, a_1)\\
  (\mathsf{R4m}_l', c, 1) &δ (\mathsf{R4m}_l'', c)\\
  (\mathsf{R4m}_l'', c') &δ (\mathsf{R4m}_l, c, 1)\\
  (\mathsf{R4m}_l, s, \overbrace{s, …, s}^l, a_1) &δ (\mathsf{R4q}_l, s, \overbrace{s, …, s}^l, a_1, ω)\\
  (\mathsf{R4q}_l, s, \overbrace{s, …, s}^l, a_1) &\cp (\mathsf{R4q}_l', s', \overbrace{s', …, s'}^l, ω)\\
  (\mathsf{R4q}_l', c) &δ (\mathsf{R4q}_l'')\\
  (\mathsf{R4q}_l'', c') &δ (\mathsf{R4q}_l''', c)\\
  (\mathsf{R4q}_l''', s) &δ (\mathsf{R4q}_l'''')\\
  (\mathsf{R4q}_l'''', s', \overbrace{s', …, s'}^l) &δ (\mathsf{main}, s, \overbrace{ω, …, ω}^l)
\end{align*}

Finally, observe that for $α ≤ (Ω_k)_l$ the innermost (level-$k$) exponents of the CNF terms that can arise during the computation of $H^α(n)$ are bounded by $l$ because they are only decreased. 
Hence, using additional states $ω^0, …, ω^l∈Q$ on level $k-1$ to represent configurations $(ω,｛(ω,∅):i｝)$ by $(ω^i,∅)$ avoids one level of nesting in $𝓝$.

\section{From LTL$^↓_{tqo}$ to Nested Counter Systems}
\label{apx:freeze2ncs}

In this section we provide the technical details of the reduction from the satisfiability problem for LTL$_A^↓$ formulae over tree-quasi-ordered attributes $A$ to the coverability problem in NCS.

\subsection{Reduction to Linear Orderings}

We recall and prove Proposition~\ref{prp:linearisation}.

\prplinearisation*

Let $Φ$ be an LTL$^↓_A$ formula. 
To translate $Φ$ into an equisatisfiable LTL$^↓_{[k]}$ formula for some $k∈ℕ$ we first turn $A$ into a \emph{tree-partial-order} $A'$ by collapsing maximal strongly connected components (SCC) and adjust $Φ$ to obtain an equisatisfiable formula $Φ'$ over  LTL$^↓_{A'}$.
Second, we show how to encode $A'$-attributed data words into $[k]$-attributed data words and translate $Φ'$ to operate on this encoding.

\subparagraph{Collapsing SCC.}

Let $C_{2,1},…,C_{2,n_2},C_{3,1},…,C_{3,n_3},…,C_{m,n_m}⊆A$ be all maximal strongly connected components in the graph of the tree-quasi-ordering $(A,⊑)$ of size larger than 1 such that $|C_{i,j}| = i$.
I.e., $C_{i,j}$ is the $j$-th distinct such component of size $i$. 
Notice that all $C_{i,j}$ are disjoint since they are maximal. 
Choose some arbitrary $x_{i,j}∈C_{i,j}$ from each component and remove all components from $A$ but for those elements $x_{i,j}$.
Thus we collapse all SCC in $A$ and obtain a tree-partial-ordering $A'$.
In the formula $Φ$ we syntactically replace every attribute $x∈C_{i,j}$ by the corresponding representative $x_{i,j}$ and obtain an LTL$^↓_{A'}$ formula.

Due to the semantics of the logic being defined in terms of downward closures the only significant change upon collapsing SCC is their size. 
While the downward-closures of two SCCs that have different sizes cannot be isomorphic replacing them with a single attribute can make valuations for them equal wrt. $≃$.
We therefore add the following constraint to $Φ$ disallowing a collapsed model to assign the same data value to representatives of SCCs that had different size.
\[
  ⋀_{x_{i,j}, x_{i',j'}∈A'｜i≠i'} \G(↓^{x_{i,j}}¬\F↑^{x_{i',j'}})
\]
Compared to the original models of $Φ$ this is not a restriction and thus every model of $Φ$ still induces a model of $Φ'$ and vice versa.

\subparagraph{Frame encoding.}

In the following we assume that $A$ is a tree(-partial)-ordering, i.e., it does not contain non-trivial SCCs.
Let $k$ be the depth of $A$, i.e., the length of the longest simple path starting at some root (minimal element).
We can pad $A$, by additional attributes s.t.\ \emph{every} maximal path in $A$ has length $k$.
The additional attributes added to $A$ this way are not smaller than the original ones and hence do not affect the semantics of formulae over $A$ except that the new attributes need to be assigned an arbitrary value.
Thus, regarding $A$ as a forest, we can assume that every leaf is at level $k$ (roots are at level 1).

Let $ℓ_1,…,ℓ_n∈A$ be the leafs in $A$ (enumerated in an \emph{in-order} fashion).
We use the ideas from \cite{DBLP:conf/fsttcs/KaraSZ10,DBLP:conf/concur/DeckerHLT14} to encode an $A$-attributed data word $w=w_1w_2…$ into a $[k]$-attributed data word $u=u_1u_2…$ where a single position in $w$ is represented by a \emph{frame} of $n$ positions in $u$.
Then, each position $w_i=(a_i,𝐝_i)$ in $w$ corresponds to the frame $u_{(i-1)n+1}…u_{(in)}$ in $u$.
In the $i$-th such frame, each position $u_{(i-1)n+j} = (a_i,𝐠_{i,j})$ carries the same letter $a_i$ as $w_i$.
The data valuation $𝐠_{(i,j)}∈Δ^{k}$ at the $j$-th position in the frame shall represents the $j$-th “branch” $𝐝_i|_{ℓ_j}$ of the valuation $𝐝_i$.
Thus, let for a leaf $ℓ_j$ in $A$ be $x_{j,1}⊑x_{j,2}⊑…⊑x_{j,k} =ℓ_j$ the attributes in $\cl(ℓ_j)$, representing the branch in $A$ from a root to $ℓ_j$.
Now for $r∈[k]$ we let $𝐠_{(i,j)}(r) = 𝐝_i(x_{j,r})$

\subparagraph{Translation.}

Based on this encoding we can translate any LTL$^↓_A$ formula $Φ$ to an LTL$^↓_{[k]}$ formula $\hat{Φ}$ that specifies precisely the encodings of models of $Φ$.
In particular, $\hat{Φ}$ is satisfiable iff $Φ$ is.

Given the (in-order) enumeration $ℓ_1,…,ℓ_n∈A$ of leafs in $A$ and an attribute $x∈A$ we let $\mathsf{sb}(x)=\min ｛r∈[n]｜x⊑ℓ_r｝$ denote the smallest index $r$ of a branch containing $x$ and $\mathsf{lb}(x)=\max ｛r∈[n]｜x⊑ℓ_r｝$ the largest such branch index.
Further, we denote by $\mathsf{lvl}(x)=|｛x'⊑x'｜x'∈A｝|$ its level in $A$.

We can assume $Φ$ to be in a normal form where every freeze quantifier $↓^x$ is followed immediately by either an $\X$, $\WX$ or $↑^y$ operator for attributes $x,y∈A$.
This is due to the following equivalences for arbitrary formulae $ψ,ξ$, letters $a∈Σ$ and attributes $x,y∈A$.
\[
\begin{array}{rcl@{\hspace{.5cm}}rcl}
  ↓^xa     &≡& a               &  ↓^x\Fψ  &≡& (↓^xψ)∨↓^x\X\Fψ \\
  ↓^x↓^yψ  &≡& ↓^yψ            & ↓^x\Gψ  &≡& (↓^xψ)∧↓^x\WX\Gψ \\
  ↓^x¬ψ    &≡& ¬↓^xψ           & ↓^x(ψ\Uξ) &≡& (↓^xξ) ∨ ((↓^xψ) ∧ ↓^x\X(ψ\Uξ))\\
  ↓^x(ψ∧ξ) &≡& (↓^xψ) ∧ (↓^xξ) & ↓^x(ψ\Rξ) &≡& (↓^xξ) ∧ ((↓^xψ) ∨ ↓^x\WX(ψ\Rξ))\\
  ↓^x(ψ∨ξ) &≡& (↓^xψ) ∨ (↓^xξ) 
\end{array}\]
We can further assume that for every formula $↓^x↑^y$ we have $\mathsf{sb}(x)≤\mathsf{sb}(y)$:
if $x⊏y$ or $x⊒y$ we can completely remove the formula, replacing it with a contradiction or a tautology, respectively.
Otherwise $x$ and $y$ are incomparable. 
Then, if $\mathsf{lvl}(x)<\mathsf{lvl}(y)$ the formula is again false and we can remove it.
For $\mathsf{lvl}(x)=\mathsf{lvl}(y)$ we have $↓^x↑^y  ≡ ↓^y↑^x $ and can swap them if necessary.
Finally, if $\mathsf{lvl}(x)>\mathsf{lvl}(y)$ there is a unique attribute $p⊏x$ with $\mathsf{lvl}(p)=\mathsf{lvl}(y)$ and by the definition of the semantics we have $↓^x↑^y  ≡ ↓^p↑^y$. 
We can thus replace $x$ by $p$ and swap the attributes if necessary.

Next we extend the alphabet to $Σ' = Σ×[n]$.
The attached number is supposed indicate the relative position in every frame.
This is enforced by a formula 
\[
  β_1:= Σ_i ∧ \G\left(⋀_{i∈[n]} Σ_i → \left((\WX Σ_{(i \mod n) +1}) ∧ ⋀_{j∈[n]∖｛i｝} ¬Σ_j\right)\right)
\]
where $Σ_i$ for $i∈[n]$ stands for the formula $⋁_{a∈Σ} (a,i)$.
Further, we impose that models actually have the correct structure and thereby encode an $A$-attributed data word.
The formula
\[
  β_2 := ⋀_{(a,i)∈Σ×[n-1]} \G((a,i) → \X (a,i+1))
\]
expresses that the letter from $Σ$ is constant throughout a frame and 
\[
β_3 := ⋀_{x∈A} \G \left(  Σ_1 → \X^{\mathsf{sb}(x)-1}↓^{\mathsf{lvl}(x)}\left(
    ↑^{\mathsf{lvl}(x)} \U (Σ_{\mathsf{lb}(x)} ∧ ↑^{\mathsf{lvl}(x)})
\right)  \right)
\]
ensures that the frame consistently encodes a valuation from $Δ^A$.
Finally, we define the translation $t(Φ)$ inductively for subformulae $ψ,ξ$ of $Φ$, $x∈A$ and $a∈Σ$ as follows.
\[\begin{array}{rcl@{\hspace{1cm}}rcl}
  t(↓^xψ) &:=& \X^{\mathsf{sb}(x)-1}↓^{\mathsf{lvl}(x)}t(ψ) &   t(a) &:=& a\\
  t(\Xψ) &:=& ⋀_{j=1}^n Σ_j → \X^{n-j+1}t(ψ) &   t(¬ψ) &:=& ¬t(ψ)\\
  t(ψ\Uξ) &:=& ((Σ_1 → t(ψ))\U(Σ_1∧t(ξ))) &   t(ψ ∧ ξ) &:=& t(ψ) ∧ t(ξ)\\
  t(↑^x) &:=& ⋀_{j=1}^n Σ_j → \X^{\mathsf{sb}(x)-j} ↑^{\mathsf{lvl}(x)}
\end{array}\]
We omit the remaining operators since they can be expressed in terms of the ones considered above.

To see that $\hat{Φ} := t(Φ) ∧ β_1 ∧ β_2 ∧ β_3$ exactly characterises the encodings of models of $Φ$ consider the underlying invariant that all subformulae of $Φ$ are always evaluated on the first position of a frame except those preceded by a freeze quantifier. 
Those that directly follow a freeze quantifier have the form $\Xψ$ or $↑^x$ and are  relocated to the first position of the successive frame or to the position encoding the branch of data values that needs to be checked, respectively.

\subsection{From LTL$^↓_{[k]}$ to NCS}

Recall Theorem~\ref{thm:freezetoncs}.

\thmfreezetoncs*

By Proposition~\ref{prp:linearisation} it suffices to show that given an LTL$^↓_{[k]}$ formula $Φ$ we can construct a $(k+1)$-NCS $𝓝$ and two configurations $C_{init},C_{final}∈𝓒_𝓝$ s.t.\ $Φ$ is satisfiable if and only if $C_{final}$ can be covered from $C_{init}$.

The idea is to construct from the LTL$^↓_{[k]}$ formula $Φ$ a $(k+1)$-NCS $𝓝$ that guesses an (abstraction of a) data word $w∈(Σ×Δ^k)^+$ position-wise starting with the last position and prepending new ones.
Simultaneously, $𝓝$ maintains a set of \emph{guarantees} for the so far constructed suffix of $w$.
These guarantees are subformulae $φ$ of $Φ$ together with an (abstraction of a) data valuation representing the register value under which $φ$ is satisfied by the current suffix of $w$.
Guarantees can be assembled to larger formulae in a way that maintains satisfaction by the current suffix of $w$.
Then $Φ$ is satisfiable if and only if there is a reachable configuration of $𝓝$ that contains $Φ$ as one of possibly many guarantees.

\subparagraph{Normal form.}

We fix for the rest of this section $k∈ℕ$ and an LTL$^↓_{[k]}$[\X,\WX,\U,\R] formula $Φ$ over the finite alphabet $Σ$ and the data domain $Δ$.
W.l.o.g.\ we restrict to the reduced set of temporal operators and expect $Φ$ to be in \emph{negation normal form}, i.e., negation appears only in front of letters $a∈Σ$ and check operators $↑^i$ for $i∈[k]$.
Further, we assume that every check operator $↑^i$ occurs within the scope of the freeze quantifier $↓^j$ of level $j≥i$ since otherwise the check necessarily fails and the formula can easily be simplified syntactically.
Let $\sub(Φ)$ denote the set of syntactical subformulae as well as the unfoldings of $\U$ and $\R$ formulae.

\subparagraph{State space.}

For the LTL$^↓_{[k]}[\X,\WX,\U,\R]$ formula $Φ$ we construct a $(k+1)$-NCS $𝓝_Φ = (Q,δ)$ as follows.
The state space is defined as 
$Q = Q_{\mathsf{ctrl}} ∪ Q_{\mathsf{cell}}$ where
\begin{align*}
   Q_{\mathsf{ctrl}} &= Q_\add ∪ Q_\nxt ∪ Q_\stp ∪ Q_{\st},\\
  Q_{\add} &= ｛\add｝× (Σ ∪ (Σ × \mathsf{sub}(Φ)),\\
  Q_{\nxt} &= ｛\nxt_1,\nxt_2, \cpy, \cpy_{bt}｝ ∪ (｛\cpy｝×2^{\mathsf{sub}(Φ)}),\\
  Q_{\stp} &= ｛\stp｝,\\
  Q_{\st} &= ｛\st, \st^\cmark, \aux, \aux^\cmark｝ \text{ and }\\
  Q_{\mathsf{cell}} &= ｛\cmark,\xmark｝ × 2^{\mathsf{sub}(Φ)}.
\end{align*}

The two outer-most levels (level 0 and 1) of configurations will only use states from $Q_\mathsf{ctrl}$ and control the management of the configurations of level 2 to $k$ below.
These configurations only use states from $Q_\mathsf{cell}$ and implement a \emph{storage} for a tree structure (more precisely, a forest) of depth $k$ represented by a multiset of configurations of level 2.
Every node in that forest, a \emph{cell}, stores a set of formulae and is checked (\cmark) or unchecked (\xmark).

Next we define the transition rules $δ⊆⋃_{i,j∈[k+1]}(Q^i×Q^j)$.

\subparagraph{Setup phase.}

The storage of the initial configuration 
\[
  C_{init} = \stp(\st(q_1(…q_{k-1}(q_k)…)))
\] 
with $q_1=…=q_k=(\cmark,∅)$ is empty except for a single checked branch of length $k$.
We allow the NCS to arbitrarily add new (unchecked) branches and then populate the branches with guarantees of the form $\WXφ∈\sub(Φ)$.
Thus, let
\begin{align*}
  (\stp, \st, q_1,…,q_i)&δ(\stp, \st, q_1,…,q_i,q_{i+1}',…,q_k')\\
  (\stp,\st,q_1,…,q_i,(m,F))&δ(\stp,\st,q_1,…,q_i,(m,F∪｛\WXφ｝)) \\
\end{align*}
for all $0≤i<k$, $q_1,…,q_i∈Q_{\mathsf{cell}}$, $q_{i+1}'=…=q_k'=(\xmark,∅)$,
$m∈｛\cmark,\xmark｝$, $F⊆\sub(Φ)$ and $\WXφ∈\sub(Φ)$.

\subparagraph{Construction phase.}

After the initial setup the NCS guesses a letter $a∈Σ$ by applying
\[
  (\stp)δ((\add, a)).
\]
New atomic formulae can be added by the rules
\begin{align*}
    ((\add,a),\st,q_1,…,q_i,(m,F)) & δ((\add,a),\st,q_1,…,q_i,(m,F∪｛a｝))\\
    ((\add,a),\st,q_1,…,q_i,(m,F)) & δ((\add,a),\st,q_1,…,q_i,(m,F∪｛¬b｝))\\
    ((\add,a),\st,q_1,…,q_{i+1}^\cmark,…,(m_j,F_j)) & δ((\add,a),\st,q_1,…,q_{i+1}^\cmark,…,(m_j,F_j∪｛↑^ℓ｝))\\
    ((\add,a),\st,q_1,…,q_i,(\xmark,F)) & δ((\add,a),\st,q_1,…,q_i, (\xmark,F∪｛¬↑^{ℓ'}｝))
\end{align*}
and existing formulae can be combined by rules
\begin{align*}
  ((\add,a),\st,q_1,…,q_i,(m,F∪｛φ｝))             & δ((\add,a),\st,q_1,…,q_i,(m,F∪｛φ,φ∨ψ｝))\\
  ((\add,a),\st,q_1,…,q_i,(m,F∪｛φ｝))             & δ((\add,a),\st,q_1,…,q_i,(m,F∪｛φ,ψ∨φ｝))\\
  ((\add,a),\st,q_1,…,q_i,(m,F∪｛φ,ψ｝))           & δ((\add,a),\st,q_1,…,q_i,(m,F∪｛φ,ψ,φ∧ψ｝))\\
  ((\add,a),\st,q_1,…,q_i,(m,F∪｛φ,ψ｝))           & δ((\add,a),\st,q_1,…,q_i,(m,F∪｛φ,ψ,ψ∧φ｝))\\
  ((\add,a),\st,q_1,…,q_i,(\cmark,F∪｛φ｝)) & δ((\add,a,↓^{i+1}φ),\st,q_1,…,q_i,(\cmark,F∪｛φ｝))\\
  ((\add,a,↓^{i+1}φ),\st,q_1,…,q_j,(m,F))  & δ ((\add,a),\st,q_1,…,q_j,(m,F∪｛↓^{i+1}φ｝))  
\end{align*}
\begin{align*}
    &((\add,a),\st,q_1,…,q_i,(m,F∪｛ψ∨(φ∧\X(φ\Uψ))｝))  \\
&\qquad  δ ((\add,a),\st,q_1,…,q_i,(m,F∪｛φ\Uψ｝))\\
&     ((\add,a),\st,q_1,…,q_i,(m,F∪｛ψ∧(φ∨\WX(φ\Rψ))｝))\\
&\qquad  δ  ((\add,a),\st,q_1,…,q_i,(m,F∪｛φ\Rψ｝))
\end{align*}
for, respectively,
$F,F_j⊆\sub(Φ)$,
$m,m_j∈｛\cmark,\xmark｝$,
$0≤i,j<k$, $ℓ∈[i+1]$, $i<ℓ'≤k$,
$q_1,…,q_k∈Q_{\mathsf{cell}}$,
$q_{i+1}^\cmark=(\cmark,F)$,
$b∈Σ∖｛a｝$ and 
$φ,ψ,φ∨ψ,ψ∨φ, φ∧ψ, ψ∧φ,↓^{i+1}φ,↑^ℓ, a, ¬b, φ\Uψ, φ\Rψ∈\sub(Φ)$.

\subparagraph{Advancing phase.}

To ensure consistency, prepending of $\X$ and $\WX$ operators can only be done for all stored formulae at once.
This corresponds to guessing a new position in a data word, prepending it to the current one and computing a set of guarantees for that preceeding position from the guarantees of the current position.

The NCS can enter the advancing phase by the rules
\[
  ((\add,a))δ(\nxt_1, \aux^\cmark)
\]
for $a∈Σ$.
This also creates an auxiliary storage.
Next, the original storage is copied cell by cell to the auxiliary storage.
Upon copying a cell the formulae stored within are preceeded by next-time operators.
To this end, for $F⊆\sub(Φ)$ we denote by 
$F_{\X} =｛\Xφ,\WXφ∈\sub(Φ)｜φ∈F｝$.

The markings are now utilised as pointers to the cell currently being copied.
The rules 
\[
  (\nxt_1 ,\st,q_1^\cmark,…,q_k^\cmark) δ (\cpy, \st^\cmark ,q_1^\xmark,…,q_k^\xmark)
\]
for $q_1^\cmark=(\cmark,F_1),…,q_k^\cmark=(\cmark, F_k)$, $q_1^\xmark=(\xmark,F_1),…,q_k^\xmark=(\xmark, F_k)$,  where $F_1,…,F_k⊆\sub(Φ)$, 
set these pointers to the root of the storage 

To allow the NCS to copy the cells over in a depth-first, \emph{lossy} manner let
\begin{align*}
  (\cpy,\st^\cmark,(\xmark,F)) & δ((\cpy,F),\st,(\cmark,F))\\ 
  ((\cpy,F),\aux^\cmark) & δ(\cpy,\aux,(\cmark,F_{\X}))\\  
  (\cpy,\st,q_1,…,q_i,(\cmark,F'), (\xmark,F)) & δ((\cpy, F), \st,q_1,…,q_i,(\xmark,F'),(\cmark,F))\\
  ((\cpy,F),\aux,q_1,…,q_i,(\cmark,F')) & δ (\cpy,\aux,q_1,…,q_i,(\xmark,F'),(\cmark,F_{\X}))
\end{align*}
for, respectively, $F,F'⊆\sub(Φ)$, $0≤i<k$ and $q_1…,q_i∈Q_{\mathsf{cell}}$.
To allow for backtracking we let
\begin{align*}    
  (\cpy,\st,q_1,…,q_i,(\xmark,F),(\cmark,F')) &δ (\cpy_{bt},\st,q_1,…,q_i,(\cmark,F)\\
  (\cpy_{bt},\aux,q_1,…,q_i,(\xmark,F),(\cmark,F')) &δ (\cpy,\aux,q_1,…,q_i,(\cmark,F),(\xmark,F'))\\
  (\cpy,\st,(\cmark,F)) &δ (\cpy_{bt},\st^\cmark)\\
  (\cpy_{bt},\aux,(\cmark,F)) &δ (\cpy,\aux^\cmark,(\xmark,F))
\end{align*}
for $0≤i<k$, $F,F'⊆\sub(Φ)$ and $q_1,…,q_i∈Q_{\mathsf{cell}}$.

Finally the original storage can be replaced by the auxiliary one by
\begin{align*}
  (\cpy,\st^\cmark)&δ(\cpy_{bt})\\
  (\cpy_{bt},\aux^\cmark)&δ(\nxt_2,\st)
\end{align*}
The storage is now (partially) copied to the auxiliary storage.
To enter the construction phase and thereby complete the transition from the old position in the imaginary data word to the preceeding position a new checked branch and a new letter from $Σ$ is guessed by
\[
  (\nxt_2,\st,(\xmark,F_1),…,(\xmark,F_i)) δ((\add,a),\st,(\cmark,F_1)…,(\cmark,F_k))
\]
for any $a∈Σ$, $0≤i≤k$, $F_1,…,F_i⊆\sub(Φ)$ and $F_{i+1}=…=F_k=∅$.

\subsection{Correctness}

The NCS $𝓝=(Q,δ)$ that construct above maintains a forest of depth $k$ where every node is labelled by a set of subformulae of $Φ$.
Configurations reachable from the initial configuration 
\[
  C_{init} = \stp(\st(q_1(…q_{k-1}(q_k)…)))
\] 
with $q_1=…=q_k=(\cmark,∅)$ always have the form $q_{ctrl}(q_\st(X)+X')$ or $\cpy_{bt}(\aux^\cmark(X))$
where $X$ is a multiset of configurations of level 2 that represents a forest $T_C$ of depth $k$.%
Let $V_C$ be the set of nodes and $F:V_C → 2^{\sub(Φ)}$ their labelling by sets of formulae.
For a node $v∈V_C$ at level $i$ (roots have level 1) in $T_C$ we denote by $ρ(v)=v_1…v_i$ the unique path from a root to $v_i=v$.

The structure of the forest represents the context in which the individual formulae are assumed to be evaluated.
To formalise this let $con: V_C → Δ$ be a labelling of $T_C$ by data values called \emph{concretisation}.
Such a labelling induces a set $G_{con}(T_C)⊆\sub(Φ)×Δ^k$ with $(φ,𝐝)∈G_{con}(T_C)$ iff 
\begin{itemize}
  \item $φ∈F(v)$ for some node $v∈V_C$ with $ρ(v) = v_1…v_j$ and 
  \item $𝐝∈Δ^j$ with $𝐝(i)=con(v_i)$ for $i∈[j]$.
\end{itemize}

Now, let $w=(a,𝐝)u∈(Σ×Δ^k)^+$ be a data word.
We say that $w$ and a configuration $C=(q_0,M)$ are \emph{compatible} if and only if there is a concretisation $con: V → Δ$ such that 
\begin{itemize}
  \item for all $(φ,𝐝')∈G_{con}(T_C)$ we have $(w,1,𝐝')⊧φ$ (guarantees are satisfied), 
  \item $q_0∉｛(\add,b),(\add,b,φ)｜b∈Σ∖｛a｝, φ∈\sub(Φ)｝$ (letter is compatible) and
  \item if $q_0∉Q_{\nxt}$ and $v_1…v_k$ is the unique path in $T_C$ corresponding to the checked cells in $C$ 
        then $(con(v_1),…,con(v_k)) = (𝐝(1), …,𝐝(k))$ (valuation is compatible).
\end{itemize}

\begin{lemma}[Invariant]
  Let $C →C'$ be two configurations reachable from $C_{init}$ and $w∈(Σ×Δ^k)^+$
  a $[k]$-attributed data word such that $T_C$ and $w$ are compatible. 
  Then there is a $[k]$-attributed data word $w'∈(Σ×Δ^k)^+$ such that $T_{C'}$ and $w'$ are compatible.
\end{lemma}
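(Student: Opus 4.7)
The proof proceeds by case analysis on the NCS transition rule that justifies $C\to C'$. The rules partition into three groups according to the phase of $𝓝_Φ$: setup rules (control in $Q_\stp$), construction rules (control in $Q_\add$), and advancing rules (control in $Q_\nxt$ and the copy states). In the setup and construction cases, the plan is to take $w' := w$ and reuse the concretisation $con$ of $T_C$, extending it on any newly created cells with arbitrary fresh values. In the advancing case, a fresh position is prepended to $w$ to obtain $w'$.

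For setup rules, adding a new unchecked branch introduces only empty label sets, so $G_{con'}(T_{C'}) = G_{con}(T_C)$ after extending $con$ arbitrarily on the new cells. Adding a guarantee $\WX φ$ to an existing cell is sound because the setup phase is only reachable from $C_{init}$ without any intervening advance rule, so we may assume $|w|=1$ and then $\WX φ$ is vacuously satisfied at position 1. For construction rules (control in $Q_\add$), I keep $w' = w$ and $con' = con$. Adding $a$ or $\neg b$ uses the letter-compatibility clause of the definition; adding $↑^ℓ$ inside the checked sub-forest uses valuation-compatibility to equate $con$ on the checked branch with $𝐝$; adding $\neg↑^{ℓ'}$ below an unchecked cell is justified by picking $con$ so that the new value is fresh (and remains fresh across the forest). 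Boolean combinations follow from semantic weakening; the $\U$- and $\R$-folding rules follow from the standard equivalences $φ\Uψ ≡ ψ ∨ (φ ∧ \X(φ\Uψ))$ and $φ\Rψ ≡ ψ ∧ (φ ∨ \WX(φ\Rψ))$; and the $↓^{i+1}φ$ rule follows from the fact that if $φ$ labels a \emph{checked} cell at level $i+1$, then $φ$ holds at $(w,1,𝐝|_{i+1})$, whence $↓^{i+1}φ$ holds at $(w,1,𝐞)$ for any $𝐞$ and can therefore be propagated to any cell in the forest without violating compatibility.

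For the advancing phase, let $a'∈Σ$ be the letter chosen by the $(\nxt_2,\dots)δ((\add,a'),\dots)$ rule and pick $𝐝'∈Δ^k$ whose $k$ values are fresh relative to all data appearing in $w$ and in $con$. Set $w' := (a',𝐝')\cdot w$. The key semantic observation is that $(w,1,𝐞)⊧φ$ implies $(w',2,𝐞)⊧φ$, and hence $(w',1,𝐞)⊧\Xφ$ and $(w',1,𝐞)⊧\WXφ$. Consequently, as soon as a cell labelled by $F$ in $C$ has been copied by a $(\cpy, F)$-rule into $\aux$ with label $F_{\X}$, each formula in $F_\X$ is guaranteed at $(w',1,\cdot)$ under the concretisation inherited from $con$ on the corresponding unchecked part and extended by $𝐝'$ along the newly marked branch. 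Intermediate copy configurations are handled by a hybrid concretisation: cells still residing under $\st$ stay compatible with $w$ under $con$, while cells already moved under $\aux$ are compatible with $w'$ under the extended concretisation. The lossy nature of the copy only shrinks guarantee sets, which strengthens compatibility. When $\cpy_{bt}(\aux^\cmark(\cdot))$ finally rewrites to $(\nxt_2, \st, \cdot)$, the sole remaining storage is the $\aux$-copy, now viewed as $\st$, and the rule $(\nxt_2,\st,\dots)δ((\add,a'),\st,(\cmark,F_1),\dots,(\cmark,F_k))$ marks precisely a branch we can concretise by $𝐝'$, restoring valuation-compatibility with $w'$.

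The main obstacle is the bookkeeping of the hybrid concretisation during the copy phase. At intermediate configurations, the forest is split between $\st$ (old guarantees for $w$) and $\aux$ (new guarantees for $w'$), and the depth-first pointer, encoded by the $\cmark$-marks and the control states $\cpy/\cpy_{bt}$, must be shown to advance coherently while preserving the invariant that every $φ∈F(v)$ still in $\st$ satisfies $(w,1,𝐝_v)⊧φ$ and every $ψ∈F(v)$ in $\aux$ satisfies $(w',1,𝐝_v')⊧ψ$. Once a joint concretisation is formally defined on the cells of both halves and shown to be preserved by each of the four descend/backtrack transitions, compatibility of $T_{C'}$ with $w'$ follows immediately, completing the induction.
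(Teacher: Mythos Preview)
Your treatment of the setup and construction phases is essentially the paper's. The divergence—and a genuine gap—is in the advancing phase, specifically at the final rule $(\nxt_2,\st,(\xmark,F_1),\dots,(\xmark,F_i))\,δ\,((\add,a'),\st,(\cmark,F_1),\dots,(\cmark,F_k))$. You propose to pick $𝐝'$ with \emph{all} $k$ values fresh and then ``concretise the newly marked branch by $𝐝'$''. But this rule does not create a fresh branch: it selects an \emph{existing} path $v_1,\dots,v_i$ in the storage (only $v_{i+1},\dots,v_k$ are new). The cells $v_1,\dots,v_i$ already carry non-empty guarantee sets $F_1,\dots,F_i$ consisting of formulae $\Xφ,\WXφ$, and those guarantees were validated under the inherited concretisation values $con(v_1),\dots,con(v_i)$. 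Overwriting these with fresh values can break them: for instance, if $\X{\uparrow}^1\in F_1$ then $(w',1,con'(v_1))\models\X{\uparrow}^1$ forces the second position of $w'$ (i.e.\ the first position of $w$) to carry $con'(v_1)$ in attribute $1$, which is impossible if $con'(v_1)$ is fresh relative to $w$. The paper avoids this by keeping $con'=con$ on all pre-existing nodes and choosing the new first valuation $𝐝$ so that $𝐝|_i$ coincides with the data already present on the selected branch; only $𝐝(i{+}1),\dots,𝐝(k)$ are taken fresh.

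A second, smaller point: your hybrid-concretisation bookkeeping for the intermediate copy steps is more elaborate than what the paper does. The paper argues that during the copy no formula is ever \emph{added} to the storage forest $T_C$ (cells are only relabelled or deleted on backtrack), so the same witness $w$ remains compatible throughout the phase; the prepended word is produced only once, at the $\nxt_2$ transition. This sidesteps entirely the ``main obstacle'' you describe. Your more detailed tracking is not wrong in spirit, but it is unnecessary here and, combined with the all-fresh $𝐝'$, leads you to the error above.
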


\begin{proof}
The initial configuration $C_{init}$ does not contain any guarantees and hence every data word is compatible with $T_{C_{init}}$.
The only formulae added during the setup phase are of the form $\WXφ$.
Thus, every configuration reachable during this phase is compatible at least with every data word of length 1.

Consider a configuration $C$ in the construction phase being compatible with a data word $w∈(Σ×Δ^k)^+$ due to a concretisation $con$.
It is easy to see that the atomic formulae that can be added are satisfied on $w$ under the same concretisation $con$.
Also, the Boolean combinations of satisfied formulae that can be added remain satisfied and the folding of temporal operators respects the corresponding equivalences.

A rule adding a formula $↓^iφ$ can obviously only be executed if $φ$ is present in the marked cell at level $i$.
Since in particular the valuation $𝐝$ of the first position of $w$ is compatible with the marking there is $(φ,𝐝|_{i})∈G_{con}(T_C)$ and $(w,1,𝐝|_{i})⊧φ$.
Hence $(w,1,𝐝')⊧↓^iφ$ for any valuation $𝐝'$ and $↓^iφ$ can be put into any cell in $T_C$ without breaking compatibility with $w$ under $con$.

Consider a configuration $C$ in the advancing phase. 
The transition rules staying in the phase do not add any new formula to any cell in the storage of the configuration and hence any word compatible with $C$ remains compatible.

Finally, assume that $w$ is compatible with a configuration $C$ due to a concretisation $con: V_C → Δ$ and that a transition rule of the form
\[
  (\nxt_2,\st,(\xmark,F_1),…,(\xmark,F_i)) δ((\add,a),\st,(\cmark,F_1)…,(\cmark,F_k)),
\]
for $a∈Σ$, $0≤i≤k$, $F_1,…,F_i⊆\sub(Φ)$ and $F_{i+1}=…=F_k=∅$,
is applied to obtain a configuration $C'$.

Let $(a',𝐝')$ be the first position of $w$ and $d_{i+1}, …,d_k∈Δ∖\mathsf{img}(con)$ data values that are not assigned to any node in $T_C$ by $con$.
We define a new valuation $𝐝∈Δ^k$ such that $𝐝|_{i} = 𝐝'|_{i}$ and 
$𝐝(j) = d_j$ for $i<j≤k$.
Then the word $(a,𝐝)w$ is compatible with $C'$ witnessed by the concretisation $con':V_{C'} → Δ$ with 
$con'(v) =con(v)$ for nodes $v∈V_C$ that were already present in $T_C$ and $con'(v_j') = d_j$ for the new nodes $v_j$ ($i<j≤k$) created by the rule.
\end{proof}

As a consequence of the previous lemma we conclude that if a configuration $C$ containing $Φ$ as guarantee is reachable from the initial configuration $C_{init}$ then it is satisfiable.
We allow the NCS to enter a specific target state $q_{final}$ once the formula $Φ$ is encountered somewhere in the current tree. 
Thus, a path covering $C_{final} = q_{final}$ proves $Φ$ satisfiable.
Conversely, if $Φ$ has some model $w$ than the NCS $𝓝$ as constructed above can guess according to the letters and valuations along the word and assemble $Φ$ from its subformulae.

\section{From NCS to LTL$^↓_{tqo}$}
\label{apx:ncs2freeze}

We provide the detailed construction to prove Theorem~\ref{thm:ncstofreeze}.

\thmncstofreeze*

 Let $𝓝=(Q,δ)$ be a $k$-NCS.
We are interested in describing witnesses for coverability. 
It suffices to construct a formula $Φ_𝓝$ that characterises precisely those words that encode a \emph{lossy} run from some configuration $C_{start}$ to some configuration $C_{end}$.
We call a sequence $C_0C_1…C_n$ of configurations $C_j∈𝓒_𝓝$ a lossy run from $C_0$ to $C_n$ if there is a sequence of intermediate configurations $C_0'…C_{n-1}'$ such that $C_i⪰C_i'→C_{i+1}$ for $0≤i<n$, i.e.,
\[
C_0⪰C_0'→C_1⪰ C_1'→…→C_{n-1}⪰C_{n-1}'→C_n.
\]
Then $C_{end}$ is coverable from $C_{start}$ if and only if there is a lossy run from $C_{start}$ to some $C_n⪰C_{end}$. 
For $𝓝$ we construct a formula
\[
  \mathrm{Φ_𝓝 = Φ_{conf} ∧ Φ_{flow} ∧ Φ_{rn} ∧ Φ_{inc} ∧ Φ_{dec} ∧ Φ_{start} ∧ Φ_{end}}.
\]
where
\begin{itemize}
  \item $\mathrm{Φ_{conf}}$ describes the shape of a word to encode a sequence of configurations,
  \item $\mathrm{Φ_{flow}}$ enforces that in addition to the plain sequence of encoded configurations there are annotations that indicate which transition rule is applied and which part of configuration is affected by the rule,
  \item $\mathrm{Φ_{rn}}$, $\mathrm{Φ_{inc}}$, $\mathrm{Φ_{dec}}$, encode the correct effect of transition of the respective type (see below),
  \item $\mathrm{Φ_{start}}$ encodes that the encoded run starts with $C_{start}$ and
  \item $\mathrm{Φ_{end}}$ encodes that the encoded run ends with some configuration $C⪰C_{end}$.
\end{itemize}

We omit to construct $\mathrm{Φ_{start}}$ and $\mathrm{Φ_{end}}$ since it is straightforward given the considerations below.
For easier reading we use an alphabet of the form $Σ=2^{AP}$ where $AP$ is a set of \emph{atomic propositions}.
Formally, every proposition $p∈AP$ used in a formula below could be replaced by 
\[
  ⋁_{a∈Σ｜p∈a} a
\]
 to adhere to the syntax defined in Section~\ref{sec:intro}.

\subsection{Configurations}

A configuration $C=(q,M)∈𝓒_𝓝$ of some $k$-NCS $𝓝=(Q,δ)$ can be interpreted as a tree $T$ of depth at most $k+1$ where the root carries $q$ as label. 
The children of the root are the subtrees $T_{(1,1)},…,T_{(1,n)}$ represented by the configurations of level 1 contained in the multiset $M$.

Similar to the approach of Proposition~\ref{prp:linearisation} we encode such a tree as $[k]$-attributed data word.
We use an alphabet $Σ$ where every letter $a∈Σ$ encodes a $(k+1)$-tuple of states from $Q$, i.e., a possible branch in the tree.
Then a sequence of such letters represents a set of branches that form a tree.
The data valuations represent the structure of the tree, i.e., which of the branches share a common prefix.
Two branches represented by positions $(a,𝐝), (a',𝐝')∈Σ×Δ^k$ are considered to be identical up to level $0≤i≤k$ if and only if $(𝐝(1), …,𝐝(i))=(𝐝'(1), …,𝐝'(i))$. 
Notice that the tuples of states represented by $a$ and $a'$ must also coincide on their $i$-th prefix if (but not only if) $𝐝$ and $𝐝'$ do.

For technical reasons we require that positions are arranged such that in between two positions representing branches with a common prefix of length $i$ there is no position representing a branch that has a different prefix of length $i$.
Further, this representation is interlaced: it refers only to odd positions.
The even position in between are used to represent an exact copy of the structure but uses different data values.
An example is shown in Figure~\ref{fig:ncs-conf-encoding}.

We specify the shape of data words that encode (sequences of) configurations by the following formulae.
For convenience we assume w.l.o.g.\ that the NCS uses a distinct set of states $Q_i⊆Q$ for each level $0≤i≤k$ that includes an additional state $-_i∈Q_i$ not occurring in any transition rule from $δ$.

\begin{itemize}
  \item Positions are marked by even/odd.
  \[
      (\mathsf{odd} → \WX(¬\mathsf{odd} ∧ \WX \mathsf{odd} )) ∧ (\mathsf{even}↔ ¬\mathsf{odd})
  \]
  \item Data values are arranged in blocks. Once a block ends, the respective value (valuation prefix) will never occur again (on an odd position).
    \begin{gather}\label{eqn:unique-values}
      ⋀_{i∈[k]} ↓^k ((\X\X¬↑^i) ∧ \mathsf{odd}) → ¬\X\F(\mathsf{odd}\ ∧ ↑^i)
    \end{gather}
  \item Positions in the same block on level $i$ carry the same states up to level $i$.
    \[
      ⋀_{i∈[k]} ⋀_{q∈Q_i} (q ∧ ↓^k\X\X↑^i) → \X\X q
    \]
  \item Even positions mimic precisely the odd positions but use different data values.
    \begin{equation}\label{eqn:mimic-odds}
       \mathsf{odd}  → ((⋀_{q∈Q} q ↔ \X q) ∧ (↓^1\X¬↑^1) ∧ ⋀_{i∈[k]}↓^k\X\X↑^i ↔  \X(↓^k\X\X↑^i))
    \end{equation}
  \item State propositions are obligatory and mutually exclusive on every level.
    \begin{gather*}
      \left(⋀_{0≤i≤k} ⋁_{q∈Q_i} q\right) \\
      ∧ ⋀_{0≤i≤k}\ ⋀_{q,q'∈Q_i｜q≠q'} q → ¬q'
    \end{gather*}
  \item Branches shorter than $k+1$ are padded by states $-_i∈Q$.
    \begin{equation} \label{eqn:dummy-states}
    \begin{split}
      \Big(&⋀_{i∈[k-1]} -_i → -_{i+1}\Big)\\
      ∧ &⋀_{i∈[k-1]} (↓^k\X\X↑^i) → ¬-_{i+1} ∧ \X\X¬-_{i+1}
    \end{split}
    \end{equation}
  \item The proposition \$ is used to mark the first position of every configuration and can thus only occur on odd positions and the beginning of a new block.
  \[
    (\$ → \mathsf{odd}) ∧ ((↓^1\X\X↑^1) → \X\X¬\$)
  \]
  \item Freshness propositions mark only positions carrying a valuations of which the prefix of length $i$ has not occurred before.
  \[
  ⋀_{i∈[k]} ↓^i ¬\X\F(↑^i ∧\ \textsf{fresh}_i)
  \]
\end{itemize}
Let $φ$ be the conjunction of these constraints and $\mathrm{Φ_{conf}}:= \$ ∧ \Gφ$.

\subsection{Control Flow}

To be able to formulate the effect of transition rules without using past-time operators we encode the runs \emph{reversed}. 
Given that a data word encodes a sequence $C_0C_1…C_n$ of configurations as above we model the (reversed) control flow of the NCS $𝓝=(Q,δ)$ by requiring that every configuration but for the last be annotated by some transition rule $t_j∈δ$ for $0≤j<n$.
\begin{equation}\label{eqn:rule-annotation}
  \G((\$∧\X\F\$) ↔ ⋁_{t∈δ} t)
\end{equation}

Now, the following constraints impose that this labelling by transitions actually represents the reversal of a lossy run.
That is, for every configuration $C_j$ in the sequence (for $0≤j<n$) with annotated transition rule $t_j$ there is a configuration $C'$ (not necessarily in the sequence) such that $C'\stackrel{t}{→}C_j$ and $C'⪯C_{j+1}$.

\subparagraph{Marking rule matches.}
Consider a position $C_j$ in the encoded sequence that is annotated by a transition $t_j=((q_0,…,q_i),(q_0',…,q_j'))∈δ$.
In order for $C_j$, $t_j$ and $C_{j+1}$ to encode a correct (lossy) transition first of all there must be a branch in $C_j$ that matches $(q_0',…,q_j')$.
We require that one such branch is marked by propositions $\cmark_1 … \cmark_j$:
\begin{multline*}
  ⋀_{t=((q_0,…,q_i),(q_0',…,q_j'))∈δ}\G\Big( t →\ ((\X¬\$)\U (\mathsf{even} ∧ ⋀_{ℓ∈[j]}\cmark_ℓ ∧ q_ℓ'))\\
                                               ∧ (q_0' ∧ ¬\cmark_{j+1} ∧…∧ ¬\cmark_k)\U\$\Big)
\end{multline*}

This selects a branch of length $j$ in every configuration. 
An operation that affects this branch may also affect other branches sharing a prefix.
Thus, they are supposed to be marked accordingly.
Since a node at some level $i∈[k]$ in the configuration tree is encoded by a block of equal data valuations on level $i$, blocks are marked entirely or not at all.

\[
⋀_{i∈[k]}\G( ↓^k\X\X↑^i → (\cmark_i ↔ \X\X\cmark_i))
\]
Moreover, at most one block is marked in every configuration frame.
\[
⋀_{i∈[k]}\G\big( (\cmark_i ∧ ↓^k\X\X¬↑^i)  → \X((¬\cmark_i)\U\$)\big)
\]
Now the markers indicate which positions in the word are affected by their respective transition rule.
Notice, that the even positions are supposed to carry the marking.
Let $\mathrm{Φ_{flow}}$ be the conjunction of the three formulae above and that in Equation~\ref{eqn:rule-annotation}.

\subsection{Transition Effects}

It remains to assert the correct effect of each transition rule to the marked branches.
We distinguish three rule types.
Let 
\[
  δ_{rn} = ｛((q_0,…,q_i),(q_0',…,q_i'))∈δ｜0≤i≤k｝
\] 
be the set of \emph{renaming} transition rules, 
\[
  δ_{dec} = ｛((q_0,…,q_i,q_{i+1},…,q_{j}),(q_0',…,q_i'))∈δ｜0≤i<j≤k｝
\]
be the set of \emph{decrementing} transition rules and
\[
  δ_{inc} = ｛((q_0,…,q_i),(q_0',…,q_i',q_{i+1}',…,q_j'))∈δ｜0≤i<j≤k｝
\]
be the set of \emph{incrementing} transition rules.
Then $δ = δ_{rn} ∪ δ_{dec} ∪ δ_{inc}$.

\subparagraph{Renaming rules.}
Let
\begin{equation}\label{eqn:renaming-rule}
  \mathrm{Φ_{rn}} = 
    ⋀_{t=((q_0,…,q_i),(q_0',…,q_i'))∈δ_{rn}} \G\left( t → \textsf{copy}(q_0,…,q_i) \right).
\end{equation}
The formula specifies that whenever a configuration $C_n$ is supposed to be obtained from a configuration $C_{n+1}$ using a renaming transition rule $t∈δ_{rn}$ then every branch in $C_n$ is also present in $C_{n+1}$.
Moreover the states $q_0',…,q_i'$ in the marked branch should be replaced by $q_0,…,q_i$.

The idea to realise this is to use the interlaced encoding of configurations to link (identify) branches from a configuration $C_n$ with the configuration $C_{n+1}$ in the sequence represented by a potential model.
We consider a branch in $C_n$ linked (on level $i∈[k]$) to a branch in $C_{n+1}$ if the corresponding even position in $C_n$ and the corresponding odd position in $C_{n+1}$ carry the same valuation (up to level $i$).
Since valuations uniquely identify a particular block (Equation~\ref{eqn:unique-values}) the following formula enforces for a position that there is a corresponding position in a unique block at level $i$ of the next configuration where additionally $φ$ is satisfied.
\[
  \textsf{link}_i(φ) =↓^k((¬\$) \U \Big(\$ ∧ \big((\X¬\$)\U (↑^i ∧\ \textsf{odd} ∧\ φ)\big)\Big)).
\]
For $i=k$ a block represents an individual branch in a configuration and the formula $\textsf{link}_k(φ)$ enforces that there is a unique corresponding branch in the consecutive configuration.

The even positions in turn mimic the odd ones using different data values (Equation~\ref{eqn:mimic-odds}).
Thereby we can create a chain of branches that are linked and thus identified.
We use this to enforce that for renaming transition rules, each branch present in a configuration $C_n$ will again occur in $C_{n+1}$ ensuring that the sequence is “gainy” wrt. the branches---and hence lossy when being reversed.

Using this we define the $\textsf{copy}$ formula in Equation \ref{eqn:renaming-rule} as
\begin{multline*}
\textsf{copy}(q_0,…,q_i) =  \left( \textsf{even} → \left(
  \begin{split}
    \textsf{link}_k(q_0)\ ∧ \big(⋀_{ℓ∈[i]}(\cmark_ℓ → \textsf{link}_k(q_ℓ)) \big) \\
     ∧ ⋀_{ℓ∈[k], q∈Q_ℓ}(¬\cmark_ℓ ∧ q) → \textsf{link}_k(q)
  \end{split}
      \right)\right)\U \$
\end{multline*}

\subparagraph{Decrementing rules.} We address this case by the formula
\begin{equation}
  \mathrm{Φ_{dec}} = 
    ⋀_{t=((q_0,…,q_j),(q_0',…,q_i'))∈δ_{dec}} \G\left( t → 
    \left(\begin{split}
      \textsf{copy}(q_0,…,q_i) \\ ∧\ \textsf{new}_i(q_0,…,q_j) 
    \end{split}
    \right)\right)
\end{equation}
where $i<j$ and
\[%
\textsf{new}_i(q_0,…,q_j) = (¬\X\$) \U (\cmark_i ∧ \textsf{link}_i(\textsf{fresh}_{i+1} ∧ q_0 ∧ … ∧ q_j))
\]
ensures that a configuration $C_{n+1}$ actually contains a branch that can be removed by a decrementing rule $t∈δ_{dec}$ rule to obtain $C_n$.

\subparagraph{Incrementing rules.} 
For the remaining case let 
\begin{equation}
  \mathrm{Φ_{inc}} = 
    ⋀_{t=((q_0,…,q_i),(q_0',…,q_j'))∈δ_{inc}} \G\left( t → 
    \textsf{copyBut}_j(q_0,…,q_i) ∧ \textsf{zero}_i
    \right)
\end{equation}
where $i<j$,
\[
  \textsf{zero}_j = (\X¬\$) \U (\cmark_j ∧ ⋀_{j<ℓ≤k} -_ℓ)
\]
asserts that the new branch that is created by an incrementing rule $t∈δ_{inc}$ does not contain more states than explicitly specified in $t$.
Recall that Equation~\ref{eqn:dummy-states} ensures that the propositions $-_ℓ$ for $ℓ∈[k]$ can only appear if there are no actual states below level $ℓ-1$ in the tree structure of the corresponding configuration.

Finally, the formula
\begin{multline*}
\textsf{copyBut}_j(q_0,…,q_i) = \\
  \left( (\textsf{even} ∧ ¬\cmark_j) → \left(
  \begin{split}
    \textsf{link}_k(q_0)\ ∧
       \big(⋀_{ℓ∈[i]}(\cmark_ℓ → \textsf{link}_k(q_ℓ)) \big) \\
     ∧ ⋀_{ℓ∈[k], q∈Q_ℓ}(¬\cmark_ℓ ∧ q) → \textsf{link}_k(q)
  \end{split}
       \right)\right)\U \$.
\end{multline*}
is similar to the $\textsf{copy}$ formula above but omits to copy the particular branch that was created by the incrementing rule.

\end{document}